\definecolor{myurlcolor}{rgb}{0,0,0.7}
\definecolor{myrefcolor}{rgb}{0.8,0,0}
 \theoremstyle{plain}
 \theoremstyle{plain}
 \newtheorem{lem}{Lemma}
 \theoremstyle{plain}
 \newtheorem{thm}{Theorem}
 \theoremstyle{plain}
 \newtheorem{exa}{Example}
 \theoremstyle{plain}
 \theoremstyle{plain}
 \theoremstyle{remark}
 \newtheorem*{rem*}{Remark}
	\theoremstyle{plain}
	 \newtheorem{rem}{Remark}
\providecommand{\customgenericname}{}
\newcommand{\newcustomtheorem}[2]{%
  \newenvironment{#1}[1]
  {%
   \renewcommand\customgenericname{#2}%
   \renewcommand\theinnercustomgeneric{##1}%
   \innercustomgeneric
  }
  {\endinnercustomgeneric}
}
\newcommand{\e}{\mathrm{e}}
\newcommand{\ot}{\otimes}
\newcommand{\ii}{\mathrm{i}} %imaginary unit
\renewcommand{\exp}{\mathrm{exp}}
\newcommand{\sgn}{\mathrm{sgn}}
\newcommand{\DEF}{\coloneqq}
\DeclareMathOperator{\tr}{tr}
\renewcommand{\H}{\mathcal{H}} %Hilbert space
\newcommand{\W}{\mathcal{W}} %subspace of Hilbert space
\newcommand{\J}{\mathcal{J}} %Hilbert space
\newcommand{\Hbos}{\mathcal{H}_b}
\newcommand{\Hfer}{\mathcal{H}_f}
\newcommand{\Hfree}{\mathcal{H}^{+}_{\mathrm{Fock}}}
\newcommand{\Hfock}{\mathcal{H}_{\mathrm{Fock}}}
\newcommand{\LOB}{\mathrm{LO}_b} %lin optics bos
\newcommand{\LOF}{\mathrm{LO}_f} %lin optics ferm
\newcommand{\FLO}{\mathrm{FLO}} %active lin optics ferm
\newcommand{\GEN}{\mathcal{G}}
\renewcommand{\U}{\mathrm{U}} % Unitary grou[]
\newcommand{\SU}{\mathrm{SU}} %Special unitary
\newcommand{\SO}{\mathrm{SO}} %Special orthogonal
\newcommand{\USP}{\mathrm{USp}} %Unitary symplectic
\newcommand{\Spin}{\mathrm{Spin}} %Spin group
\newcommand{\su}{\mathfrak{su}} %Special unitary algebra
\newcommand{\so}{\mathfrak{so}} %Special orthogonal algebra
\newcommand{\usp}{\mathfrak{usp}} %Unitary symplectic algebra
\newcommand{\g}{\mathfrak{g}} %Lie lagebra of G
\renewcommand{\k}{\mathfrak{k}} %Lie lagebra of K
\newcommand{\defeq}{\coloneqq}
\renewcommand{\C}{\mathbb{C}} %Complex numbers
\newcommand{\R}{\mathbb{R}} %Real numbers
\newcommand{\T}{\mathbb{T}} %torus
\newcommand{\M}{\mathbb{M}} %matrices
\newcommand{\Lie}{\mathrm{Lie}} %Lie algebra
\newcommand{\Herm}{\mathrm{Herm}}  %Hermitian operators
\newcommand{\Aut}{\mathrm{Aut}}  %automorphisms
\newcommand{\Ad}{\mathrm{Ad}}
\newcommand{\Inn}{\mathrm{Inn}}  %inner automorphisms
\newcommand{\I}{\mathbbmss{1}} %Identity
\renewcommand{\L}{\mathbb{L}} %Identity
\renewcommand{\P}{\mathbb{P}} %Projector
\global\long\global\long\global\long\def\bra#1{\mbox{\ensuremath{\langle#1|}}}
\global\long\global\long\global\long\def\ket#1{\mbox{\ensuremath{|#1\rangle}}}
\global\long\global\long\global\long\def\kb#1#2{\mbox{\ensuremath{\ensuremath{\ensuremath{|#1\rangle\!\langle#2|}}}}}
\global\long\global\long\global\long\def\SET#1#2{\mbox{\ensuremath{\ensuremath{\left\lbrace\left. #1\ \right|\ #2\right\rbrace }}}}
\begin{document}

\title{Universal extensions of restricted classes of quantum operations}

\author{Micha\l\ Oszmaniec}
\affiliation{ICFO-Institut de Ci\`encies Fot\`oniques, The Barcelona Institute of Science and Technology, 08860 Castelldefels (Barcelona), Spain}
\affiliation{National Quantum Information Centre of Gda\'nsk, 81-824 Sopot, Poland}
\affiliation{Faculty of Mathematics, Physics and Informatics, 
	Institute of Theoretical Physics and Astrophysics, University of
	Gda\'nsk, 80-952 Gdañsk, Poland}

\author{Zolt\'an Zimbor\'as}
\affiliation{Wigner Research Centre for Physics, Hungarian Academy of Sciences, P.O. Box 49, H-1525 Budapest, Hungary}
\affiliation{Dahlem Center for Complex Quantum Systems, Freie Universit\"at Berlin, 14195 Berlin, Germany}

\begin{abstract}
For numerous applications of quantum theory it is desirable to be able to apply arbitrary unitary operations on a given quantum system. However, in particular situations only a subset of unitary operations is easily accessible. This raises  the question of what additional unitary gates should be added to a given gate-set in order to attain physical universality, i.e., to be able to perform arbitrary unitary transformation on the relevant Hilbert space. In this work, we study this problem for three paradigmatic cases of naturally occurring restricted gate-sets: (A) particle-number preserving  bosonic linear optics, (B) particle-number preserving fermionic linear optics, and (C) general (not necessarily particle-number preserving) fermionic linear optics. Using tools from group theory and control theory, we classify, in each of these scenarios, what sets of gates are generated, if an additional gate is added to the set of allowed transformations. This allows us to solve the universality problem completely for arbitrary number of particles and for arbitrary dimensions of the single-particle Hilbert space. 
\end{abstract}

\maketitle

In many applications of quantum mechanics it is important to have full control over a quantum system used to perform a desired task or a quantum protocol. This amounts to being able to implement arbitrary unitary operation on the system in question.  Perhaps the most well-known example is the circuit model of quantum computing, where the ability to implement arbitrary unitary gates on a system of many distinguishable particles (say, qubits) is a necessary ingredient for performing universal quantum computation  \cite{Kitaev2002,Nielsen2010}. From the experimental perspective, it is typically very easy to implement single-qubit gates. This collection of gates, however, does not lead to universal quantum computation  and to this aim has to be supplemented by an entangling gate \cite{Brylinski2002}. Similar situations appears in other physical contexts. Typically, the set of \emph{easily accessible}  unitary gates acting on a given quantum system, does not ensure full controllability. 

This work studies the \emph{extension problem} for  gate-sets appearing naturally in systems consisting of non-distinguishable particles: passive linear optics for  (A) bosonic  and  (B) fermionic systems with  fixed number of particles, as well as (C) active linear optics acting on fermionic system with fixed number of modes subject to the parity super-selection rule \cite{Bravyi2002}. Specifically, for the aforementioned scenarios, we study what unitary transformations \emph{on the relevant Hilbert space} can be implemented if the restricted class of gates $K$ is supplemented by additional unitary transformations - see Fig. \ref{fig:protocol}.  We investigate two variants of this problem:

\begin{itemize}
	\item  [(i) ] the gate-set $K$ is supplemented with unitaries of the form $\exp\left(-\ii tX\right)$ generated by the Hamiltonian $X$;
	\item [(ii) ] the gate-set $K$ is supplemented by a \emph{single} unitary transformation $V$.
\end{itemize}

\begin{figure}[!t]
\centering
\includegraphics[width=0.8\columnwidth]{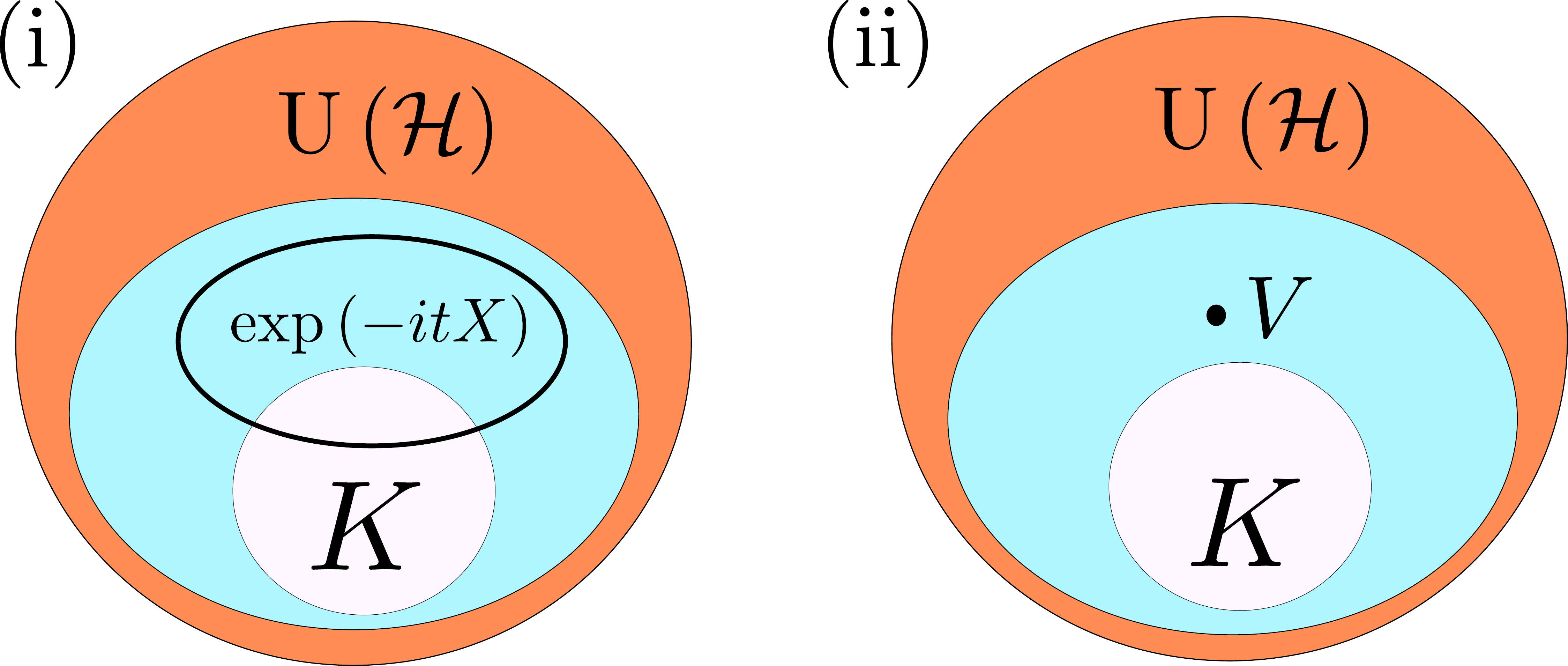}
\caption{A schematic presentation of the problems studied. (i) Given a family of gates $K$ (white) and a one-parameter family of unitaries $\exp(-\ii tX)$ (black loop), what class of gates (cyan) can be generated in the full unitary group $\U\left(\H\right)$ (orange)? (ii) Given a family of gates $K$ (white) and a single gate $V$ (black dot), what class of gates  (cyan) can be generated in the full unitary group $\U\left(\H\right)$ (orange)?}
\label{fig:protocol}
\end{figure}

Linear optical transformations are relevant in many contexts. Passive bosonic linear optics describes single-particle evolutions of a system of $N$ identical bosons in $d$ modes. Such transformations are natural for quantum optics, when quantum states of light pass through an optical network formed from beam-splitters and phase shifters \cite{Carolan2015}. Linear optics underpins the KLM scheme of quantum computing with photons \cite{Knill2001,Kok2007} and the boson sampling strategy for demonstrating quantum supremacy with linear optical networks \cite{Aaronson2011}. Moreover, this class of transformations is used to manipulate cold bosonic particles in optical traps \cite{Milburn1997,Albiez2005}. Similarly, passive fermionic linear optics describes single-particle evolutions of a system of $N$ identical fermions in $d$ modes \cite{Terhal2002,DiVincenzo2005}, which  can be realized in  integer quantum hall effect systems exhibiting edge channels \cite{Bocquillon2014}.  Passive fermionic  linear optics together with particle-number measurements yields classically simulable model of quantum computation \cite{DiVincenzo2005}. Moreover, this class of transformations has been recently used to study correlation \cite{Tichy2014} and nonlocality \cite{Dasenbrook2016} properties in fermionic systems. Finally, active fermionic linear optics describes free-fermion transformations that are not necessary particle preserving.  These fermionic transformations are the basic ingredient of a classically simulable model of quantum computation \cite{Bravyi2002,Terhal2002,DiVincenzo2005}. This computational model  has been even explored in the presence of noise \cite{Bravyi2011,Melo2013,Oszmaniec2014}, and can be connected,  via the Jordan-Wigner transformation, to the model of computation based on Matchgates \cite{Valiant2002,Knill2001a,Jozsa2008,Brod2016}.

In this work, we completely solve problems (i) and (ii) for the scenarios A-C.  We characterize the unitary transformations that are implementable (maybe approximately) by linear optical gates supplemented with any additional  Hamiltonian or a gate. Our characterization is given in terms of \emph{explicit} algebraic conditions on the Hamiltonian $X$ or the gate $V$ that can be can be tested operationally. The resulting behavior is surprisingly rich and structurally depends on the number of modes and the number of particles. In particular, contrary to what intuition might suggest, it is not true that every non-trivial extra gate or Hamiltonian provides universality in scenarios A-C. Solution of problems (i) and (ii) gives the clear understanding of what resources are necessary to have full physical controllability in the contexts listed above. Moreover, our results can be viewed as a step towards a solution of the general problem of classification of invertible quantum circuits posed recently by Aaronson and co-workers \cite{Bouland2014,Aaronson2015}. On the technical level we use techniques of  the theory of Lie groups and Lie algebras, which  have recently proved useful in studies on controllability of quantum systems \cite{Altafini2002, Zeier2011,ZZKS2014, ZZSB2015} and on universality of gate-sets \cite{Sawicki2016,GatesSawickiLong2016,GatesSawickiLong2016}. %However, none of the cited works studied the physically relevant  problem of extensions to universality studied in this work. 

Our general results have application  to concrete physical examples. First, we consider the problem of extension to universality of passive bosonic linear optics for $d=2$ modes via cross-Kerr interaction \cite{Brod2016a}, which is of relevance in  quantum metrology with random bosonic states \cite{Oszmaniec2016}. We also show that, quite surprisingly, there exist simple nonlinear Hamiltonians that do not lead to universality when added to passive fermionic or bosonic linear optics. Finally, it turns out that a simple quartic (in Majorana operators) Hamiltonian promotes active fermionic linear optics to universality in the positive-parity subspace. 
  
\paragraph*{Setting---}
Let us start with formally defining the extension  problem  for  (i) an additional Hamiltonian and (ii) an additional gate. In general we say that a unitary gate $U$ can be generated by unitaries from a set  $\mathcal{S}$  if  it can be approximated with  \emph{arbitrary precision} (in operator norm) by a sequence of products of gates from $\mathcal{S}$. We denote by $\langle K, X  \rangle$  the set of unitaries that can be generated  form the restricted gate-set $K$ and unitaries of the form $\exp\left(-\ii tX\right)$, where $t$ is an arbitrary real number. Likewise, slightly abusing the notation, we denote by $\langle K,V \rangle$ the set of unitaries that can be generated by elements form $K$ and an extra gate $V$.  The aim of this work is to characterize sets  $\langle K, X  \rangle$  and $\langle K,V  \rangle$ for different linear optical groups $K$ (acting on the appropriate Hilbert spaces $\H$). If the resulting gate-set $\langle K, X  \rangle$  (or  $\langle K,V  \rangle$) form the full unitary group  $\U(\H)$,  we say that the Hamiltonian $X$ (or the gate $V$)  \emph{promotes} the restricted collection of gates $K$ to universality in $\H$. Since unitaries $U$ and $e^{i\alpha} U$ are physically indistinguishable, we assume, without loss of generality, that gates of the form $\exp(\ii \theta) \I$ are contained in $K$. In what follows by $\T(\H)$ we will denote the unitaries proportional to $\I$ on $\H$. We would like to remark that the notion of \emph{physical universality} in $\H$ introduced above does not imply computational universality in a sense of complexity theory \cite{Nielsen2010}.   

We denote the Hilbert space of $N$ bosons in $d$ modes by $\Hbos$. We have $\Hbos= \mathrm{Sym}^N\left(\mathbb{C}^d\right)$, i.e., in this case the bosonic Hilbert space can be identified with  the totally symmetric subspace of the Hilbert space of $N$ distinguishable qudits, $\left(\mathbb{C}^d\right)^{\ot N}$. In this language the group of passive linear optical bosonic transformations, denoted by $\LOB$, can defined as the group of unitaries of the form $U^{\ot N}$, with $U\in \U(d)$, restricted to the bosonic subspace $\Hbos$. 

The Hilbert space of $N$ (spinless) fermionic particles in $d$ modes ($d\geq N$) is $\Hfer=\wedge^N\left(\C^d\right)$, i.e., the totally antisymmetric subspace of  $\left(\mathbb{C}^d\right)^{\ot N}$.  Similarly to the bosonic case, the group of passive fermionic linear optics $\LOF$  is defined as the group of unitaries $U^{\ot N}$, with $U\in \U(d)$, restricted to the fermionic subspace $\Hfer$.

In the case of fermionic system without the restricted number of particles, the relevant Hilbert space is direct sum of the different $N$-particle fermionic Hilbert spaces, i.e. the Fock space $\mathcal{H}_{\mathrm{Fock}}{=}\oplus_{N=0}^d \wedge^N\C^d$.  $\mathcal{H}_{\mathrm{Fock}}$ is spanned by the Fock states $\ket{n_1,\ldots,n_d}$, with $n_k \in \lbrace0,1 \rbrace$. In this space we have fermionic creation and annihilation operators, $f_{k}^{\dagger}$, $f_{k}$ satisfying the canonical anti-commutation relations and one can define the standard number operators $\hat{n}_k = f^\dagger_k f^{\phantom\dagger}_k$. It is also convenient to introduce Majorana fermion operators $m_{2k-1}{=} f^{\phantom\dagger}_{k}{+}f_{k}^{\dagger}$, $m_{2k}{=}i\left(f^{\phantom\dagger}_{k}{-}f_{k}^{\dagger}\right)$,
for $k{=}1,{\ldots},d$. Majorana operators satisfy the anticommutation relations $\left\{ m_{i},\,
m_{j}\right\} =2\delta_{ij} \I$. 
In many situations fermionic systems obey the so-called \emph{parity superselection rule} \cite{Bravyi2002,ZZKS2014,Bravyi2004} which states that any physically accessible operations must commute with the total parity operator $Q=(-1)^{\sum_{k=1}^d \hat{n}_k}$. In this work, we restrict our attention to the positive-parity subspace $\Hfree$, i.e., the subspace spanned by Fock states with even number of particles. 
Results completely analogous to the ones presented here can be obtained  also for the negative-parity subspace.  Active fermionic linear optics acting in $\mathcal{H}_{\mathrm{Fock}}$ consists of unitaries of the form $\exp(\sum_{l, k=1}^{2d}h_{kl} m_k m_l)$, where $h_{kl}$ is a real antisymmetric $2d\times2d$ matrix. Since  we are interested only in the positive-parity subspace $\Hfree$, we formally define the group of active fermionic linear transformations as the group consisting of unitaries $(\exp(\ii \phi) \I) \exp(\sum_{ l, k =1}^{2d}h_{kl} m_k m_l)$, restricted to $\Hfree$. 

The groups introduced above are compact Lie subgroups of the unitary group $\U\left(\H\right)$, where $\H$ is the Hilbert space describing the relevant physical system.  For a given Lie group $K\subset \mathrm{U}\left(\H\right)$ it is convenient to work with the corresponding Lie algebra consisting of Hamiltonians that generate (via exponentiation) unitaries belonging to $K$.  The groups studied in this paper and the corresponding  Lie algebras are closely related to the irreducible representations of \emph{simple} Lie algebras \cite{HallGroups} in the relevant Hilbert spaces. This observation allowing us to obtain our central results given in Theorems 1-4 below. In particular, we use the classification results by Dynkin \cite{Dynkin2000} concerning the maximal Lie subalgebras of simple Lie algebras.  For the sake of clarity of the presentation, we moved technical proofs to the Appendix and focused on the discussion of the physical meaning of our results.  
 
\paragraph*{Application---}

Before stating our results in full generality, let us present first an exemplary application of our findings. In the reference \cite{Oszmaniec2016}, the  authors were interested in extending  bosonic linear optics to universality in $\Hbos$ by adding an additional gate. This problem was motivated by the need to construct physically-accessible  universal gate-set in $\Hbos$, which can be used to generate, via construction based on random circuits \cite{Horo2016}, approximate bosonic $t$-designs. The example below proves that a singe gate based on the cross-Kerr nonlinearity suffices to promote bosonic linear optics to universality in $\Hbos$. It should be mentioned that Kerr-like transformation have been previously used to obtain universal quantum computation in continuous-variable systems \cite{lloydcont1999}.

\begin{exa}\label{ex:CROSSKERR}
	Consider a bosonic system with $d=2$ modes and $N>1$ particles, and a gate generated by the cross-Kerr interaction (acting on $\Hbos$ for a fixed time $t$),	
	\begin{equation}
	V_{t}=\exp\left(-it\hat{n}_{a}\hat{n}_{b}\right)\,,\label{eq:cross Kerr}
	\end{equation}
	where $\hat{n}_{a,b}$ are the occupation number operators corresponding
	to modes $a$ and $b$.  Let $\langle \LOB, V_{t}\rangle$ be the group of transformations generated by passive bosonic  linear optics and $V_t$. Then, 	$\langle \LOB, V_{t}\rangle =\U\left(\Hbos \right)$ if and only if 
	\begin{equation}\label{eq:conditionKerr}
	\e^{2it\left[l\left(N-l\right)-k\left(N-k\right)\right]}\neq 1\,,
	\end{equation}
	for at least one pair $(k,l)$ , where $k,l=0,\ldots,N$. In particular, the gate $V_{\frac{\pi}{3}}$ promotes passive bosonic linear optics to universality in $\Hbos\ $ for $d=2$ modes.

	\end{exa}

The above result follows from Theorem \ref{thm:BOSGATES} stated below. The explicit computations leading to condition \eqref{eq:conditionKerr} are presented in Appendix \ref{ap:examples}.

\paragraph*{Main results---} 
We start with the presentation of results concerning passive bosonic linear optics.

\begin{thm}[Extensions of Passive Bosonic  Linear Optics with an additional gate]\label{thm:BOSGATES}
Let $V\notin \LOB$ be a gate acting on the Hilbert space $\Hbos$ of $N>1$ bosons in $d$ modes. Let $\langle \LOB, V\rangle$ be the group of transformations generated by passive bosonic linear optics and $V$ in $\Hbos$. For $d=2$ we define:
\begin{equation}\label{eq:Lbos}
\L _b=\kb{\Psi_b}{\Psi_b},\ \ket{\Psi_b}=\sum_{k=0}^N (-1)^k  \ket{D_k}\ket{D_{N-k}} \in \Hbos\ot\Hbos\ ,
\end{equation}
where $\ket{D_k}$ denote the two-mode Dicke states with $k$-particles being in the first mode.  We have the following possibilities:
\begin{itemize}
\item [(a)] If $d>2$, then $\langle \LOB,V\rangle =\U\left(\Hbos \right)$.
\item [(b)] If $d=2$, $N\neq 6$ and $[V\ot V, \L_b]=0$, then 
\[
\langle \LOB,V\rangle=G_b= \lbrace \left. U\in \U\left(\Hbos \right) \right|    \left[U\ot U, \L_b\right]=0\ \rbrace .
\]

\item [(c)] If $d=2$ and $[V\ot V, \L_b ]\neq 0$, then  $\langle \LOB,V\rangle=\U\left(\Hbos\right)$.
\end{itemize}
\end{thm}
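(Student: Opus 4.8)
The plan is to pass to Lie algebras and reduce the statement to a classification of the subalgebras of $\u(\Hbos)$ that contain $\Lie(\LOB)$, which can then be carried out with Dynkin's classification of maximal subalgebras of simple Lie algebras together with Schur's lemma.

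Set $G=\langle\LOB,V\rangle$; by construction it is a closed subgroup of the compact group $\U(\Hbos)$, hence a compact Lie group, and we write $\mathfrak{h}=\Lie(G)$. Since $\LOB$ is connected, $\LOB\subseteq G^0$ and $\mathfrak{lo}_b:=\Lie(\LOB)\subseteq\mathfrak{h}$; and since $V\in G$ normalizes $G^0$, we have $\Ad_V(\mathfrak{h})=\mathfrak{h}$. Now $\LOB$ is the image of $\U(d)$ under the \emph{irreducible} representation $\mathrm{Sym}^N$ on $\Hbos$, so $\mathfrak{lo}_b\cong\su(d)\oplus\u(1)$ with $\su(d)$ acting irreducibly and $\u(1)=i\R\I$ the generator of the global phases $\T(\Hbos)\subseteq\LOB$. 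For $d=2$ the module $\mathrm{Sym}^N(\C^2)$ is the defining representation of $\so(N+1)$ when $N$ is even and of $\usp(N+1)$ when $N$ is odd, with $\su(2)$ embedded as the principal subalgebra; moreover $\ket{\Psi_b}$ is, up to normalization, the unique $\su(2)$-invariant vector of $\Hbos\ot\Hbos$, so $G_b$ is the connected subgroup of $\U(\Hbos)$ with Lie algebra $\mathfrak{g}_b:=\so(N+1)\oplus\u(1)$ (resp. $\usp(N+1)\oplus\u(1)$); in particular $\LOB\subseteq G_b$, and a direct computation (using that $-\I\in\T(\Hbos)$ absorbs the extra component of $\mathrm{O}(N+1)$ when $N{+}1$ is odd) shows $G_b$ is connected.

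Next one lists the subalgebras $\mathfrak{h}$ with $\mathfrak{lo}_b\subseteq\mathfrak{h}\subseteq\u(\Hbos)$. An abelian summand of $\mathfrak{h}$ commuting with its irreducibly-acting semisimple part must be $\u(1)\I$ by Schur, so it suffices to determine $\mathfrak{h}^{\mathrm{ss}}$. For $d>2$ the representation $\mathrm{Sym}^N(\C^d)$ is not self-dual (its highest weight $N\varpi_1$ differs from $N\varpi_{d-1}$), so the image of $\su(d)$ preserves no bilinear form and, by Dynkin's classification, is a maximal subalgebra of $\su(\Hbos)$; hence $\mathfrak{h}\in\{\mathfrak{lo}_b,\u(\Hbos)\}$. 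For $d=2$ with $N\ne6$ the principal $\su(2)$ is maximal in $\so(N+1)$ (resp. $\usp(N+1)$) and the latter is maximal in $\su(N+1)$; hence $\mathfrak{h}\in\{\mathfrak{lo}_b,\mathfrak{g}_b,\u(\Hbos)\}$. (The exclusion $N=6$ is precisely what is needed here, since for $d=2,\,N=6$ there is the further chain $\su(2)\subset\g_2\subset\so(7)$, producing the extra candidate $\g_2\oplus\u(1)$.)

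It remains to discard the strictly-small possibilities by a normalizer argument. If $\mathfrak{h}=\mathfrak{lo}_b$ then $G^0=\LOB$, so $G\subseteq N_{\U(\Hbos)}(\LOB)$; since $\su(d)$ has no outer automorphism realizable by a unitary on the non-self-dual module $\Hbos$ (and, for $d=2$, $\su(2)$ has none at all), $\Ad_V$ induces an inner automorphism of the $\su(d)$-image, so $Vk^{-1}$ centralizes it for some $k\in\LOB$ and hence $Vk^{-1}\in\T(\Hbos)\subseteq\LOB$ by Schur, giving $V\in\LOB$ — a contradiction. Likewise, if $\mathfrak{h}=\mathfrak{g}_b$ (or $\g_2\oplus\u(1)$ when $N=6$) then $G^0=G_b$ (resp. $\T(\Hbos)\cdot G_2$), and because $\so(N+1)$ with $N{+}1$ odd, $\usp(N+1)$, and $\g_2$ have no outer automorphisms, the same argument yields $V\in\T(\Hbos)\cdot(G^0)^{\mathrm{ss}}\subseteq G_b$ — contradicting the hypothesis $[V\ot V,\L_b]\ne0$ of case (c) (and covering the $\g_2$ candidate when $N=6$). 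Consequently, in cases (a) and (c) one is forced to $\mathfrak{h}=\u(\Hbos)$, so $G=G^0=\U(\Hbos)$; in case (b) one is forced to $\mathfrak{h}=\mathfrak{g}_b$, so $G^0=G_b$, and since $G\subseteq G_b$ (because $\LOB,V\in G_b$ and $G_b$ is a closed group) while $G_b$ is connected, $G=G_b$. The main obstacle is the Dynkin-classification input of the third paragraph — excluding sporadic intermediate subalgebras for $d\ge3$ and isolating the inclusion $\su(2)\subset\g_2\subset\so(7)$ that forces the exclusion of $N=6$; the remaining steps are routine applications of Schur's lemma and compact Lie group theory.
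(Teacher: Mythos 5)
Your proposal is correct and follows essentially the same route as the paper: reduce to the Lie algebra level, invoke Dynkin's classification of the (maximal) subalgebras sitting between the irreducibly represented $\su(d)$ and $\su(\Hbos)$ — including the exceptional chain $\su(2)\subset\g_2\subset\so(7)$ that forces the exclusion of $N=6$ — and then dispose of the discrete part with a normalizer/Schur argument showing that any unitary normalizing $\Pi_b(\SU(d))$, $\SO(\Hbos)$, $\USP(\Hbos)$ or $\mathrm{G}_2$ already lies in $\T(\Hbos)$ times that group. The only cosmetic differences are that you keep the $\u(1)$ center inside $\u(\Hbos)$ throughout rather than splitting off $\T(\Hbos)$ and working in $\SU(\Hbos)$, and that you rule out normalizers of the orthogonal/symplectic groups via the absence of outer automorphisms where the paper uses uniqueness of the invariant vector; both are equivalent.
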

In the above theorem we have situations with $N=1$ particles as for them $\LOB=\U(\Hbos)$. We see that for  $d\neq2$ any additional gate promotes $\LOB$ to universality in the bosonic space $\Hbos$. For $d=2$ the resulting gate-set $\langle \LOB, V\rangle$ depends only on the commutator $[V\ot V,\L_b]$. If it is nonzero, then $V$ again extends $\LOB$ to universality; while if it vanishes (and $N\neq6$) $V$ extends $\LOB$ to the ''middle group'' $G_b$.  Up to a global phase the group $G_b$ consists of unitaries that preserve the bilinear form defined by $B(\ket{\psi},\ket{\phi})=\bra{\Psi_b} (\ket{\psi}\ot\ket{\phi})$. Here, by preservation we mean that $B_b(U\ket{\psi},U\ket{\phi})=B_b(\ket{\psi},\ket{\phi})$, for all vectors $\ket{\phi}$,$\ket{\psi}$. If the number of particles $N$ is even  then $\ket{\Psi_b}$ is a symmetric tensor and defines the real inner product. In this case we have $G_b= \langle \T(\Hbos), \SO(\Hbos) \rangle$, where $SO(\Hbos)$ is the special orthogonal group on $\Hbos$. When the number of particles $N$ is odd, the vector $\ket{\Psi_b}$ is antisymmetric and defines the symplectic structure (i.e. non-degenerate and antisymmetric  form) on $\Hbos$. In this case we have  $G_b = \langle \T(\Hbos), \USP(\Hbos) \rangle$, where 
$\USP(\Hbos)$ is the unitary symplectic group. The two differ considerably as $\USP(\H)$ acts transitively on the set of pure states on $\H$ \cite{ZZKS2014, schirmer2002, albertini2003}. On the other hand, $\SO(\H)$ acts transitively only on ''real'' pure states. Thus, for odd number of particles, adding any additional gate gives either the full unitary controllability or the  pure-state controllability. In the case of $d=2$ modes and $N=6$ particles if $[V\ot V,\L_b]=0$ the situation complicates due to the presence of additional group (related to the \emph{exotic} group $\mathrm{G}_2$) in between $\LOB$ and $G_b$. We leave the description of this exceptional case as an interesting open problem. 

From Theorem \ref{thm:BOSGATES} one can infer the following result concerning the additional Hamiltonian $X$ acting on $\Hbos$  (intuitively,  one can obtain it by setting $V=\exp(\ii t X)$ in Theorem \ref{thm:BOSGATES} and differentiating over $t$).

\begin{thm}[Extensions of Passive Bosonic  Linear Optics via an additional Hamiltonian]\label{thm:BOSHAM}
	Let $X\notin \Lie\left(\LOB\right)$ be a Hamiltonian acting on Hilbert space of $N$ bosons in $d$ modes $\Hbos$. Let $\langle \LOB, X\rangle$ be the group of transformations generated by passive bosonic linear optical optics and $X$ in $\Hbos$. 
	We have the following possibilities:
	\begin{itemize}
		\item[(a)] If $d>2$, then $\langle \LOB, X\rangle=\U\left(\Hbos \right)$.
		\item[(b)] If $d=2$, $N\neq 6$, and $\left[X\ot \I + \I\ot X, \L_b\right]=0$, then 
		\begin{equation}
		\langle \LOB, X\rangle=G_b=\lbrace \left. U\in \U\left(\Hbos \right) \right|   \left[U\ot U, \L_b\right]=0\ \rbrace \ .
		\end{equation}
		\item[(c)] If $d=2$ and $\left[X\ot \I + \I\ot X, \L_b\right]\neq0$, then  $\langle \LOB, X\rangle=\U\left(\Hbos\right)$.
		
	\end{itemize}
		
\end{thm}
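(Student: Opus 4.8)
The plan is to derive Theorem~\ref{thm:BOSHAM} directly from Theorem~\ref{thm:BOSGATES} by feeding the one‑parameter family $V_t\defeq\exp(-\ii t X)$ into the latter for a well‑chosen parameter. I would build the argument on two elementary observations. First, each individual gate $V_{t_0}$ is one of the generators available in $\langle\LOB,X\rangle$, so $\langle\LOB,V_{t_0}\rangle\subseteq\langle\LOB,X\rangle$ for every $t_0$. Second, both $\LOB$ and $G_b$ are \emph{closed} subgroups of $\U(\Hbos)$ — for $G_b$ this is because it is the set of $U$ for which $U\ot U$ preserves the line $\C\ket{\Psi_b}$, equivalently commutes with $\L_b$, which is manifestly a closed group — so by the closed‑subgroup theorem a continuous one‑parameter subgroup $t\mapsto V_t$ is contained in such a $K$ if and only if its Hamiltonian generator lies in $\Lie(K)$ (the ``only if'' is differentiation at $t=0$, the ``if'' is exponentiation). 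Applied to $K=G_b$ this reads: $V_t\in G_b$ for all $t$ $\iff$ $[X\ot\I+\I\ot X,\L_b]=0$. I would also use the inclusion $\LOB\subseteq G_b$, which is exactly the (implicit) content of Theorem~\ref{thm:BOSGATES}: $\ket{\Psi_b}$ spans the unique trivial $\SU(2)$‑subrepresentation inside $\mathrm{Sym}^N(\C^2)^{\ot 2}$, so $U^{\ot N}|_{\Hbos}\ot U^{\ot N}|_{\Hbos}$ fixes $\C\ket{\Psi_b}$ for every $U\in\U(2)$.

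With these in hand, cases (a) and (c) are immediate. In case (a) ($d>2$), since $X\notin\Lie(\LOB)$ the second observation produces a time $t_0$ with $V_{t_0}\notin\LOB$; Theorem~\ref{thm:BOSGATES}(a) gives $\langle\LOB,V_{t_0}\rangle=\U(\Hbos)$, and the first observation upgrades this to $\langle\LOB,X\rangle=\U(\Hbos)$. In case (c) ($d=2$, $[X\ot\I+\I\ot X,\L_b]\neq0$), if $[V_t\ot V_t,\L_b]$ vanished for all $t$ then differentiating at $t=0$ would contradict the hypothesis, so there is $t_0$ with $[V_{t_0}\ot V_{t_0},\L_b]\neq0$; since $\LOB\subseteq G_b$ this $V_{t_0}$ cannot lie in $\LOB$, so Theorem~\ref{thm:BOSGATES}(c) together with the first observation again gives $\langle\LOB,X\rangle=\U(\Hbos)$.

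Case (b) ($d=2$, $N\neq6$, $[X\ot\I+\I\ot X,\L_b]=0$) needs a two‑sided argument. The hypothesis says $V_t\in G_b$ for all $t$; combined with $\LOB\subseteq G_b$ and the fact that $G_b$ is a closed group, this yields $\langle\LOB,X\rangle\subseteq G_b$. For the opposite inclusion I would once more use $X\notin\Lie(\LOB)$ to find $t_0$ with $V_{t_0}\notin\LOB$; since $V_{t_0}\in G_b$, i.e. $[V_{t_0}\ot V_{t_0},\L_b]=0$, Theorem~\ref{thm:BOSGATES}(b) (this is where $N\neq6$ enters) gives $\langle\LOB,V_{t_0}\rangle=G_b$, hence $G_b\subseteq\langle\LOB,X\rangle$. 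The two inclusions give $\langle\LOB,X\rangle=G_b$. The boundary cases are harmless: for $N=1$ one has $\LOB=\U(\Hbos)$, so the hypothesis $X\notin\Lie(\LOB)$ is vacuous, and the $d=2$, $N=6$ exotic case is excluded in (b) exactly as in Theorem~\ref{thm:BOSGATES}.

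I do not expect a substantive obstacle: the group‑theoretic heavy lifting is entirely contained in Theorem~\ref{thm:BOSGATES}, and what remains is the routine dictionary between ``$X\in\Lie(K)$'' and ``$V_t\in K$ for all $t$'', applied to $K=\LOB$ and $K=G_b$, plus the remark that the commutator condition characterising $G_b$ at the group level passes under differentiation to the corresponding condition at the Lie‑algebra level. The one point deserving a careful sentence is the justification that $G_b$ is a closed subgroup and that $\LOB\subseteq G_b$; both follow at once from the description of $G_b$ via invariance of the line $\C\ket{\Psi_b}$ under $U\mapsto U\ot U$.
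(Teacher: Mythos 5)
Your derivation is correct, but it runs in the opposite logical direction from the paper's own proof. In the appendix the authors prove the Hamiltonian statement \emph{first} and directly: they reduce to the connected group $\GEN=\langle\Pi_b(\SU(d)),X\rangle$, invoke Dynkin's classification to show that $\pi_b(\su(d))$ is maximal in $\su(\Hbos)$ for $d>2$, and that for $d=2$, $N\neq 6$ the only intermediate subalgebra between $\pi_b(\su(2))$ and $\su(\Hbos)$ is $\so(\Hbos)$ or $\usp(\Hbos)$ (characterised by preservation of $\ket{\Psi_b}$, which at the Lie-algebra level is exactly the commutator condition); the gate theorem is then proved afterwards using these same maximality facts together with normaliser lemmas. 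You instead take Theorem~\ref{thm:BOSGATES} as given and recover the Hamiltonian version via the standard closed-subgroup dictionary between ``$V_t\in K$ for all $t$'' and ``$X\in\Lie(K)$'' --- which is precisely the informal remark the authors make in the main text but do not use as their actual proof. Your argument is sound: the choice of $t_0$ in each case is justified by differentiation/exponentiation, $G_b$ is closed and contains $\LOB$ because $\ket{\Psi_b}$ spans the unique trivial subrepresentation, and the hypotheses of the relevant clause of Theorem~\ref{thm:BOSGATES} (including $V_{t_0}\notin\LOB$ and, in case (b), $N\neq6$) are all verified. The one caveat worth stating explicitly is that your route is only non-circular because the paper's proof of Theorem~\ref{thm:BOSGATES} relies on the Dynkin maximality facts and the normaliser lemmas rather than on the statement of Theorem~\ref{thm:BOSHAM} itself; granting that, your proof is a clean and shorter alternative, while the paper's order of proof is what actually supplies the group-theoretic content in both theorems.
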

Combining the above result and the discussion below Theorem~\ref{thm:BOSGATES} we see that there might exist physical Hamiltonians that add different controllability properties to $\LOB$, depending on the number of particles $N$. The following example shows that this is indeed the case.

\begin{exa}\label{ex:BOSmid} Consider the Hamiltonian $X_{3}=\hat{n}^3_a - \hat{n}^3_b$ acting on $\Hbos$ for $d=2$ modes and $N$ particles.  Deepening on the value $N$ we get different types of gate-sets after supplementing passive bosonic optics with $X_3$ 
\begin{itemize}
\item[(a)] For even $N$: $\langle \LOB, X_3\rangle = \langle \T(\Hbos), \SO(\Hbos) \rangle$; 
\item[(b)] For odd $N$:  $\langle \LOB, X_3\rangle = \langle \T(\Hbos), \USP(\Hbos) \rangle$.

\end{itemize}
In, particular for odd $N$ we have full controllability on the set of pure states on $\H_b$, whereas for even $N$ this is not the case.
\end{exa}
We now move to the discussion of fermionic linear optics (both passive and active). In the main text we present the results concertinaing the gate extension problem (ii). The corresponding theorems for the Hamiltonian extension problem are given in the Appendix~\ref{ap:mainTheorems} and their relation with the results presented in the main text is analogous to the connection between Theorem \ref{thm:BOSHAM} and Theorem \ref{thm:BOSGATES}.

\begin{thm}[Extensions of Passive  Fermionic Linear Optics with an additional gate]\label{thm:FERMGATES}
Let $V\notin \LOF$ be a gate acting on Hilbert space of $N$ fermions in $d$ modes $\Hfer$, where $N\notin\lbrace 0,1,d-1,d\rbrace$. Let $\langle \LOF, V\rangle$ be the group of transformations generated by passive fermionic linear optics and $V$ in $\Hfer$. For $d=2N$ (half-filling) we define:
\begin{equation}
\L_f=\kb{\Psi_f}{\Psi_f},  \text{ with } \ket{\Psi_f}=\ket{1}\wedge\ket{2}\wedge\ldots\wedge\ket{2N} \in \Hfer\ot\Hfer
\label{eq:Lfer}
\end{equation}
where $\wedge$ denotes the standard wedge product.
We have the following possibilities:
\begin{itemize}
\item[(a)] If $d\neq 2N$, then $\langle \LOF, V\rangle=\U\left(\Hfer\right)$.
\item[(b)] If $d=2N$ and  $V= W k$, for $k\in\LOF$ and  $W=\prod_{i=1}^d (f_i + f^\dagger_i)$, then $\langle \LOF, V\rangle = \LOF \cup \LOF \cdot W$.
\item[(c)] If $d=2N$, $V\neq gW $, for $g\in\LOF$, and $[V\ot V, \L_f]=0$, then   \vspace*{-2mm}
\[
\langle \LOF, V\rangle = G_f=\lbrace \left. U\in \U\left(\Hfer \right) \right|    \ [V\ot V, \L_f]=0\ \rbrace.
\] \vspace*{-5mm}
\item[(d)] If $d=2N$ and $[V\ot V, \L_f]\neq 0$, then  $\langle \LOF, V\rangle=\U\left(\Hfer\right)$.
\end{itemize}
\end{thm}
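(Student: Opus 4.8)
The plan is to reduce the gate-extension problem to the structure theory of compact Lie groups by studying the connected Lie group $G_V = \overline{\langle \LOF, V\rangle_0}$ generated together with the conjugates $V\LOF V^{-1}$, and then handling the discrete part separately. First I would note that $\langle \LOF, V\rangle$ is a closed subgroup of $\U(\Hfer)$ containing $\LOF$, hence its identity component $\GEN$ is a connected compact Lie group whose Lie algebra $\g$ contains $\k\defeq\Lie(\LOF)$ and is invariant under $\Ad_{\LOF}$. The representation of $\LOF\cong\U(d)$ on $\Hfer=\wedge^N\C^d$ is irreducible, and for $N\notin\{0,1,d-1,d\}$ it is not self-dual unless $d=2N$; I would invoke the earlier remark that these groups are irreducible representations of simple Lie algebras and apply Dynkin's classification of maximal subalgebras. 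The point is that $\g$ is an $\Ad_{\k}$-invariant subalgebra of $\su(\Hfer)$ (modulo the center $\u(1)$), so it is either $\k$ itself, or one of the finitely many intermediate subalgebras sitting between $\k$ and $\su(\Hfer)$, or all of $\su(\Hfer)$. When $d\neq 2N$ the representation $\wedge^N\C^d$ of $\su(d)$ is not orthogonal or symplectic and, by Dynkin, $\su(d)$ is maximal in $\su(\Hfer)$ among $\Ad$-invariant subalgebras of the relevant type; since $V\notin\LOF$ forces $\g\supsetneq\k$, we get $\g=\su(\Hfer)$ and hence $\langle\LOF,V\rangle=\U(\Hfer)$, giving (a). The same dichotomy $\g\in\{\k,\,\text{intermediate},\,\su(\Hfer)\}$ drives cases (b)–(d) at $d=2N$, where now $\wedge^N\C^d$ is self-dual and carries an invariant bilinear form $B$ (built from $\ket{\Psi_f}$); the intermediate subalgebra is $\so(\Hfer)$ or $\usp(\Hfer)$ according to the symmetry of $B$, i.e. the group $G_f$ in the statement.

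The next step is to show the commutator $[V\ot V,\L_f]$ is exactly the obstruction to $\g$ being the intermediate algebra rather than everything. Here I would use the standard fact that a gate $U$ preserves the bilinear form $B$ (up to phase) iff $[U\ot U,\L_f]=0$, since $\L_f=\kb{\Psi_f}{\Psi_f}$ and $\ket{\Psi_f}$ is the (essentially unique) $\LOF$-invariant vector in $\Hfer\ot\Hfer$ — uniqueness following from $\wedge^N\C^d\ot\wedge^N\C^d$ containing the trivial rep of $\SU(d)$ with multiplicity one at half-filling. Thus if $[V\ot V,\L_f]=0$ then every generator of $\langle\LOF,V\rangle$ preserves $B$, so $\langle\LOF,V\rangle\subseteq G_f$; conversely $G_f$ is connected (for the appropriate symmetry class $\SO$ or $\USp$, together with the phases $\T(\Hfer)$) and its Lie algebra is intermediate, so by the Dynkin trichotomy and $\g\supsetneq\k$ we must have $\g=\Lie(G_f)$, giving equality — this is case (c). If instead $[V\ot V,\L_f]\neq0$, then $\g$ cannot be contained in $\Lie(G_f)$ (it is not $\k$ since $V\notin\LOF$, and it is not the intermediate algebra since some conjugate of $V$ fails to preserve $B$), so by trichotomy $\g=\su(\Hfer)$, which is case (d). For case (b), the distinguished gate $W=\prod_i(f_i+f_i^\dagger)=\prod_i m_{2i-1}$ (a product of Majoranas, hence in active but not passive linear optics) satisfies $W\LOF W^{-1}=\LOF$ on $\Hfer$ — because conjugation by $W$ acts on the single-particle space as a real orthogonal (indeed signed-permutation / complex-conjugation-like) transformation that normalizes $\U(d)$; one checks $W^2\in\T(\Hfer)$, so $\langle\LOF,V\rangle=\LOF\cup\LOF W$ is precisely the disconnected two-component group claimed, and no smaller or larger group arises because $V=Wk$ contributes nothing to the identity component beyond $\k$.

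I expect the main obstacle to be twofold. First, pinning down the exceptional intermediate subalgebras: Dynkin's tables must be checked to confirm that for $d=2N$ the only proper $\Ad_{\su(d)}$-invariant subalgebra of $\so(\Hfer)$ or $\usp(\Hfer)$ strictly containing the image of $\su(d)$ is none — i.e. that $\su(d)\hookrightarrow\so(\wedge^N\C^{2N})$ (resp. $\usp$) is a maximal subalgebra — with the genuine possibility (as the bosonic $N=6$, $\mathrm{G}_2$ case warns) of sporadic exceptions at small $d$ that would need to be excluded or carved out by hand. Second, the discrete/component-group bookkeeping in case (b): one must verify that adjoining $W$ to $\LOF$ produces no extra connected directions (so the identity component stays $\LOF$) and that the coset structure is exactly $\LOF\cup\LOF W$ rather than something with more components, which requires knowing the normalizer of $\LOF$ inside $\U(\Hfer)$ at half-filling and the order of $W$ modulo phases. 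The rest — irreducibility of the relevant representations, uniqueness of the invariant vector $\ket{\Psi_f}$, the form-preservation $\Leftrightarrow$ commutator equivalence, and the Lie-algebra-generation argument that $\g\supsetneq\k$ whenever $V\notin\LOF$ — is routine given the group-theoretic setup already announced in the text.
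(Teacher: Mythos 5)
Your overall architecture matches the paper's: reduce to the identity component $\GEN$, use Dynkin's classification to get the trichotomy $\g\in\{\k,\so/\usp,\su(\Hfer)\}$, translate form-preservation into the commutator condition $[V\ot V,\L_f]=0$ via uniqueness of the invariant vector $\ket{\Psi_f}$, and treat $W$ as a normalising element of order two modulo phases for case (b). However, there is one genuine gap: you repeatedly rely on the inference ``$V\notin\LOF$ forces $\g\supsetneq\k$'' and in your closing paragraph you call it routine. As stated it is false --- case (b) itself is the counterexample, since $W\notin\LOF$ normalises $\LOF$ and adjoining it does not enlarge the Lie algebra at all. What you actually need, for cases (a), (c) and (d), is that any $V$ outside $\LOF$ (and, at half-filling, outside $\LOF\cdot W$) fails to \emph{normalise} $\Pi_f(\SU(d))$; only then does the conjugate $V\,\Lie(\LOF)\,V^\dagger\neq\Lie(\LOF)$ force a strictly larger connected subgroup. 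Computing this normaliser is the technical heart of the paper's argument (its Lemmas on normalising gates): one shows that a normalising $V$ induces an automorphism $\tau_V$ of $\su(d)$, that $\Aut(\su(d))=\Inn(\su(d))\cup\Inn(\su(d))\cdot\{\alpha\}$ with $\alpha(X)=-X$, that the outer class can only be realised when $\pi_f\cong\bar\pi_f$, i.e.\ precisely at half-filling, and that there it is realised (uniquely up to $\LOF$ and a phase) by the particle--hole gate $W$. You flag the normaliser of $\LOF$ only as a bookkeeping issue for the coset structure in case (b), but it is equally indispensable for (a), (c) and (d); without it your trichotomy argument cannot get off the ground, because you cannot exclude that $\langle\LOF,V\rangle$ is a disconnected group with identity component $\LOF$ and some exotic discrete extension.

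A second, smaller point: your Dynkin step also needs to rule out exceptional intermediate subalgebras between $\pi_f(\su(d))$ and $\so(\Hfer)$ or $\usp(\Hfer)$ (the analogue of the $\g_2$ obstruction in the bosonic $d=2$, $N=6$ case). You correctly identify this as something to check in Dynkin's tables; the paper confirms from Table 5 of Dynkin that no such exceptions occur for $\wedge^N\C^{2N}$, so this is a verification rather than a gap, but it should be stated as established rather than left open.
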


The structure of the above result is similar to the case of passive bosonic linear optics. In the formulation of the theorem we have excluded the non interesting cases $N\in\lbrace 0,1,d-1,d\rbrace$ since for them $\LOF$ equals the full unitary group on the respective Hilbert space. When $d\neq2N$ every gate promotes $\LOF$ to universality. However, in the physically relevant case  of half-filling \cite{Greiter1991}, a more interesting ''onion'' structure  appears. In the case (b) addition of an extra gate of the form $kW$, where $k\in\LOF$ and gate $W$ (describing particle-hole transformation in $\Hfer$) gives  the gate-set $\LOF \cup \LOF \cdot W$ (it is crucial here that $W$ commutes with $\L_f$  and that  conjugation by $W$ leaves $\LOF$ invariant). The further possibilities are described, similarly to the bosonic case, by  the commutation properties of $V\ot V$ with $\LOF$. If  $d$ is not divisible by four we have $G_f=\langle \T(\Hfer), \SO(\Hfer) \rangle$. On the other hand, for $d$ divisible by four,  $G_f=\langle \T(\Hfer), \USP(\Hfer) \rangle$. The corresponding bilinear forms are defined by inner products with $\ket{\Psi_f}$.  Using Theorem~\ref{thm:FERMGATES}, one can  efficiently obtain results on  extensions to universality, as the one given by  the next example.

\begin{exa} For any non-quadratic Hamiltonian $M$ containing only two-mode terms, the generated group $\langle \LOF, M\rangle$ is the entire unitary group $\U\left(\Hfer\right)$.
\end{exa}

\begin{figure}[!t]
\centering
\includegraphics[width=0.5\columnwidth]{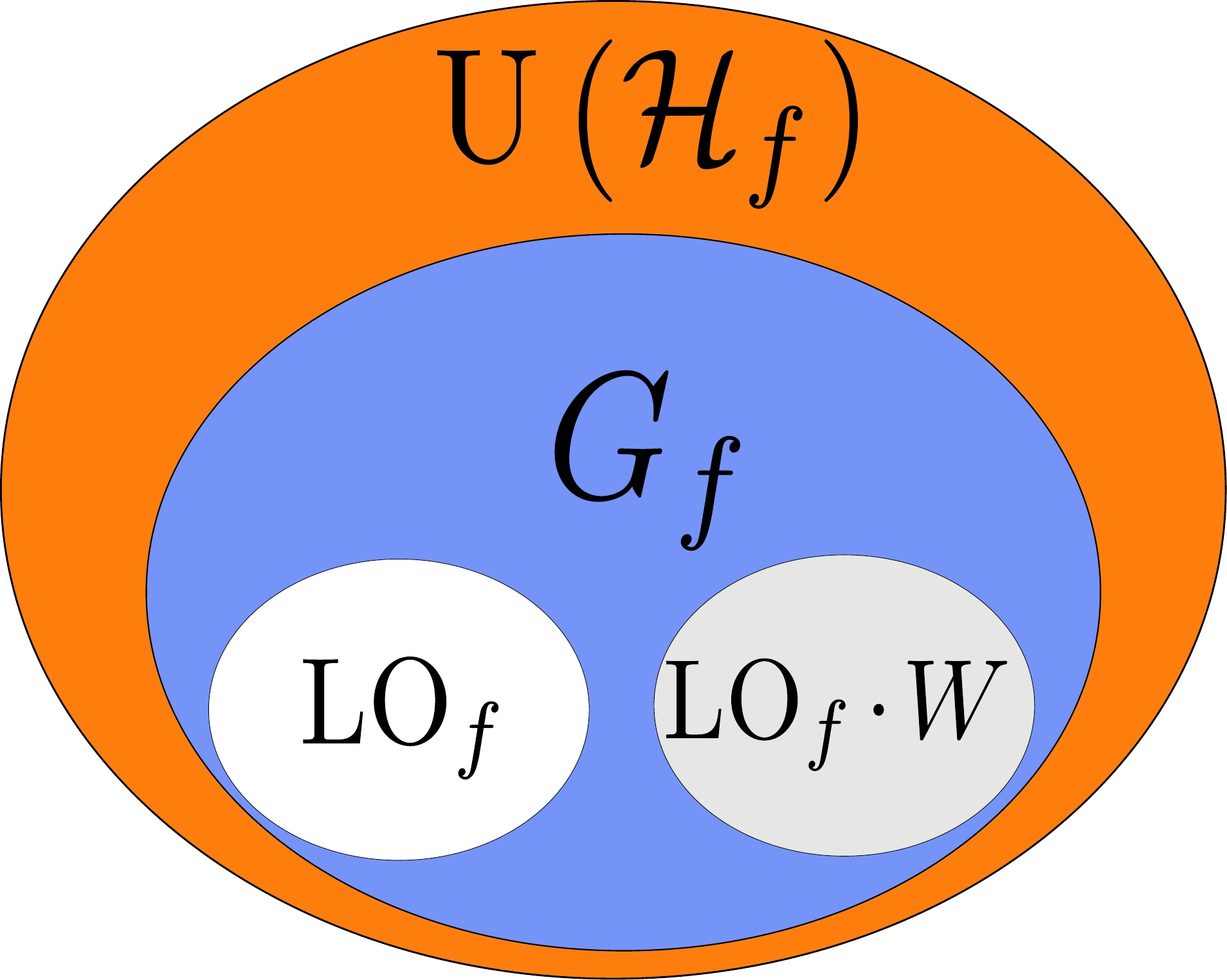}
\caption{
A pictorial presentation of most complicated chain of group inclusions that can be realized for the problem (ii) in the considered scenarios. For passive fermionic linear optics $\LOF$ in the half-filling case ($d=2N$) we have $ \LOF\subset \LOF\cup \LOF\cdot W\subset G_f \subset \U\left(\Hfer\right)$, where $W=\prod_{i=1}^d (f_i + f^\dagger_i)$, and the ''middle'' group $G_f$ is defined by the condition $\left[U\ot U, \L_f \right]=0$. }
\label{fig:result}
\end{figure}

Hamiltonians that are not composed of two-mode terms are also  often studied. One typical family of these are the so-called correlated hopping Hamiltonians, where the hopping-term between two sites is multiplied  with number operators belonging to other sites. For such Hamiltonians universality is not guaranteed:
 
 \begin{exa}Consider the correlated hopping Hamiltonian 
 \begin{align}
 Y=\sum_{j=1}^{d/2-1}& (\hat{n}_{2j}-\hat{n}_{2j+2})^2(f_{2j-1}^{\dagger}f_{2j+1}^{\phantom\dagger}+ f_{2j+1}^{\dagger}f_{2j-1}^{\phantom\dagger})\, + \nonumber
 \\&(\hat{n}_{2j-1}{-}\hat{n}_{2j+1})^2(f_{2j}^{\dagger} f_{2j+2}^{\phantom\dagger} + f_{2j+2}^{\dagger}f_{2j}^{\phantom\dagger}). \label{eq:corr_hopp}
 \end{align}
acting on $\Hfer$ for the case of half filling ($d=2N$).  Then, we have the following situations 
\begin{itemize}
\item[(a)] for even $N$: $\langle \LOF, Y \rangle = \langle \T(\Hfer), \SO(\Hfer) \rangle$; 
\item[(b)] for odd $N$:  $\langle \LOF, Y \rangle = \langle \T(\Hfer), \USP(\Hfer) \rangle$.
\end{itemize}
For odd $N$ the Hamiltonian $Y$ together with $\LOF$ ensures full controllability on the set of pure states on $\Hfer$. However, for even $N$ this is not the case. The above statements are even true for each term appearing in sum Eq.~\eqref{eq:corr_hopp}. The correlated hopping Hamiltonian $Y$ often appears (in a relabeled form) in the literature on extended Hubbard models \cite{DM2013}.
\end{exa}

Our last result concerns the extension problem for active fermionic linear optics.
\begin{thm}[Extensions of active fermionic linear optics via additional gate]\label{thm:FLOGATES}
. Let $\langle \FLO, V\rangle$ be the group of transformations generated by active linear optics and $V$ acting in positive-parity Fock subspace $\Hfree$ with $d>3$ modes. For $d=2k$ we define:
\begin{equation}\label{eq:LFLO}
\L_\FLO{=} \frac{1}{2^{d(2d-1)}}  \prod_{1\leq i<j\leq 2d} \left(I\ot I + m_i m_j\ot m_i m_j \right)  .
\end{equation}
We have the following possibilities:
\begin{itemize}
\item[(a)] If $d\neq 2k$, then $\langle \FLO, V\rangle=\U\left(\Hfree\right)$ 
\item[(b)] If $d=2k$, and $[V\ot V, \L_{\FLO}]=0$, then 
\[
\langle \FLO, V\rangle = G_{\FLO}=\lbrace \left. U\in \U\left(\Hfree \right) \right|    [U\ot U, \L_{\FLO}]=0\ \rbrace.
\]
\item[(c)] If $d=2k$ and $[V\ot V, \L_{\FLO}]\neq 0$, then  $\langle \FLO, V\rangle=\U\left(\Hfree\right)$.
\end{itemize}
\end{thm}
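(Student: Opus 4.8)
\emph{Proof proposal.} The plan is to follow the scheme used for Theorems~\ref{thm:BOSGATES} and~\ref{thm:FERMGATES}: first recast the statement as a classification of the Lie subalgebras sitting between $\Lie(\FLO)$ and $\u(\Hfree)$, solve that using Dynkin's classification of maximal subalgebras of simple Lie algebras, and then lift the result back to the level of groups. The starting point is to identify $\Lie(\FLO)$. The Hamiltonians $\sum_{k<l}h_{kl}m_km_l$ with $h$ real antisymmetric, restricted to $\Hfree$, span a representation of $\so(2d)$ (quadratic Majorana operators realise $\so(2d)$), and the $\exp(\ii\phi)\I$ factor contributes $\ii\R\I$; hence $\Lie(\FLO)=\rho_+(\so(2d))\oplus\ii\R\I$, where $\rho_+$ denotes the half-spinor representation carried by the positive-parity subspace $\Hfree\cong\C^{2^{d-1}}$. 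For $d>3$ the algebra $\so(2d)$ is simple and $\rho_+$ is faithful and irreducible, so $\rho_+(\so(2d))=[\Lie(\FLO),\Lie(\FLO)]$ is semisimple and $\ii\R\I$ is the centre. By Schur's lemma the centraliser of $\FLO$ in $\U(\Hfree)$ is $\T(\Hfree)$, so $N_{\U(\Hfree)}(\FLO)$ consists of $\FLO$ together with any conjugations realising outer automorphisms of $\so(2d)$ that preserve the isomorphism class of $\rho_+$; since the nontrivial diagram automorphism of $D_d$ exchanges the two half-spinors, one checks that $N_{\U(\Hfree)}(\FLO)=\FLO$ for $d>4$, while for $d=4$ triality adds exactly one extra component (conjugation by an orthogonal reflection).

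I would then set $G\DEF\langle\FLO,V\rangle$ for $V\notin\FLO$, $\mathfrak g\DEF\Lie(G)$, $\mathfrak g'\DEF[\mathfrak g,\mathfrak g]$. As $G$ is a compact subgroup of $\U(\Hfree)$ containing the connected group $\FLO$, $\mathfrak g$ is reductive with centre $\ii\R\I$ and $\mathfrak g'$ is semisimple with $\rho_+(\so(2d))\subseteq\mathfrak g'\subseteq\su(2^{d-1})$. If $V$ does not normalise $\FLO$ then $\Ad_V(\Lie(\FLO))\neq\Lie(\FLO)$, so $\mathfrak g\supsetneq\Lie(\FLO)$ and $\mathfrak g'\supsetneq\rho_+(\so(2d))$; the only other possibility, $V$ normalising $\FLO$ with $V\notin\FLO$, occurs solely for $d=4$ and is settled by the first paragraph (then $G=\FLO\cup\FLO\cdot V=G_\FLO$). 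So everything reduces to classifying the semisimple $\mathfrak g'$ with $\rho_+(\so(2d))\subsetneq\mathfrak g'\subseteq\su(2^{d-1})$.

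This is the heart of the argument and where Dynkin's tables enter. The reality type of $\rho_+$ is complex for $d$ odd, and self-dual for $d$ even (real when $d\equiv0\pmod{4}$, quaternionic when $d\equiv2\pmod{4}$). For $d$ odd, $\rho_+$ preserves no nondegenerate bilinear form and the pair $(\so(2d),\rho_+)$ is, for $d>3$, not among Dynkin's finitely many exceptions, so $\rho_+(\so(2d))$ is maximal in $\su(2^{d-1})$; hence $\mathfrak g'=\su(2^{d-1})$, $\mathfrak g=\u(\Hfree)$, and $G=\U(\Hfree)$ — case~(a). For $d$ even, $\rho_+(\so(2d))$ is maximal in the stabiliser $\mathfrak k_{\mathrm{form}}\in\{\so(\Hfree),\usp(\Hfree)\}$ of the invariant bilinear form encoded by $\ket{\Psi_\FLO}$ — so that $\L_\FLO=\kb{\Psi_\FLO}{\Psi_\FLO}$ is the projector onto the one-dimensional $\so(2d)$-invariant subspace of $\Hfree\ot\Hfree$ — and $\mathfrak k_{\mathrm{form}}$ is in turn maximal in $\su(\Hfree)$; checking the tables once more shows there are no further intermediate semisimple algebras, with the hypothesis $d>3$ precisely eliminating the low-rank coincidences ($d=3$ has $\FLO=\U(\Hfree)$ already, and $d=4$ is the triality case of the first paragraph). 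Thus $\mathfrak g'\in\{\mathfrak k_{\mathrm{form}},\su(\Hfree)\}$ for $d$ even. To conclude, note that $[V\ot V,\L_\FLO]=0$ is equivalent to $V\ot V$ preserving $\C\ket{\Psi_\FLO}$, i.e.\ to $V$ preserving that bilinear form up to a phase, i.e.\ to $V\in G_\FLO$; since $\FLO\subseteq G_\FLO$ this forces $G\subseteq G_\FLO$, hence $\mathfrak g\subseteq\ii\R\I\oplus\mathfrak k_{\mathrm{form}}$ and $\mathfrak g'=\mathfrak k_{\mathrm{form}}$, and a short check of connected components — using that $G_\FLO$ is the normaliser of its identity component $\langle\T(\Hfree),K_{\mathrm{form}}\rangle$ with $K_{\mathrm{form}}\in\{\SO(\Hfree),\USP(\Hfree)\}$ — gives $G=G_\FLO$, which is case~(b). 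If instead $[V\ot V,\L_\FLO]\neq0$, then $V\notin G_\FLO$, so $\mathfrak g\not\subseteq\ii\R\I\oplus\mathfrak k_{\mathrm{form}}$, forcing $\mathfrak g'=\su(\Hfree)$ and $G=\U(\Hfree)$ — case~(c).

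The main obstacle is the Lie-algebra classification: one has to extract from Dynkin's tables that the half-spinor representation of $\so(2d)$ has exactly the claimed chain of intermediate semisimple subalgebras for every $d>3$, and control the exceptional small ranks — especially the triality of $D_4$, which is why the $d=4$ case still collapses into cases~(b)/(c) even though $\rho_+(\so(8))$ there already coincides with $\so(\Hfree)$. A secondary technical nuisance, exactly as in Theorems~\ref{thm:BOSGATES}--\ref{thm:FERMGATES}, is the bookkeeping of connected components when lifting $\mathfrak g$ back to $G$: one must pin down $G_\FLO$ precisely as $\{U\in\U(\Hfree):[U\ot U,\L_\FLO]=0\}$ and verify that it is the full normaliser of its identity component.
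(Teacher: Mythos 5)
Your proposal is correct and follows essentially the same route as the paper: reduce to the group generated by $\Pi_{\FLO}^+(\Spin(2d))$ and $V$, invoke Dynkin's maximality of the half-spinor image in $\su(\Hfree)$ for odd $d$ or in $\so(\Hfree)/\usp(\Hfree)$ according to $d \bmod 4$ for even $d$, translate $[V\ot V,\L_\FLO]=0$ into preservation of the invariant bilinear form encoded by $\ket{\Psi_\FLO}$, and handle non-normalising $V$ via the strict Lie-algebra enlargement of Lemma~\ref{lem:normEXT}. The only genuine divergence is your treatment of the $d=4$ normaliser via triality (an extra reflection component), where the paper instead invokes Lemma~\ref{lem:normFLO}; since that extra element commutes with $\L_\FLO$ and hence lies in $G_\FLO$, both treatments reach the same conclusion.
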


In the above result be have omitted cases $d\leq 3$ as for them $\FLO$ is itself the full unitary group on $\Hfree$ \cite{Melo2013}. If the number of modes is not even, than every gate promotes $\FLO$ to universality in $\Hfree$. If the number of modes is even, then an additional gate either (c) promotes $\FLO$ to universality or extends it to the group $G_\FLO$ (b). Furthermore, analogously to the case of passive fermionic optics, if $d$ is divisible by four $G_\FLO=\langle \T(\Hfree), \SO(\Hfree) \rangle$ and if this is not the case $G_\FLO=\langle \T(\Hfree), \USP(\Hfree) \rangle$ (a detailed description of the corresponding orthogonal and symplectic forms is given in Appendix \ref{ap:aux}).

\begin{exa}\label{ex:FLO}For arbitrary number of modes $d>3$ the interaction $H_{\mathrm{in}}=m_1 m_2 m_3 m_3$ extends $\FLO$ to universality in $\Hfree$ i.e. $\langle \FLO , H_{\mathrm{in}} \rangle = \U(\Hfree)$. This result suggest that time-independent hamiltonain $\H_{\mathrm{in}}$  together with linear optics allows to perform efficient quantum computation (if the standard occupation-number measurements are allowed) \cite{Brod2014,Bravyi2016} .
\end{exa}

\paragraph*{Discussion---}
In this letter, we presented a comprehensive treatment of the extension problems for various classes of linear optical gates for bosons and fermions. The resulting behavior is surprisingly rich and critically depends on the number of modes and number of particles present in the system.  However, there is a number of interesting problems we did not addressed here. First, it would be interesting to analyze which extra gates or Hamiltonians allow for the most efficient control \cite{Nielsen2006} or the  efficient approximation of gates from the appropriate unitary group \cite{Harrow2002}. Another important problem concerns the robustness of the extra gate or Hamiltonian to the noise that inevitably affects any quantum system.  In future works we also plan to use our results to study (computational) universality  of classically simulable models of computation supported on fermionic systems \cite{Terhal2002} and Machgates \cite{Valiant2002,Knill2001a,Jozsa2008,Brod2016}.

\begin{acknowledgments} 
We thank Antonio Ac\'in, Jens Eisert, Christian Gogolin, Adam Sawicki, and Robert Zeier for interesting and stimulating discussions. This work has been supported by the European Research Council (Consolidator Grant 683107/CoG QITBOX), Spanish MINECO (QIBEQI FIS2016-80773-P,  and Severo Ochoa Grant No. SEV-2015-0522), Fundaci'o Privada Cellex, Generalitat de Catalunya (Grant No. SGR 874, 875, and CERCA Programme). M.O acknowledges the support of Homing programme of the Foundation for Polish Science co-financed by the European Union under the European Regional Development Fund.

\end{acknowledgments} 

\bibliography{univref}

\appendix

\onecolumngrid
\newpage
\appendix

\part*{Appendices}

In the appendices we provide the proofs of the results stated in the main text. We organize this part as follows. In Appendix \ref{ap:Lie}, the necessary prerequisites from the theory of Lie algebras and Lie groups are  given. Then, in Appendix \ref{ap:LieEx}, we present the relation between considered classes of linear optical gates and the irreducible representations of particular (semi-)simple Lie groups. In Appendix \ref{ap:aux}, we collect the auxiliary  results needed in the proofs of the theorems and the computations in examples given in the main text. In Appendix \ref{ap:examples} we proceed with the the computations related to the examples. Finally, in Appendix \ref{ap:mainTheorems} we give proofs of the main technical theorems.

Before we continue let us us first introduce the notation that will be  used in latter parts of the Appendix.

\begin{table}[h]
\begin{centering}
\begin{tabular}{|c|c|}
\hline
\textbf{Symbol} & \textbf{Explanation}\tabularnewline
\hline
$d$ & dimension of the local Hilbert space (number of modes)\tabularnewline
\hline
$\Hbos$ & Hilbert space of $N$ bosons in $d$ modes  \tabularnewline
\hline
$\Hbos$ & Hilbert space of $N$ fermions in $d$ modes  \tabularnewline
\hline
$\Hfock$ &  fermionic Fock space for  $d$  modes  \tabularnewline
\hline
$\Hfree$ &  positive-parity subspace of fermionic Fock space for  $d$ modes  \tabularnewline
\hline
$\LOB$ &  passive bosonic linear optics acting on $N$ bosons in  $d$ modes  \tabularnewline
\hline
$\LOF$ &  passive fermionic linear optics acting on $N$ fermions in  $d$ modes  \tabularnewline
\hline
$\FLO$ &  active fermionic linear optics acting on $\Hfree$ for $d$ modes \tabularnewline
\hline
$\SU(\H)$ &  special unitary group on Hilbert space $\H$  \tabularnewline
\hline
$\SO(\H)$ &  special orthogonal group on Hilbert space $\H$  \tabularnewline
\hline
$\USP(\H)$ &  unitary symplectic group on Hilbert space $\H$  \tabularnewline
\hline
$\SU(d)$ &  special unitary group on $\C^d$  \tabularnewline
\hline
$\Spin(2d)$ &  Spinor group of  $\R^{2d}$  \tabularnewline
\hline
$\Lie(K)$  &  Lie algebra of a Lie group $K$  \tabularnewline
\hline
$\Pi,\ \pi$ &  representations of a Lie group and  Lie algebra respectively   \tabularnewline
\hline
$\I$ & Identity operator on the relevant Hilbert space
\tabularnewline
\hline
$\T(\H)$ & Unitary gates proportional to identity on Hilbert space $\H$
\tabularnewline
\hline
$f_i\ ,\ f_{i}^\dagger$ &  fermionic anihilation and creation operators   \tabularnewline
\hline
$m_i$ &  Majorana-fermion operator   \tabularnewline
\hline
$Q$ &  fermionic parity operator acting in $\Hfock$   \tabularnewline
\hline
$\ket{n_1,n_2,\ldots,n_d}$ &  fermionic Fock state with occupation number $n_k$ in mode $k$   \tabularnewline
\hline
$[d]$ &  $d$ -element set $\lbrace1,2,\ldots,d\rbrace$  \tabularnewline
\hline

\end{tabular}

\par\end{centering}

\caption{Notation used throughout the paper.}
\end{table}

\section{Lie groups and Lie algebras}\label{ap:Lie}

Here we summarize some facts from the representation theory of Lie groups and Lie algebras that are used in the proofs of our results. This topic is a broad and fascinating one - we refer the reader to the textbooks \cite{HallGroups,FultonHarris,Humphreys1972} for a comprehensive introduction to the subject. Readers familiar with representation theory of Lie groups and Lie algebras can safely skip this part.

In this work we use extensively  \emph{compact} Lie groups, i.e., groups that are compact differential manifolds such that group multiplication is compatible with the differential structure. A unitary representation of a group Lie group $K$ in a Hilbert space $\H$ is a smooth mapping
\begin{equation}
\Pi:K\ni k \mapsto \Pi(k)\in \U(\H)\ \  \text{satisfying}\ \Pi(k_1 k_2)= \Pi(k_1)\Pi(k_2)\ \ \text{for all}\ k_1,k_2 \in K\ .
\end{equation}    
Compact Lie groups admit faithful (one-to-one) unitary  representations on finite-dimensional Hilbert spaces and therefore can be always modelled  as matrix subgroups of the unitary group $\U(\H_K)$, for a suitable Hilbert space $\H_K$. After fixing the  model of a Lie group $K$  we can define its Lie algebra as a real vector space of Hermitian operators that after exponentiation give elements from $K$,
\begin{equation}
\Lie\left(K\right)\coloneqq\SET{X\in\Herm\left(\H_K\right)}{\exp\left(\ii X\right)\in K} \ .
\end{equation} 
\begin{rem*}
Note that we decided to treat elements of the Lie algebra as Hermitian operators. Therefore for $X,Y\in \Lie(K)$ we have $\ii[X,Y]\in \Lie(K)$, instead of the property $[X,Y]\in\Lie(K)$, which occurs when elements of Lie algebra are taken to be skew-Hermitian. 
\end{rem*}

We say that a linear map $\pi:\Lie(K)\rightarrow\Herm(\H)$ is a a representation of  $\Lie(K)$ if and only if $\pi([X,Y])=[\pi(X),\pi(Y)]$, for all $X,Y\in\Lie(K)$.   A unitary representation $\Pi$ of a  Lie group $K$ in $\H$ induces the representation $\Pi_\ast$ of its Lie algebra $\Lie(K)$, by the map
\begin{equation}
\Pi_\ast: \Lie(K) \in X\mapsto \pi(X)\in \Herm(H)\ ,\ \text{defined by}\ \  \Pi_\ast(X)\DEF \left.\frac{d}{dt}\right|_{t=0} \Pi\left(\exp(\ii t X)\right)\ .
\end{equation}
%Given a representation $\pi:\Lie(K)\rightarrow \Herm(\H)$  of a Lie algebra of a \emph{simply-connected} Lie group $K$ there exist a unique representation $\Pi:K\rightarrow \U(\H)$ that induces $\pi$ in a sense that $\pi=\Pi_\ast$ \michal{[This sentence might be suppressed]}.

A unitary representation $\Pi$ of $K$ in $\H$ we can promoted it to a unitary representation (usually dedonetd by $\Pi\ot\Pi$) of $K$  on $\H\otimes \H$ by  setting $\left[\Pi\ot\Pi\right](k)\coloneqq \Pi(k) \ot \Pi (k)$, for $k\in K$. Likewise, one can consider a ''tensor-square'' of the representation of the Lie algebra $\pi$ by defining $\left[\pi\ot\pi\right](X)\coloneqq\pi(X)\ot\I +\I\ot\pi(X)$, for $X\in\Lie(K)$.

A representation $\Pi$ of $K$ is called \emph{irreducible} if and only if there exist no proper (different thank ${0}$ or $\H$) subspace of $\W$ of $\H$  which is invariant under the action of $\Pi$ i.e. $\Pi(k) \W \subset \W$ for all $k\in K$. Otherwise, a representation $\Pi$ is called reducible. For compact Lie groups  any reducible representation decomposes onto a direct sum of irreducible representations i.e.
\begin{equation}
\H=\oplus_i \H_i \ , \Pi=\oplus_i \Pi_i\ ,
\end{equation}
where $\Pi_i$ is an irreducible unitary representation of $K$ in $\H_i$.  Analogously, a representation of a Lie algebra is called irreducible if and only if there is no proper invariant subspace $\W$ of the Hilbert space $\H$ such that $\pi(X) \W \subset \W$, for all $X\in\Lie(K)$. An induced representation of a Lie algebra $\pi$ is irreducible if and only if the representation of  of the Lie group $K$ is irreducible. Tensor-square representations of irreducible representations (of Lie groups or Lie algebras) are in general reducible.

Finally, we define the concept of \emph{simple} Lie algebra. We say that a Lie algebra $\Lie(K)$ is simple if and only if there exist no proper invariant subspace $\mathcal{I}$ of $\Lie(K)$ which satisfies
\begin{equation}
\ii[X,\mathcal{I}] \subset \mathcal{I}\ \  \text{for all}\ \ X\in\Lie(K)\ .
\end{equation}
A Lie algebra is called \emph{semisimple} if and only if it a direct sum of simple Lie algebras.

\section{Relevant Lie groups and Lie algebras}\label{ap:LieEx}

Here we give the definitions of classical simple Lie groups and algebras that are relevant for our considerations. Then we give the relation between considered classes of linear optical gates and the representations of these groups or corresponding Lie algebras.

\subsection*{Classical groups and Lie algebras} 
\noindent
The special unitary group $\SU(d)$ consists of unitary matrices on $\C^d$ of determinant one,
  
  \begin{equation}
\SU(d)=\SET{U\in\M_{d\times d}(\C)}{U U^\dagger =\I, \mathrm{det}(U)=1} \ .
\end{equation}
Its Lie algebra, denoted by $\su (d)$ consists of traceless Hermitian matrices on $\C^d$,
\begin{equation}
\su(d)=\SET{X\in\M_{d\times d}(\C) }{X =X^\dagger, \tr (X)=0}  \ .
\end{equation}
In the similar way we define special orthogonal group $\SO(d)$ and its Lie algebra $\so(d)$
  \begin{gather}
\SO(d)=\SET{O\in\M_{d\times d}(\R)}{O O^T =\I, \mathrm{det}(O)=1} \ , \\
\so(d)=\SET{X\in \ii \M_{d\times d}(\R) }{X =- X^T, \tr (X)=0}  \ .
\end{gather}
Lastly, in order to define unitary symplectic group $\USP(d)$ and its Lie algebra $\usp(d)$ we need to introduce the matrix $\J$ 
\begin{equation}
\J=\bigoplus_{i=1}^d \left(\begin{array}{cc}
0 & -1 \\ 
1 & 0
\end{array} \right)  \ ,
\end{equation}
which induces the symplectic form $B_a$ of $\C^{2d}$ by the expression
\begin{equation}
B_a (\ket{v},\ket{w}) =v^T \J w \ , 
\end{equation}
where $w,v$ in the right-hand side of above expression are column vectors consisting of components of vectors $\ket{v},\ket{w}\in\C^{2d}$ in the standard basis. Unitary symplectic group $\USP(d)$ consists of unitary matrices in $\C^{2d}$ that preserve this form,
\begin{gather}
\USP(2d)=\SET{O\in\M_{2d\times 2d}(\C)}{U U^{\dagger} =\I, U^T \J U = \J} \ , \\
\usp(2d)=\SET{X\in\M_{2d\times 2d}(\C) }{X =X^\dagger, X^T \J + \J X =0 }  \ .
\end{gather}
In what follows we will adopt the notation: $\SU(\H)$, $\su(\H)$, $\SO(\H)$, $\so(\H)$, $\USP(\H)$ and $\usp(\H)$, when we talk about the appropriate groups or Lie algebras on the abstract (finite-dimensional) Hilbert space $\H$.

\subsection*{Linear optical groups and Lie algebras}
\noindent
Lie algebras of linear optical groups discussed in the main text  are formally given by   
\begin{itemize}

\item for passive bosonic linear optics $\LOB$
\begin{equation} \label{eq:LieBOS}
	\Lie\left(\LOB \right)=\SET{ \left. \left(h\ot\I \ot\ldots \ot \I +
	\ldots + \I \ot \ldots \ot \I\ot h\right)\right|_{\Hbos}}{
	\ h\in\Herm\left(\C^d\right) }\ ; 
	\end{equation}
	\item for passive fermionic linear optics $\LOF$
		\begin{equation}
\Lie\left(\LOF \right)=\SET{ \left. \left(h\ot\I \ot\ldots \ot \I +
	\ldots + \I \ot \ldots \ot \I\ot h\right)\right|_{\Hfer}}{
	\ h\in\Herm\left(\C^d\right) }e \ ; 
	\end{equation}	
	\item for active fermionic linear optics $\FLO$	
		\begin{equation}
	\Lie\left(\FLO \right)=\SET{\left. \ii\left(\sum_{kl}h_{kl} m_k m_l +\alpha Q \right)\right|_{\Hfree}}{ h_{kl}=-h_{lk}\ ,\ h_{kl}\in\R,\ \alpha\in\R}
	\  .
	\end{equation}

We have included the term proportional to $\left. Q \right|_{\Hfree}$ as it generates unitary gates proportional to identity on $\Hfree$.\noindent
\end{itemize}

\noindent
The above Lie algebras after removing the center (consisting of hermitian operators proportional to identity on the appropriate Hilbert space) give the irreducible representations of simple Lie algebras. These Lie algebra representations are related to the representations of simply-connected compact simple Lie groups that we describe below (see Chapter 2 of \cite{Oszmaniec2014c}  for a detailed explanation of how these Lie algebras fit into a general picture of representation theory).

 For passive bosonic linear optics $\LOB$  acting on the space of $N$ bosons in $d$ modes, $\Hbos$, we have the associated representation of $\SU(d)$  which we denote by $\Pi_b$. we have
\begin{equation}
\SU(d)\ni U \mapsto \Pi_b(U)= \left.U^{\ot N}\right|_{\Hbos}\in\U(\Hbos) \ .
\end{equation}
The induced representation of Lie algebra $\mathfrak{su}(d)$ is denoted by $\pi_b$ and is given by
\begin{equation}
\su(d)\ni X \longmapsto \pi_b(X) = \left. \left(X\ot\I \ot\ldots \ot \I +
	\ldots + \I \ot \ldots \ot \I\ot X\right)\right|_{\Hbos} \ .
\end{equation}

\begin{rem}
For $d=2$ and $N$ particles the representation $\Pi_b$ corresponds, in the standard physics notation,  to the representation of $\SU(2)$ with the total spin $j=\frac{N}{2}$.
\end{rem}

Analogously, for passive fermionic linear optics $\LOF$  acting on the space of $N$ fermions in $d$ modes, $\Hfer$, we have the  representation of $\SU(d)$  which we denote by $\Pi_f$. It is defined by:
\begin{equation}
\SU(d)\ni U \mapsto \Pi_f(U)= \left.U^{\ot N}\right|_{\Hfer} \in\U(\Hfer)\ ,
\end{equation}
and the induced representation of Lie algebra $\mathfrak{su}(d)$ is denoted by $\pi_f$. It follows that
\begin{equation}
\su(d)\ni X \longmapsto \pi_f(X) = \left. \left(X\ot\I \ot\ldots \ot \I +
	\ldots + \I \ot \ldots \I\ot X\right)\right|_{\Hfer} \ .
\end{equation} 
For the active fermionic linear optics we have an irreducible spinor representation of the Lie algebra $\so(2d)$ in $\Hfree$  denoted by $\pi_{\FLO}^+$. We start by describing a representation $\pi_{\FLO}$ that acts in the full Fock space $\Hfock$. We define this representation by its action on the orthogonal (with respect to Hilbert-Schmidt inner product) basis of $\su(2d)$ of the form (see Chapeter 2 of \cite{Oszmaniec2014c} for details)
\begin{equation}\label{eq:BASISso2D}
E_{kl}=\ii (\kb{k}{l} - \kb{l}{k})\ ,\ 1\leq k < l \leq 2d \ .  
\end{equation}
We have
\begin{equation}\label{eq:SpinorREP}
\pi_{\FLO}(E_{kl})= \frac{\ii}{2}\ m_k m_l\ ,\ 1\leq k < l \leq 2d \ . 
\end{equation}
This representation  decomposes onto two irreducible components supported on positive and negative parity subspaces of the Fock space. We denote the restriction of $\pi_{\FLO}$ to $\Hfree$ by $\pi_{\FLO}^+$. Formally, we have
\begin{equation}
\pi_{\FLO}^+\left(\so(2d)\right) =  \SET{\left. \ii\left(\sum_{kl}h_{kl} m_k m_l \right)\right|_{\Hfree}}{ h_{kl}=-h_{lk}\ ,\ h_{kl}\in\R,\ \alpha\in\R} \ .
\end{equation}
The corresponding representations of the group $\Spin(2d)$ (double-cover of  $\SO(2d)$) in $\Hfree$ and $\Hfock$ are denoted by $\Pi_{\FLO}^+$ and $\Pi_{\FLO}$ respectively.

\section{Auxiliary technical results}\label{ap:aux}

In this part we state and prove a number of auxiliary results that are necessary for the computations concerning the examples (see Appendix \ref{ap:examples}) and in the proofs of the main results (see Appendix \ref{ap:mainTheorems}).

\begin{lem}[Projection for the singlet subspace for the doubled bosonic representation of $\SU(2)$)\label{Projbosons}] 	Let $d=2$ and let $\Pi_b \ot \Pi_b$ be the representation of $\SU(2)$ in $\Hbos\ot\Hbos$ induced from representation $\Pi_b$ on $\Hbos$.  Then, in the decomposition of $\Hbos\otimes\Hbos$ onto irreducible components there always appears a a single trivial representation of $\SU(2)$. Moreover, the projection onto this representation is given by
	\begin{equation}\label{eq:projBOS}
	\P_{b}=\frac{1}{N+1}\L_b=\frac{1}{N+1} \kb{\Psi_b}{\Psi_b}  \ ,
	\end{equation}
	where 
	\begin{equation}\label{eq:PsiBapp}
	\ket{\Psi_b}=\sum_{k=0}^N (-1)^k  \ket{D_k}\ket{D_{N-k}} \ .
	\end{equation}
\end{lem}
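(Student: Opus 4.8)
The plan is to use the structure of the representation $\Pi_b$ of $\SU(2)$ on $\Hbos$ as the spin-$j$ irrep with $j = N/2$, together with Schur's lemma and standard Clebsch--Gordan theory. First I would recall that for $d = 2$ the representation $\Pi_b$ is unitarily equivalent to the irreducible representation of $\SU(2)$ with spin $j = N/2$, acting on a space of dimension $N+1$, and that the Dicke states $\ket{D_k}$, $k = 0, \dots, N$, form an orthonormal basis that (up to normalization conventions) realizes the standard angular-momentum basis $\ket{j, m}$ with $m = k - N/2$. The tensor product $\Hbos \otimes \Hbos$ then decomposes as $\bigoplus_{J=0}^{N}$ of the spin-$J$ irreps (integer $J$ ranging from $0$ to $2j = N$), so the trivial representation (spin $0$) appears exactly once. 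This already gives the first assertion; by Schur's lemma, the projector $\P_b$ onto that one-dimensional trivial isotypic component is rank one, $\P_b = \kb{\Omega}{\Omega}$ for a unit vector $\ket{\Omega}$ that is the (unique up to phase) $\SU(2)$-invariant vector in $\Hbos \otimes \Hbos$.

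Next I would identify the invariant vector explicitly. The invariant vector in a tensor product of two copies of a spin-$j$ irrep is, up to normalization, the image of the $\SU(2)$-invariant antisymmetric form on $\mathbb{C}^2$ raised to the appropriate symmetric power; concretely, using $\Hbos = \mathrm{Sym}^N(\mathbb{C}^2)$ and the singlet $\ket{\epsilon} = \ket{0}\ket{1} - \ket{1}\ket{0} \in \mathbb{C}^2 \otimes \mathbb{C}^2$, the vector $\ket{\epsilon}^{\otimes N}$, suitably (anti)symmetrized and reorganized as an element of $\mathrm{Sym}^N(\mathbb{C}^2) \otimes \mathrm{Sym}^N(\mathbb{C}^2)$, is invariant. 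Expanding this in the Dicke basis produces exactly $\sum_{k=0}^{N} (-1)^k \binom{N}{k} \ket{D_k}\ket{D_{N-k}}$ or, depending on the normalization of the $\ket{D_k}$, the cleaner form $\sum_{k=0}^N (-1)^k \ket{D_k}\ket{D_{N-k}}$; I would fix conventions so that this matches $\ket{\Psi_b}$ in Eq.~\eqref{eq:PsiBapp}. The sign alternation is forced by the action of the raising/lowering operators: requiring $(J_+ \otimes \I + \I \otimes J_+)\ket{\Psi_b} = 0$ gives a two-term recursion whose solution is $(-1)^k$ times the correct combinatorial weight.

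It then remains to normalize. Since $\P_b$ is the orthogonal projector onto $\mathrm{span}\{\ket{\Psi_b}\}$, we have $\P_b = \kb{\Psi_b}{\Psi_b}/\bk{\Psi_b}{\Psi_b}$, so the claimed identity $\P_b = \frac{1}{N+1}\L_b$ is equivalent to $\bk{\Psi_b}{\Psi_b} = N+1$. This I would verify by a direct computation: with the orthonormality of the Dicke states, $\bk{\Psi_b}{\Psi_b} = \sum_{k=0}^N |(-1)^k|^2 \|\ket{D_k}\|^2 \|\ket{D_{N-k}}\|^2 = \sum_{k=0}^N 1 = N+1$, using the convention in which the $\ket{D_k}$ are unit vectors (which is the convention implicit in Eq.~\eqref{eq:PsiBapp} as written). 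The main obstacle — really the only delicate point — is bookkeeping of normalization conventions: whether the $\ket{D_k}$ are taken as unit vectors or as the "bosonic" basis vectors $\frac{1}{\sqrt{k!(N-k)!}}(a^\dagger)^k (b^\dagger)^{N-k}\ket{0}$, since this changes the combinatorial prefactors in $\ket{\Psi_b}$ and hence the value of the norm. I would state the convention being used up front and carry it consistently, so that the stated formula $\P_b = \frac{1}{N+1}\L_b$ comes out exactly.
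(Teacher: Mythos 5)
Your proposal is correct and follows essentially the same route as the paper: multiplicity one of the trivial component via standard addition of angular momentum for two spin-$\tfrac{N}{2}$ irreps, and identification of $\ket{\Psi_b}$ by checking that it is annihilated by the diagonal $\su(2)$ action (your $J_+$ recursion $c_{k+1}=-c_k$ is exactly the paper's verification that $\left[\pi_b\ot\pi_b(\sigma_\alpha)\right]\ket{\Psi_b}=0$ in the Dicke basis). Your explicit normalization $\bk{\Psi_b}{\Psi_b}=N+1$ is a welcome addition that the paper leaves implicit.
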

\begin{proof}
	The fact that in the decomposition of $\Hbos\otimes\Hbos$ there is always a single trivial representation follows from the standard rules of addition of angular momentum \cite{Sakurai2011}.  The formula \eqref{eq:projBOS} can be derived using techniques involving Young diagrams. However, checking whether $\ket{\Psi_b}$ actually belongs to the trivial representation of $\SU(2)$ can be done easily by inspecting the action of $\Lie(\SU(2))$ on it.  This action comes from the fact that just like $\SU(2)$, its Lie algebra is represented in $\Hbos$ via standard angular momentum representation $\pi_b$.  The standard basis of $\SU(2)$ is represented in $\Hbos\ot\Hbos$ via 
	\begin{equation}
	\pi_b \ot \pi_b(\sigma_x)=J_x\ot\I+\I\ot J_x \ , \pi_b \ot\pi_b (\sigma_y)=J_y\ot\I+\I\ot J_y \ ,\ \pi_b \ot\pi_b (\sigma_z)=J_z\ot\I+\I\ot J_z \ .
	\end{equation}
	where $J_{x,y,z}$ denote the standard angular momentum operators (for spin $s=\frac{N}{2}$). 
	Now it suffices to show that 
	\begin{equation}\label{eq:BOScond}
		\left[\pi_b \ot \pi_b(\sigma_\alpha)\right]\ket{\Psi_b}=0\ ,
	\end{equation}
	for $\alpha=x,y,z$. Equation \eqref{eq:BOScond} can be verified easily using the standard algebraic properties of operators $J_{x,y,z}$ (recall that the Dicke basis is exactly the standard  "angular momentum" basis in which $J_z$ is diagonal.
\end{proof}

\begin{lem}(Projection for the singlet subspace for the doubled fermionic half-filling representation of $\SU(d)$) \label{lem:Projfermions} 	
Let $d=2N$ and let  and let $\Pi_f \ot \Pi_f$ be the representation of $\SU(d)$ in $\Hfer \ot \Hfer$ induced from representation $\Pi_f$ on $\Hfer$. In the decomposition of $\Hfer\otimes\Hfer$ onto irreducible components there always appears a a single trivial representation of $\SU(d)$. Moreover, the projection onto this representation is given by
\begin{equation}\label{eq:projFER}
\P_{f}=\L_f= \kb{\Psi_f}{\Psi_f}  \ ,
\end{equation}
where 
\begin{equation}\label{eq:PsiFER1}
\ket{\Psi_f}=\ket{1}\wedge\ket{2}\wedge\ldots\wedge\ket{2N} \ .
\end{equation}
\end{lem}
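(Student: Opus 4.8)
The plan is to mirror the proof of Lemma~\ref{Projbosons} line by line, replacing the $\SU(2)$ angular-momentum machinery by the $\SU(d)$ action on $\Hfer=\wedge^N(\C^d)$ with $d=2N$. First I would settle the existence-and-uniqueness claim: the representation $\Pi_f$ of $\SU(d)$ on $\wedge^N\C^d$ has highest weight $\omega_N$ (the $N$-th fundamental weight), so $\Pi_f\ot\Pi_f$ contains the trivial representation if and only if $\wedge^N\C^d$ is self-dual, i.e.\ isomorphic to its own dual $\wedge^{d-N}\C^d\cong\wedge^N\C^d$ — which holds precisely because $d=2N$. Uniqueness of the trivial summand follows because the multiplicity of the trivial representation in $V\ot V$ equals $\dim\mathrm{Hom}_{\SU(d)}(V^*,V)$, and $V=\wedge^N\C^d$ is irreducible, so by Schur's lemma this multiplicity is $1$. (Alternatively one can cite the Littlewood--Richardson / Young-diagram rules exactly as the bosonic lemma cites angular-momentum addition.)

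Next I would identify the invariant vector explicitly. The $\SU(d)$-invariant bilinear form on $\wedge^N\C^d$ is, up to scale, the wedge pairing $\wedge^N\C^d\times\wedge^N\C^d\to\wedge^{2N}\C^d=\wedge^d\C^d\cong\C$, $(\alpha,\beta)\mapsto\alpha\wedge\beta$, which is $\SU(d)$-invariant because $U\in\SU(d)$ acts trivially on $\wedge^d\C^d$ (determinant one). Writing out this pairing in the standard basis $\{\ket{S}:=\ket{i_1}\wedge\cdots\wedge\ket{i_N}\}_{S\subset[2N],|S|=N}$ of $\Hfer$, the pairing of $\ket{S}$ with $\ket{T}$ is zero unless $T=[2N]\setminus S$, in which case it equals $\mathrm{sgn}(S,T)=\pm1$. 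The corresponding invariant vector in $\Hfer\ot\Hfer$ (obtained by raising indices with the Hilbert-Schmidt/standard inner product, under which $\{\ket{S}\}$ is orthonormal) is therefore $\sum_{S}\mathrm{sgn}(S,[2N]\setminus S)\,\ket{S}\ot\ket{[2N]\setminus S}$, which — after reordering each $\ket{[2N]\setminus S}$ — is exactly the expansion of $\ket{1}\wedge\ket{2}\wedge\cdots\wedge\ket{2N}$ regarded as an element of $\wedge^N\C^d\ot\wedge^N\C^d$ via the coproduct $\Delta:\wedge^{2N}\to\bigoplus_k\wedge^k\ot\wedge^{2N-k}$ restricted to the $(N,N)$-component. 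So $\ket{\Psi_f}$ as defined in \eqref{eq:PsiFER1} is (proportional to) the unique invariant vector.

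Finally I would verify $\|\ket{\Psi_f}\|=1$, which gives $\P_f=\L_f=\kb{\Psi_f}{\Psi_f}$ with no normalization prefactor (contrast the bosonic $\tfrac1{N+1}$). Indeed $\braket{\Psi_f}{\Psi_f}=\sum_{|S|=N}\mathrm{sgn}(S,[2N]\setminus S)^2=\sum_{|S|=N}1$ if one normalizes so that the coefficients are $\pm1$ on an orthonormal basis — but one must be careful: the wedge $\ket{1}\wedge\cdots\wedge\ket{2N}$ expanded via the coproduct has coefficients $\pm1$ on the orthonormal basis $\{\ket{S}\ot\ket{S^c}\}$ (with the $\ket{S}$ written in increasing order), and there are $\binom{2N}{N}$ such terms, so literally $\|\ket{\Psi_f}\|^2=\binom{2N}{N}$. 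Hence one should either divide $\ket{\Psi_f}$ by $\sqrt{\binom{2N}{N}}$ to make it a genuine unit vector, or — more likely, matching the statement — interpret $\ket{1}\wedge\cdots\wedge\ket{2N}$ with the normalization convention that makes it unit-norm; I would state the convention used and check the constant accordingly. As a sanity check independent of bookkeeping, one can instead verify directly, exactly as in \eqref{eq:BOScond}, that $\big(\pi_f(X)\ot\I+\I\ot\pi_f(X)\big)\ket{\Psi_f}=0$ for all $X\in\su(d)$: it suffices to check this for the Chevalley generators $E_{i,i+1},E_{i+1,i},H_i$, where the action on a wedge of basis vectors is a simple combinatorial shift, and the signs from the wedge antisymmetry produce the needed cancellations.

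The main obstacle I expect is precisely the normalization/sign bookkeeping in the last paragraph — pinning down the constant in $\P_f=\L_f$ and confirming there is genuinely \emph{no} prefactor requires being scrupulous about how $\ket{1}\wedge\cdots\wedge\ket{2N}$ is normalized as a vector in $\Hfer\ot\Hfer$ and how the wedge pairing's sign factors $\mathrm{sgn}(S,S^c)$ are tracked; the representation-theoretic content (self-duality forcing a unique trivial summand, wedge pairing giving the invariant) is routine.
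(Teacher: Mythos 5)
Your proof is correct, and it reaches the same destination by a genuinely different route on one of the two steps. For the invariance of $\ket{\Psi_f}$, both arguments ultimately rest on $\det(U)=1$ acting on $\wedge^{2N}\C^{2N}$: the paper simply embeds $\Hfer\ot\Hfer$ into $\left(\C^{2N}\right)^{\ot 2N}$, observes that $\ket{1}\wedge\cdots\wedge\ket{2N}$ is totally antisymmetric and hence lies in $\Hfer\ot\Hfer$, and computes $U^{\ot 2N}\ket{\Psi_f}=\det(U)\ket{\Psi_f}=\ket{\Psi_f}$ directly; your dualization of the wedge pairing $\wedge^N\C^d\times\wedge^N\C^d\to\wedge^{2N}\C^d$ is an equivalent but slightly more roundabout packaging of the same fact, with the bonus that it produces the explicit coefficient expansion $\sum_S\sgn(S,S^c)\ket{S}\ot\ket{S^c}$ that the paper only derives later, in Lemma \ref{lem:ProjPassiveDiff}. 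Where you genuinely diverge is the uniqueness of the trivial summand: the paper cites the known decomposition of $\wedge^N\C^d\ot\wedge^N\C^d$ from the literature on symmetric functions, whereas your argument via self-duality ($\wedge^N\C^{2N}\cong\wedge^{2N-N}\C^{2N}\cong(\wedge^N\C^{2N})^*$) plus Schur's lemma (the multiplicity of the trivial representation in $V\ot V$ is $\dim\mathrm{Hom}_{\SU(d)}(V^*,V)=1$ for irreducible $V$) is self-contained and arguably preferable, since it also explains \emph{why} the trivial summand exists only at half-filling. Finally, your worry about the normalization is legitimate and you resolve it correctly: with the standard Slater-determinant convention (wedge of orthonormal vectors is a unit vector), $\ket{\Psi_f}$ has unit norm and $\P_f=\kb{\Psi_f}{\Psi_f}$ holds with no prefactor; this is exactly what the paper's expansion $\ket{\Psi_f}=\binom{2N}{N}^{-1/2}\sum_X\sgn(X)\ket{X}\ot\ket{\bar X}$ in Lemma \ref{lem:ProjPassiveDiff} confirms, so your $\binom{2N}{N}$ bookkeeping is consistent with the paper rather than in conflict with it.
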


\begin{proof}
The decomposition of the tensor square $\Hfer\ot \Hfer$ onto irreducible representations of $\SU(d)$ (for arbitrary number of particles $N$) can be found in \cite{symmetricFunctions1998} on page 331. From the results given there it follows that for $d=2N$  in the decomposition  $\Hfer\ot \Hfer$ there exist only one trivial one dimensional representation $\H^0_f$. In other words in the case of half filling we have
\begin{equation}
\Hfer\ot \Hfer \approx \tilde{\H}_f \oplus \H^0_f \ ,
\end{equation}    
where $\tilde{\H}_f$ the orthogonal complement of $\H^0_f$ in $\Hfer\ot \Hfer$. Since $\H^0_f$ is one dimensional, the projection onto this space is specified by a single vector belonging to it. To find  this vector  we first  embed $\Hfer$ into the tensor power representation of $\SU(d)$, that is we note that
\begin{equation}
\Hfer = \bigwedge^N \left(\C^d\right) \subset \left(\C^d\right)^{\ot N} \ ,
\end{equation}\
and
\begin{equation} 
\Pi_f (U) = \left.U^{\ot N}\right|_{\Hfer}\ \text{for}\ U\in \SU(d) \ . 
\end{equation}
Under this inclusion we have $\Hfer\ot \Hfer \subset  \left(\C^d\right)^{\otimes 2N}$ and consequently
\begin{equation}\label{eq:tensorPOWERfer}
\Pi_f \ot \Pi_f = \left.U^{\ot 2 N}\right|_{\Hfer \ot \Hfer} \ .
\end{equation}
Now, $\H^0_f$ can be treated as a subspace of $\left(\C^{2N}\right)^{\otimes 2N}$ (recall that we consider the case $2N=d$). Moreover, we see that for the case of half-filling the vector  $\ket{\Psi_f}$ given in \eqref{eq:PsiFER1}  belongs to $\Hfer\ot \Hfer$. This follows form the fact that it is a totally antisymmetric vector on $\left(\C^{2N}\right)^{\otimes 2N}$. What is more, $\ket{\Psi_f}\in \H^0_f$ which is a result of \eqref{eq:tensorPOWERfer} and the standard properties of  the wedge product: For $U\in\SU(d)$ we have $\left[\Pi_f \ot \Pi_f (U)\right] \ket{\Psi_f}=U^{\ot 2N}  \ket{\Psi_f} = \mathrm{det}(U)  \ket{\Psi_f} =  \ket{\Psi_f}$, 
where in the last equality we have used the fact that  $\mathrm{det}(U)=1$ for $U\in\SU(d)$.
\end{proof}
\noindent
Let us introduce some auxiliary notation. For a subset $X$ of $d$-element set $[d]$ we define its indicator function by
\begin{equation}\label{eq:indicator}
X(i) \DEF \begin{cases}
    1\ ,& \text{if } i\in X\\
    0\ ,              & \text{if } i \notin X\end{cases}\ .
\end{equation}
We denote by  $\bar{X}$ the complement of $X$ in $[d]$.  Moreover, for the case of half-filling ($2N=d$) for a set $X=\lbrace x_1, x_2 ,\ldots, x_{N}  \rbrace $  we define $\sgn(X)=\sgn(\omega_X)$, where $\sgn(\omega_X)$ is the sign of the permutation $\omega_X: [2N]\rightarrow [2N]$ such that
\begin{itemize}
\item $\omega_X(i)$ ($i=1,\ldots,N$) are non-decreasingly ordered elements of $X$;
  \item $\omega_X(i)$ ($i=N+1,\ldots,2N$) are non-decreasingly ordered elements of $\bar{X}$. 
\end{itemize}

\begin{lem}[Convenient form of the projection onto  the trivial representation of $\SU(d)$ in $\Hfer \ot \Hfer$ for $2N=d$] \label{lem:ProjPassiveDiff}  For the case of half-filling ($N=2d$) the invariant vector $\ket{\Psi_f}\in \Hfer \ot \Hfer$ can be written in the following way

\begin{equation}\label{eq:LferVEC}
\ket{\Psi_f}=\frac{1}{\sqrt{\binom{2N}{N} }}\sum_{X\subset[2N],\  |X|=N} \sgn(X) \ket{X} \ot \ket{\bar{X}}\ ,
\end{equation}
where $\ket{X}=\ket{X(1),X(2),\ldots,X(2N)}$.
\end{lem}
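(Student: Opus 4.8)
The plan is to show that the vector
\[
\ket{\Phi}\DEF\frac{1}{\sqrt{\binom{2N}{N}}}\sum_{X\subset[2N],\ |X|=N}\sgn(X)\,\ket{X}\ot\ket{\bar X}
\]
coincides, up to the overall sign convention, with the invariant vector $\ket{\Psi_f}=\ket{1}\wedge\ket{2}\wedge\dots\wedge\ket{2N}$ of Lemma~\ref{lem:Projfermions}, viewed inside $\Hfer\ot\Hfer\subset(\C^{2N})^{\ot 2N}$. Since Lemma~\ref{lem:Projfermions} already establishes that $\ket{\Psi_f}$ spans the (one–dimensional) trivial subspace and is normalized, it suffices to (i) expand $\ket{\Psi_f}$ in the occupation–number basis $\ket{X}\ot\ket{\bar X}$, checking that only ``half/complementary–half'' terms survive, (ii) identify the coefficient of each surviving term as $\sgn(X)$ up to a global constant, and (iii) fix the normalization constant to be $1/\sqrt{\binom{2N}{N}}$.

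For step~(i) I would recall the identification of the antisymmetric subspace with Fock states: $\ket{x_1}\wedge\dots\wedge\ket{x_N}$, with $x_1<\dots<x_N$, is the Fock state $\ket{X}$ for $X=\{x_1,\dots,x_N\}$, and a general wedge of distinct basis vectors equals $\sgn$(reordering permutation) times the ordered one. Now expand the top wedge $\ket{1}\wedge\ket{2}\wedge\dots\wedge\ket{2N}\in\wedge^{2N}(\C^{2N})\subset(\C^{2N})^{\ot 2N}$ by writing the full antisymmetrizer on $2N$ tensor slots as a sum over ways to split the index set $[2N]$ (the ``slot labels'') into the first $N$ slots and the last $N$ slots; this is the standard Laplace/shuffle expansion of a determinant. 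Because $\ket{\Psi_f}$ lies in $\Hfer\ot\Hfer$ (the wedge power living in the first $N$ tensor factors wedge-tensor the wedge power in the last $N$), the only terms that contribute are those in which the \emph{values} $\{1,\dots,2N\}$ are distributed as some $N$-subset $X$ into the first block and its complement $\bar X$ into the second block; every other distribution has a repeated value within one block and vanishes after antisymmetrization there. Each such term comes with the sign of the $(N,N)$-shuffle that moves $X$ to the front, which by the definition in the excerpt is exactly $\sgn(X)$.

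For step~(iii) I would count: there are $\binom{2N}{N}$ subsets $X$, each contributing an orthonormal vector $\ket{X}\ot\ket{\bar X}$ with coefficient $\pm$(one common constant $c$); normalization $\langle\Psi_f|\Psi_f\rangle=1$ then forces $|c|=1/\sqrt{\binom{2N}{N}}$, and the sign convention is absorbed into the definition of $\sgn(X)$ (with $\sgn$ of the identity split, $X=\{1,\dots,N\}$, equal to $+1$, matching the leading term of $\ket{1}\wedge\dots\wedge\ket{2N}$). This yields \eqref{eq:LferVEC}. An alternative, slightly slicker route that avoids bookkeeping with tensor-slot permutations: note directly that the right-hand side of \eqref{eq:LferVEC} is manifestly totally antisymmetric under $\SU(d)$ acting diagonally — one checks $U^{\ot 2N}$ acting on $\sum_X\sgn(X)\ket{X}\ot\ket{\bar X}$ reproduces the determinant expansion of $\det U=1$ — hence it lies in the one-dimensional trivial subspace $\H^0_f$ and, being normalized, must equal $\ket{\Psi_f}$ up to a phase, which the $X=[N]$ term fixes.

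The main obstacle I anticipate is purely combinatorial sign-tracking: making sure that the sign produced by commuting creation operators / reordering wedge factors when one separates the $N$ ``first-block'' indices from the $N$ ``second-block'' indices matches precisely the permutation $\omega_X$ defined in the excerpt (non-decreasing order of $X$ followed by non-decreasing order of $\bar X$), rather than its inverse or a shifted version. I would handle this by checking the smallest nontrivial case $N=1$ ($d=2$), where $\ket{\Psi_f}=\ket{1}\wedge\ket{2}=\tfrac{1}{\sqrt2}(\ket{1}\ot\ket{2}-\ket{2}\ot\ket{1})$ should match $\tfrac{1}{\sqrt2}\big(\sgn(\{1\})\ket{\{1\}}\ot\ket{\{2\}}+\sgn(\{2\})\ket{\{2\}}\ot\ket{\{1\}}\big)$ with $\sgn(\{1\})=+1$, $\sgn(\{2\})=-1$, and then argue the general case by the Laplace-expansion identity $\det = \sum_{|X|=N}\sgn(X)\,\det(\text{block}_X)\det(\text{block}_{\bar X})$, which packages all the signs correctly once and for all.
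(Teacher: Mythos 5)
Your proposal is correct and follows essentially the same route as the paper: both expand the top wedge $\ket{1}\wedge\cdots\wedge\ket{2N}$ in the tensor-product basis, group the surviving terms by which $N$-subset of values lands in the first block, and identify the resulting sign as the $(N,N)$-shuffle sign $\sgn(X)$ — the paper just writes this out explicitly via the decomposition $\sigma=\tau_I(\sigma)\circ\tau_{II}(\sigma)\circ\omega_{\sigma([N])}$ rather than quoting the Laplace expansion. Your normalization shortcut (counting $\binom{2N}{N}$ orthonormal terms of equal magnitude and using $\|\Psi_f\|=1$) is a harmless simplification of the paper's explicit bookkeeping of the $N!$ and $\sqrt{(2N)!}$ factors.
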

\begin{proof}
Using Lemma \ref{lem:Projfermions} we know that $\ket{\Psi_f}= \ket{1}\wedge\ket{2}\wedge\ldots\wedge\ket{2N}$.  Expanding $\ket{1}\wedge\ket{2}\wedge\ldots\wedge\ket{2N}$ in the first quantisation picture gives
\begin{equation}\label{eq:expansionSLATER}
\ket{\Psi_f} = \frac{1}{\sqrt{(2N)!}} \sum_{\sigma\in P([2n])    } \sgn(\sigma) \ket{\sigma(1)}\ot \ket{\sigma(2)}\ot \ldots \ot \ket{\sigma(2N)} \ ,
\end{equation}
where $P([2N]$ denotes the permutation group of the set $[2N]$ and $\lbrace \ket{i} \rbrace_{i=1}^d $ is the fixed orthonormal basis of $\C^d$. Let $\P^N_{\mathrm{as}}:(\C^{2N})^{\ot N} \rightarrow (\C^{2N})^{\ot N}$ be the projection onto the totally antisymmetric subspace of $(\C^{2N})^{\ot N}$. From the asymmetry of  $\ket{1}\wedge\ket{2}\wedge\ldots\wedge\ket{2N}$ we have $\P^N_{\mathrm{as}} \ot \P^N_{\mathrm{as}} \ket{\Psi_f} = \ket{\Psi_f}$. Using this and \eqref{eq:expansionSLATER} we obtain
\begin{equation}\label{eq:expansionSLATER2}
\ket{\Psi_f} = \frac{1}{\sqrt{(2N)!}} \sum_{\sigma\in P([2n])} \sgn(\sigma) \P^N_{\mathrm{as}} \left( \ket{\sigma(1)}\ot \ldots \ot \ket{\sigma(N)} \right) \ot  
\P^N_{\mathrm{as}} \left( \ket{\sigma(N+1)} \ot \ldots \ot \ket{\sigma(2N)} \right)\ .
\end{equation}
In order to simplify this expression we note we have a decomposition 
\begin{equation}\label{eq:permDECOMP}
\sigma = \tau_I(\sigma) \circ \tau_{II}(\sigma)  \circ \omega_{\sigma(\lbrace{1,\ldots,N\rbrace)}}\ ,
\end{equation} 
where
\begin{itemize}
\item  $\omega_{\sigma([N])}$ has been defined above the formulation of the Lemma ($\sigma([N])$ is a particular subset of $[2N]$) ;
\item $\tau_I(\sigma)\in P([2N])$ is the unique permutation that maps the non-increasingly ordered elements  of $\sigma([N])$ to the ordered tuple $(\sigma(1),\sigma(2), \ldots, \sigma(N)$ and acts as identity on $\sigma(\lbrace{N+1,\ldots,2N\rbrace)}$; 
\item $\tau_I(\sigma)\in P([2N])$ is the unique permutation that maps the non-increasingly ordered elements  of $\sigma(\lbrace{N+1,\ldots,2N\rbrace})$ to the ordered tuple $(\sigma(N+1),\sigma(N+2), \ldots, \sigma(2N))$ and acts  as identity on $\sigma(\lbrace{1,\ldots,N\rbrace)}$ .
\end{itemize}
Furthermore, we note that due to the antisymmetry of $\P^{N}_{\mathrm{as}}$ 
\begin{eqnarray}
 \P^N_{\mathrm{as}} \left( \ket{\sigma(1)}\ot \ldots \ot \ket{\sigma(N)} \right) &=&  \sgn(\tau_{I}(\sigma))  \P^N_{\mathrm{as}} \left( \ket{\omega_{\sigma([N])}(1)}\ot \ldots \ot \ket{\omega_{\sigma([N])}(N)} \right) \label{eq:semiSLAT1} \\
  \P^N_{\mathrm{as}} \left( \ket{\sigma(N+1)}\ot \ldots \ot \ket{\sigma(2N)} \right) &=& \sgn(\tau_{II}(\sigma))  \P^N_{\mathrm{as}} \left( \ket{\omega_{\sigma([N])}(N+1)}\ot \ldots \ot \ket{\omega_{\sigma([N])}(2N)} \right) \label{eq:semiSLAT2}  \ .
\end{eqnarray}
However, from the definition of the Slater determinant we have 
\begin{equation}\label{eq:slat1}
\P^N_{\mathrm{as}} \left( \ket{\omega_{\sigma([N])}(1)}\ot \ldots \ot \ket{\omega_{\sigma([N])}(N)} \right) = \frac{1}{\sqrt{N!}}  \ket{\sigma([N])}\ ,
\end{equation}
and similarly for the complement set
\begin{equation}\label{eq:slat2}
\P^N_{\mathrm{as}} \left( \ket{\omega_{\sigma([N])}(N+1)}\ot \ldots \ot \ket{\omega_{\sigma([N])}(2N)} \right) = \frac{1}{\sqrt{N!}}  \ket{\sigma(\bar{[N]})}\ .
\end{equation}
Combining together the above identities in the expression \eqref{eq:expansionSLATER2} we obtain
\begin{equation} 
\ket{\Psi_f}=\frac{1}{N! \sqrt{(2N)!}} \sum_{\sigma\in P([2n])    } \sgn(\sigma) \sgn(\tau_{I}(\sigma)) \sgn(\tau_{II}(\sigma))  \ket{\sigma([N]) }\ot \ket{ \sigma(\bar{[N]})}\ .
\end{equation}
The above can be further simplified by exploiting \eqref{eq:permDECOMP} and the homomorphism property of the $\sgn$ function,
\begin{eqnarray}
\ket{\Psi_f} & = &\frac{1}{N! \sqrt{(2N)!}} \sum_{\sigma\in P([2n])    } \sgn(\sigma([N])) \ket{\sigma([N]) }\ot \ket{ \sigma(\bar{[N]})}  \\
 & = & \frac{1}{\sqrt{\binom{2N}{N}}} \sum_{X \subset [2N], |X|= N} \sgn(X) \ket{X }\ot \ket{\bar{X}} \label{eq:finalSLATER} \ ,
\end{eqnarray}
where in the last equality we have a changed a sum over permutations to the sum over  $N$-element subsets of $[2N]$, which resulted in the binomial coefficient in front of the sum in \eqref{eq:finalSLATER}.
 
\end{proof}

\begin{rem}
It can be showed that the function $\sgn(X)$ defined above Lemma \ref{lem:ProjPassiveDiff} is given explicitly via
\begin{equation}\label{eq:EXPLICITsgn}
\sgn(X)= (-1)^{\sum_{i=1}^N X(2i) }\ \text{where } X\subset[2N],\ |X|=N\ .  
\end{equation}
Using that $\sgn(X) \sgn(\bar{X}) = (-1)^N$, we obtain 
\begin{equation}
\ket{\Psi_f}\in \mathrm{Sym}^2(\Hfer)\ \text{ for $N=2m$ and }\ \ket{\Psi_f}\in \bigwedge^2(\Hfer) \ \text{ for  $N\neq 2m$\ .} 
\end{equation}
\end{rem}

\begin{lem}[Projection onto the trivial representation of  $\Spin(2d)$ in $\Hfree \ot \Hfree$ for even $d$] \label{ProjFree}	 
Assume that  $d=2k$ and let  and let $\Pi_\FLO \ot \Pi_\FLO$ be the representation of $\Spin(2d)$ in $\Hfree \ot \Hfree$ induced from representation $\Pi_\FLO$ on $\Hfree$. In the decomposition of $\Hfree\otimes\Hfree$ onto irreducible components there always appears a a single trivial representation of $\Spin(2d)$. Moreover, the projection onto this representation is given by
\begin{equation}\label{eq:projFLO}
\P^+_{\FLO}=\P^+ \ot \P^+ \L_\FLO \P^+ \ot \P^+\ ,
\end{equation}
where 
\begin{equation}
\L_\FLO=  \frac{1}{2^{d(2d-1)}}  \prod_{1\leq k<l\leq 2d} \left(\I\ot \I + m_k m_l\ot m_k m_l \right) \ 
\end{equation}
and $\P^+ =\frac{1}{2}(\I +Q)$ is the orthonormal projection onto $\Hfree\subset \H_\mathrm{Fock}$.
\end{lem}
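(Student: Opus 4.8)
\emph{Overall plan.} I would follow the template of Lemmas~\ref{Projbosons} and~\ref{lem:Projfermions}, except that here it is cleaner to work directly with the product operator $\L_\FLO$ than to guess an invariant vector. The representation-theoretic input is the standard fact that, since $d$ is even, the chiral half-spin representation $\Pi_{\FLO}^+$ of $\Spin(2d)$ carried by $\Hfree$ is self-dual, i.e.\ it admits a nondegenerate invariant bilinear form --- symmetric when $4\mid d$ and antisymmetric otherwise (see e.g.\ \cite{FultonHarris}). By Schur's lemma this immediately yields the first assertion: $\Hfree\ot\Hfree$ contains exactly one copy of the trivial representation of $\Spin(2d)$, and that copy is one-dimensional. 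It then remains to show that $\P^+_{\FLO}=\P^+\ot\P^+\,\L_\FLO\,\P^+\ot\P^+$ is the orthogonal projection onto this line.

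\emph{Structure of $\L_\FLO$.} Set $A_{kl}:=m_km_l\ot m_km_l$ for $1\le k<l\le 2d$. From $m_k^\dagger=m_k$ and $\{m_i,m_j\}=2\delta_{ij}\I$ one reads off that each $A_{kl}$ is a self-adjoint involution ($A_{kl}^\dagger=A_{kl}$, $A_{kl}^2=\I\ot\I$). Next I would check that the family $\{A_{kl}\}_{k<l}$ is pairwise commuting: for index-pairs that are disjoint this is obvious, and for pairs sharing exactly one index the single-factor operators $m_km_l$ \emph{anti}commute, so the sign is squared away on the tensor square. Since all the $A_{kl}$ commute and each has spectrum $\{+1,-1\}$, the normalized product $\L_\FLO=\prod_{k<l}\tfrac{1}{2}(\I\ot\I+A_{kl})$ is precisely the orthogonal projection onto the joint eigenspace $\bigcap_{k<l}\ker(A_{kl}-\I\ot\I)$ in $\Hfock\ot\Hfock$; the prefactor $2^{-d(2d-1)}=2^{-\binom{2d}{2}}$ is exactly what is needed to kill the normalization. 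Finally, $m_km_l$ is even in the fermion operators, hence commutes with $Q$, so every $A_{kl}$ commutes with $\P^+\ot\P^+$; therefore $\P^+_{\FLO}=\L_\FLO\,(\P^+\ot\P^+)$ is the orthogonal projection onto $\big(\bigcap_{k<l}\ker(A_{kl}-\I\ot\I)\big)\cap(\Hfree\ot\Hfree)$.

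\emph{Identifying the range.} The crux is the equivalence, for $\ket\phi\in\Hfree\ot\Hfree$: one has $A_{kl}\ket\phi=\ket\phi$ for all $k<l$ if and only if $\big(\pi_{\FLO}^+(E_{kl})\ot\I+\I\ot\pi_{\FLO}^+(E_{kl})\big)\ket\phi=0$ for all $k<l$. This uses only $\pi_{\FLO}^+(E_{kl})=\tfrac{\ii}{2}m_km_l$ and $(m_km_l)^2=-\I$: applying $\I\ot m_km_l$ to $A_{kl}\ket\phi=\ket\phi$ gives $(m_km_l\ot\I+\I\ot m_km_l)\ket\phi=0$, and conversely, since $m_km_l\ot\I$ and $\I\ot m_km_l$ commute, the relation $(m_km_l\ot\I)\ket\phi=-(\I\ot m_km_l)\ket\phi$ forces $A_{kl}\ket\phi=(\I\ot m_km_l)(m_km_l\ot\I)\ket\phi=-(\I\ot(m_km_l)^2)\ket\phi=\ket\phi$. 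Because $\{E_{kl}\}_{1\le k<l\le 2d}$ is a basis of $\so(2d)$ and $\Spin(2d)$ is connected, the right-hand condition says exactly that $\ket\phi$ is fixed by $\Pi_{\FLO}^+\ot\Pi_{\FLO}^+$, i.e.\ that $\ket\phi$ lies in the trivial isotypic component of $\Hfree\ot\Hfree$. Combined with the first paragraph, this component is one-dimensional and nonzero, so $\P^+_{\FLO}$ is the rank-one orthogonal projection onto it, which is the claim.

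\emph{Main obstacle.} The only genuinely substantive point is the careful bookkeeping of the Majorana signs in the shared-index case of the commutativity check, together with the observation that the stated prefactor makes the product collapse to an honest projector with no stray constant; the self-duality fact used in the first paragraph is textbook spin-group representation theory and can simply be cited, and the manipulations with $\P^+\ot\P^+$ are harmless because $Q$ commutes with every $m_km_l$.
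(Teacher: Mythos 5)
Your proof is correct and follows essentially the same route as the paper's: both arguments reduce to the observation that the operators $m_km_l\ot m_km_l$ form a commuting family of self-adjoint involutions, so that $\L_\FLO$ is the orthogonal projection onto their joint $+1$ eigenspace, which is then identified with the trivial isotypic component, with self-duality of the even-parity spinor representation supplying existence and uniqueness. The only (cosmetic) difference is that the paper reaches this identification via the kernel of the second-order Casimir $\mathcal{C}_2=\tfrac12\sum_{k<l}(\I\ot\I-m_km_l\ot m_km_l)$, whereas you establish the termwise equivalence $A_{kl}\ket\phi=\ket\phi \Leftrightarrow (\pi_\FLO(E_{kl})\ot\I+\I\ot\pi_\FLO(E_{kl}))\ket\phi=0$ directly; the two are interchangeable since $(\pi_\FLO(E_{kl})\ot\I+\I\ot\pi_\FLO(E_{kl}))^2=\tfrac12(\I\ot\I-A_{kl})$.
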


\begin{proof}
It will be convenient for us to work first in the full fermionic Fock space $\Hfock$, that carries a (reducible) representation of $\Spin(2d)$ denoted by $\Pi_{\FLO}$ (see the previous section of the Appendix). We can promote this representation to the representation $\Pi_{\FLO}\ot \Pi_{\FLO}$ of $\Spin(2d)$ in $\Hfock\ot\Hfock$. We can now identify the trivial representation of  $\Spin(2d)$ (or equivalently $\so(2d)$) in $\Hfock\ot\Hfock$ with the zero eigenspace of the second order Casimir \cite{FultonHarris,Oszmaniec2014c} of $\so(2d)$ represented in $\Hfock\ot\Hfock$. The Casimir operator, denoted by $\mathcal{C}_2$ is given by 
	\begin{align}
	\mathcal{C}_2 &= \sum_{1\leq k<l\leq 2d } \left( \pi_{\FLO}(E_{kl}) \ot \I +\I\ot \pi_{\FLO}(E_{kl})\right)^2\ \label{EQ} \\
	&= \frac{1}{2} \sum_{1\leq k<l\leq 2d } \left(\I\ot \I -  m_k m_l \ot m_k m_l \right) \ \label{eq:casFLO}, 
	\end{align} 
	where in the computation above we have used \eqref{eq:SpinorREP} and the identity $(\ii m_k m_l)^2= \I$, valid for all $k\neq l$. From Eq.  \eqref{eq:casFLO} we see that $\mathcal{C}_2$ is a sum of $2d(2d-1)/2$ commuting positive operators  $F_{kl}= \I\ot \I -  m_k m_l \ot m_k m_l$ having eigenvalues $0$ and $1$. The projection onto the zero eigenspace of $F_{kl}$ in  is given by
	\begin{equation}
	\P_{kl}=\frac{1}{2}\left(\I\ot \I +  m_k m_l \ot m_k m_l \right)\ .	\end{equation}
	Finally, the projection onto the zero eigenspace of $ \mathcal{C}_2$ is the product of all these projections 
	\begin{equation}
	\L_\FLO= \frac{1}{2^{d(2d-1)}}  \prod_{1\leq k<l\leq 2d} \left(\I\ot \I + m_k m_l\ot m_k m_l \right)\ .
	\end{equation}
By restricting $\L_{FLO}$ to the subspace $\Hfree \ot \Hfree \subset \Hfock \ot \Hfock$ we finally get that $\P^{+}_{\FLO}$ is the projection onto a trivial representation of $\Spin(2d)$ in $\Hfree \ot \Hfree$. Note however that 
	\begin{equation}\label{eq:projIneqFLO}
	\L_\FLO \leq \mathbb{A}_\FLO=\frac{1}{2^d}\prod_{i=1}^d \left(\I\ot \I + m_{2i-1} m_{2i}\ot  m_{2i-1} m_{2i}\right) \ .
	\end{equation}
It is easy to see that the operator $\mathbb{A}_\FLO$ has support on $\Hfock^+\otimes \Hfock^- \oplus \Hfock^-\otimes \Hfock^+$ for odd $d$ and on  $\Hfock^+\otimes \Hfock^+ \oplus \Hfock^-\otimes \Hfock^-$ . We thus see that $\P^{+}_\FLO =0$ unless $d$ is even. The fact that $\P^{+}_\FLO\neq 0$ for even $d$  follows form the fact that in this case positive-parity spinor representations of $\Spin(2d)$ are self-dual - see the remark below Lemma \ref{ProjActiveDiff} below.  
\end{proof}

\begin{lem}[Another form of the projection onto the trivial representation of $\Spin(2d)$ in $\Hfree \ot \Hfree$, for even $d$] \label{ProjActiveDiff}

The projection $\P_{\FLO}^+$ onto the trivial representation of $\Spin(2d)$ in $\Hfree \ot \Hfree$ can be expressed via a simpler expression
\begin{equation}\label{eq:FLOprojALT}
\P_{\FLO}^+ = \P^+ \ot \P^+ \L'_\FLO \P^+ \ot \P^+\ ,
\end{equation}
with 
\begin{equation}\label{eq:Lalt}
 \L'_\FLO=  \frac{1}{2^{2d}}  \left(\prod_{i=1}^{d} \left(\I\ot \I + m_{2i-1} m_{2i}\ot  m_{2i-1} m_{2i}\right) \right)  \left(\prod_{j=1}^{d} \left(\I\ot \I + m_{2j} m_{2j+1}\ot  m_{2j} m_{2j+1}\right) \right) \ .
\end{equation}	
In the second bracket in above expression we used the convention $m_{2d+1} \equiv m_1$.	
\end{lem}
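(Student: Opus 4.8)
The plan is to show that the operator $\L'_\FLO$, which is a product over a smaller collection of $2d$ commuting projections $\P_{i,i+1}$ (indices taken cyclically), has the same compression to $\Hfree \ot \Hfree$ as the full product $\L_\FLO$ over all $\binom{2d}{2}$ projections. Since $\L_\FLO \leq \L'_\FLO$ as an inequality of positive operators (the smaller product contains a subset of the factors of the larger one, and all factors commute and are projections), it is enough to prove that the compression $\P^+\ot\P^+ \, \L'_\FLO \, \P^+\ot\P^+$ also projects onto the trivial representation of $\Spin(2d)$, i.e. that it is a rank-one projection equal to $\P^+_\FLO$. The natural route is to identify the range of $\L'_\FLO$ restricted to $\Hfree \ot \Hfree$ with the common zero eigenspace, inside $\Hfree\ot\Hfree$, of the operators $F_{i,i+1} = \I\ot\I - m_i m_{i+1}\ot m_i m_{i+1}$ for $i = 1,\dots,2d$ (cyclically), and then argue that this common kernel is already one-dimensional and $\Spin(2d)$-invariant, hence coincides with the trivial subspace found in Lemma \ref{ProjFree}.

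The key steps, in order: (1) Recall from Lemma \ref{ProjFree} that the trivial subspace $\H^0_\FLO \subset \Hfree\ot\Hfree$ is the simultaneous $+1$-eigenspace of all the $\P_{kl}=\tfrac12(\I\ot\I+m_k m_l\ot m_k m_l)$, $1\le k<l\le 2d$, and that this subspace is one-dimensional and nonzero for even $d$. (2) Observe that the generators $E_{k,k+1}$ (cyclically, with $E_{2d,1}$) together generate the full Lie algebra $\so(2d)$: indeed the commutators $[E_{k,k+1},E_{k+1,k+2}] \propto E_{k,k+2}$ produce all $E_{k,l}$ with $l>k$, so the "nearest-neighbour plus wrap-around" subset is a generating set. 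Consequently, a vector annihilated by $\pi_\FLO\ot\pi_\FLO(E_{k,k+1})$ for all cyclic nearest-neighbour pairs is annihilated by the whole Lie algebra, hence lies in $\H^0_\FLO$. (3) Now translate this into projections: $m_k m_{k+1}\ot m_k m_{k+1}$ acts as $+1$ on a vector $\ket{v}$ iff $F_{k,k+1}\ket{v}=0$; and one checks, using $(\ii m_k m_{k+1})^2 = \I$ and the spinor representation formula \eqref{eq:SpinorREP}, that $F_{k,k+1}$ restricted to $\Hfree\ot\Hfree$ annihilates exactly the vectors killed by $\pi_\FLO\ot\pi_\FLO(E_{k,k+1})$ — this is the same computation that produced \eqref{eq:casFLO}, now done pair by pair rather than in the Casimir sum. (4) Therefore the common $+1$-eigenspace of the $2d$ operators $m_k m_{k+1}\ot m_k m_{k+1}$ inside $\Hfree\ot\Hfree$ equals the common zero eigenspace of the $\pi_\FLO\ot\pi_\FLO(E_{k,k+1})$, which by step (2) is precisely $\H^0_\FLO$. (5) Since $\L'_\FLO$ is (up to the normalization $2^{-2d}$) the product of the $2d$ commuting projections $\P_{k,k+1}$, its compression to $\Hfree\ot\Hfree$ is the orthogonal projection onto exactly that common eigenspace, i.e. $\P^+\ot\P^+\,\L'_\FLO\,\P^+\ot\P^+ = \P^+_\FLO$, which is \eqref{eq:FLOprojALT}.

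I expect the main obstacle to be step (2)–(3): one has to be careful that the cyclic nearest-neighbour generators really do generate all of $\so(2d)$ (the wrap-around term $E_{2d,1}$ is what prevents the generated algebra from being a proper parabolic or a smaller orthogonal subalgebra), and that passing from "kernel of $\pi_\FLO\ot\pi_\FLO(E_{k,k+1})$" to "$+1$-eigenspace of $m_k m_{k+1}\ot m_k m_{k+1}$" is valid on the nose on $\Hfree\ot\Hfree$ — this uses that $\ii m_k m_{k+1}$ squares to the identity so its only eigenvalues are $\pm 1$ and the eigenspace decomposition of the tensor-square operator is exactly into the $\pm1$ sectors of the summand generator. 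A minor subtlety is the convention $m_{2d+1}\equiv m_1$ in the second bracket of \eqref{eq:Lalt}, which must be used consistently so that the $2d$ pairs appearing are $\{2i-1,2i\}$ for $i=1,\dots,d$ and $\{2j,2j+1\}$ for $j=1,\dots,d$, giving a cyclically-adjacent transversal of the index set $[2d]$. Once these points are in place the conclusion follows immediately from the sandwich $\P^+_\FLO = \P^+\ot\P^+\,\L_\FLO\,\P^+\ot\P^+ \leq \P^+\ot\P^+\,\L'_\FLO\,\P^+\ot\P^+ = \P^+_\FLO$, forcing equality.
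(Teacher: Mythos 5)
Your argument is correct, but it takes a Lie-algebraic route where the paper's own proof is purely operator-algebraic and considerably shorter. The paper simply observes the product identity $Q_{k,l}=\prod_{i=k}^{l-1}Q_{i,i+1}$ for $Q_{k,l}=m_k m_l\ot m_k m_l$ (the intermediate Majoranas cancel since $m_i^2=\I$), so a joint $+1$ eigenvector of the nearest-neighbour $Q_{i,i+1}$ is automatically a $+1$ eigenvector of every $Q_{k,l}$; hence the two commuting families of projections have the same joint range, $\L_\FLO=\L'_\FLO$ identically on $\Hfock\ot\Hfock$, and \eqref{eq:FLOprojALT} follows at once from Lemma \ref{ProjFree}. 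You instead work infinitesimally: the cyclic nearest-neighbour generators $E_{k,k+1}$ bracket-generate $\so(2d)$, the $+1$ eigenspace of $Q_{k,k+1}$ coincides with the kernel of $\left[\pi_\FLO\ot\pi_\FLO\right](E_{k,k+1})$ (both are the $(\pm 1,\mp 1)$ sectors of the involution $\ii m_k m_{k+1}$), and the common kernel of a generating set equals the common kernel of the whole algebra, i.e.\ the trivial isotypic component. This is the ``infinitesimal'' version of the same cancellation and is equally valid; the paper's identity buys the slightly stronger statement that $\L_\FLO=\L'_\FLO$ as operators on the full Fock-space tensor square, whereas your argument only identifies their compressions to $\Hfree\ot\Hfree$, which is all the lemma asserts. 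Two small remarks: your worry that the wrap-around factor $Q_{2d,1}$ is needed to avoid generating a smaller subalgebra is unfounded --- the open chain $E_{1,2},\dots,E_{2d-1,2d}$ already bracket-generates $\so(2d)$, and the paper explicitly flags that factor as redundant; and the closing ``sandwich'' inequality is superfluous, since its final equality is exactly what steps (1)--(5) have already established.
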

\begin{proof}
We first note that the operators $Q_{k,l}=m_k m_l \ot m_k m_l$ (recall that $1\leq k<l\leq 2d$) can be always obtained as a product of ''nearest-neighbour'' operators $Q_{i,i+1}$ ($i=1,\ldots,2d-1$). Indeed, we have
\begin{equation}
Q_{k,l} = \prod_{i=k}^{l-1} Q_{i,i+1}
\end{equation} 
and therefore the joint +1 eigenspace of $Q_{k,l}$ ($1\leq k<l\leq 2d$) is exactly the joint +1 eigenspace of $Q_{i,i+1}$, for $i=1,\ldots,2d-1$. Consequently, we have $\L_{\FLO}=\L'_\FLO$ and equation \eqref{eq:FLOprojALT} follows (note that in \eqref{eq:Lalt} we have incorporated, for the sake of symmetry of the expression, a redundant term projecting onto the +1 eigenspace of $Q_{1,2d}$).
\end{proof}
\begin{rem}
From expressions \eqref{eq:FLOprojALT} and \eqref{eq:Lalt} it follows that $\P_\FLO^+ \neq 0$ if and only if $d$ is even. To show this we first observe that $\L'_{\FLO}$ commutes with operators $Q\ot\I$ and $\I\ot Q$ and hence it commutes with the projection $\P^+ \ot \P^+$. Therefore in order to prove $\P^+ \ot \P^+   \L'_\FLO \P^+ \ot \P^+ $ it suffices to show $\tr( \P^+ \ot \P^+ \L'_\FLO) \neq 0$. To prove this we first expand  the products inside the brackets of \eqref{eq:Lalt} and obtain
\begin{equation}
\L_\FLO = \frac{1}{2^{2d}}\left(\I\ot\I + (-1)^d Q\ot Q + L_1   \right)  \left(\I\ot\I + (-1)^d Q\ot Q + L_2   \right)\ ,  
\end{equation}
where $L_1$ and $L_2$ contain sums of operators  $m_{X}\ot m_{X}$, with $m_X= \prod_{i\in X} m_i$ and $X$ is a proper subset of $[2d]=\lbrace 1,\ldots, 2d\rbrace$. Moreover, from the expansion of the brackets in   \eqref{eq:Lalt} we see that for every $m_X\ot m_X$ appearing in $L_1$ there exist no $m_Y\ot m_Y$ in $L_1$ such that $X\cup Y=[2d]$. From this discussion and the fact that products of Majorana fermion operators are traceless we finally obtain 
\begin{equation}
\tr\left(\P^+ \ot \P^+ \L'_\FLO\right)=\frac{1}{2^{2d}4}\tr\left[(\I\ot\I +Q\ot Q)(2\I\ot\I +2 (-1)^d Q\ot Q)\right]=\frac{1}{2}(1+(-1)^d)\ ,
\end{equation}      
where we have used the identities $Q^2=\I$, $\tr(\I\ot\I)=2^{2d}$ and $\tr(Q\ot Q)=0$. Analogous computations show that 
\begin{equation}
\tr\left(\P^- \ot \P^- \L'_\FLO\right)=\frac{1}{2}(1+(-1)^d)\ ,
\end{equation}
where $\P^-=\frac{1}{2}(\I-Q)$. Similarly, we have
\begin{equation}
\tr\left(\P^+ \ot \P^- \L'_\FLO\right)=\tr\left(\P^+ \ot \P^- \L'_\FLO\right)=\frac{1}{2}(1(-1)^{d}),
\end{equation}
and thus we conclude that the trivial representation appears in $\Hfree \ot \Hfock^-$ (or in $\Hfock^- \ot \Hfree $) only when $d$ is odd.    
\end{rem}

\noindent
We close the section of  the auxiliary results by giving the explicit form of vector $\ket{\Psi_\FLO}$.

\begin{lem}[Explicit form of the vector spanning the trivial representation of $\Spin(2d)$ in $\Hfree \ot \Hfree$ for even $d$.]\label{lem:ExplictitProjFLO} For a subset $X$ of $d$-element set $[d]$  we define $N(X)=\sum_{i\in X} i $. Under this notation we have
\begin{equation}\label{eq:invariantVectorFLO} 
\P^{+}_\FLO = \kb{\Psi_\FLO}{\Psi_\FLO}\ ,\ \text{with}\  \ket{\Psi_\FLO}=\frac{1}{\sqrt{2^{d-1}}}\sum_{X\subset[d], |X|=2k} (-1)^{N(X)} \ket{X}\ot \ket{\bar{X}} \ ,
\end{equation}
where $\ket{X}=\ket{X(1),X(2),\ldots,X(d)}$ and we have used the notation introduced above Lemma \ref{lem:ProjPassiveDiff}.
\end{lem}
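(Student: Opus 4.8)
The plan is to verify directly that the vector $\ket{\Psi_\FLO}$ displayed in \eqref{eq:invariantVectorFLO} is a unit vector spanning the range of the projection $\P^+_\FLO$ built in Lemma~\ref{ProjFree} (equivalently Lemma~\ref{ProjActiveDiff}). By those lemmas, together with the trace computation in the remark after Lemma~\ref{ProjActiveDiff} giving $\tr(\P^+\ot\P^+\L'_\FLO)=\tfrac12(1+(-1)^d)=1$ for even $d$, the projection $\P^+_\FLO$ is \emph{rank one}. Hence it equals $\kb{\Psi_\FLO}{\Psi_\FLO}$ as soon as we check: (i) $\ket{\Psi_\FLO}\in\Hfree\ot\Hfree$; (ii) $\|\ket{\Psi_\FLO}\|=1$; and (iii) $\ket{\Psi_\FLO}$ lies in the trivial subrepresentation. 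Item (i) is immediate: each Fock state $\ket X$ with $|X|$ even lies in $\Hfree$, and since $d$ is even $|\bar X|=d-|X|$ is even as well. Item (ii) is a counting statement: the vectors $\ket X\ot\ket{\bar X}$ are orthonormal and indexed bijectively by the even-size subsets of $[d]$, of which there are $2^{d-1}$, so $\|\ket{\Psi_\FLO}\|^2=2^{d-1}/2^{d-1}=1$.

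For (iii), by \eqref{eq:SpinorREP} invariance is equivalent to $(m_km_l\ot\I+\I\ot m_km_l)\ket{\Psi_\FLO}=0$ for all $1\le k<l\le 2d$; since the $X\in\so(2d)$ annihilating $\ket{\Psi_\FLO}$ form a Lie subalgebra and the $E_{kl}$ are a basis of $\so(2d)$, it suffices to treat the nearest-neighbour generators $E_{i,i+1}$, $i=1,\dots,2d-1$ (their iterated commutators recover all $E_{kl}$). If $i=2j-1$ is odd, a one-line computation from the Majorana definitions gives $m_{2j-1}m_{2j}=\ii(2\hat{n}_j-\I)$, so on each summand $\ket X\ot\ket{\bar X}$ the operator acts as the scalar $\ii(2X(j)-1)+\ii(2\bar X(j)-1)=\ii\bigl(2(X(j)+\bar X(j))-2\bigr)=0$; hence it annihilates $\ket{\Psi_\FLO}$ termwise. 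If $i=2j$ is even, write $M=m_{2j}m_{2j+1}=\ii(f_j-f_j^\dagger)(f_{j+1}+f_{j+1}^\dagger)$; a short Jordan--Wigner bookkeeping shows that the occupation-dependent sign from $f_j-f_j^\dagger$ and the Jordan--Wigner phases cancel, yielding the \emph{uniform} identity $M\ket X=-\ii\,\ket{X\triangle\{j,j+1\}}$ for every Fock basis state, where $X\triangle\{j,j+1\}$ means flipping the occupations of modes $j$ and $j+1$. Collecting in $(M\ot\I+\I\ot M)\ket{\Psi_\FLO}$ the coefficient of $\ket A\ot\ket{\bar A\triangle\{j,j+1\}}$ and using $\overline{A\triangle\{j,j+1\}}=\bar A\triangle\{j,j+1\}$, one is left with the scalar $(-1)^{N(A\triangle\{j,j+1\})}+(-1)^{N(A)}$; since flipping the two neighbouring labels $j$ and $j+1$ always changes the label-sum $N$ by an odd integer, this vanishes, proving (iii) and hence the lemma.

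The only genuine computation is inside the even sub-case of (iii): checking that $M\ket X=-\ii\,\ket{X\triangle\{j,j+1\}}$ holds with the \emph{same} constant $-\ii$ for all $X$, and the parity observation $N(X\triangle\{j,j+1\})\equiv N(X)+1\ (\mathrm{mod}\ 2)$. Both are elementary, so the argument has no real obstacle; a less streamlined alternative would be to bypass (iii) altogether and compute $\P^+_\FLO\bigl(\ket{0,\dots,0}\ot\ket{1,\dots,1}\bigr)$ directly from the explicit product formula for $\L_\FLO$ in Lemma~\ref{ProjFree}, but expanding that product is more cumbersome than the verification above.
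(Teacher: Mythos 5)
Your proof is correct, but it takes a genuinely different route from the paper's. The paper \emph{derives} the vector: it first uses the bound $\L_\FLO\leq\mathbb{A}_\FLO$ to restrict the support of $\P^+_\FLO$ to $\mathrm{span}\{\ket{X}\ot\ket{\bar X}\}$, argues via irreducibility that the invariant vector must be maximally entangled (so all coefficients are pure phases of equal modulus), and then pins down the phases $(-1)^{N(X)}$ by acting with the specific group elements $E_X=\prod m_{2i-1}m_{2j-1}\ot m_{2i-1}m_{2j-1}$ on the seed $\ket{\emptyset}\ot\ket{[d]}$. You instead \emph{verify} the candidate vector directly: after establishing rank one from the trace computation $\tr(\P^+\ot\P^+\L'_\FLO)=1$ (a legitimate replacement for the paper's appeal to character theory, since the two commuting projections multiply to a projection whose trace is its rank), you check annihilation by the Lie algebra on the nearest-neighbour generators $E_{i,i+1}$ only, using that the annihilator of a fixed vector under $\pi\ot\pi$ is a subalgebra and that these generators span $\so(2d)$ under iterated commutators. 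Your two computations — $m_{2j-1}m_{2j}=\ii(2\hat n_j-\I)$ acting as zero on each $\ket{X}\ot\ket{\bar X}$, and the uniform sign in $m_{2j}m_{2j+1}\ket{X}=-\ii\,\ket{X\triangle\{j,j+1\}}$ combined with the parity jump $N(X\triangle\{j,j+1\})\equiv N(X)+1\pmod 2$ — both check out. Your approach is more self-contained (it avoids the maximal-entanglement argument and the support analysis via $\mathbb{A}_\FLO$, needing only the normalization count $2^{d-1}$), at the cost of having to guess the answer in advance; the paper's approach explains \emph{where} the signs $(-1)^{N(X)}$ come from.
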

\begin{proof}
From the relation $\L_\FLO \leq \mathbb{A}_\FLO$ (see Eq.\eqref{eq:projIneqFLO})  we can deduce the following observations
\begin{itemize}
\item [(i)] Operator $\P^{+}_\FLO$ is supported on the subspace $\Hfock \ot \Hfock$ spanned by linearly-independent vectors $\ket{X}\ot\ket{\bar{X}}$ for even-element subset $X\subset[d]$;
\item [(ii)] Operator  $\P^+_\FLO$ must necessary project onto one dimensional subspace (this follows form the basic character theory for the compact groups \cite{ZZ2015});
\item [(iii)] $\P^{+}_\FLO$ must be a projection onto a maximally entangled vector (denoted by $\ket{\Psi_{\FLO}}$) in the space $\Hfree \ot\Hfree$. If this was not the case then the invariant form $B(\ket{\psi},\ket{\psi})=\bra{\Psi}\ket{\psi}\ot\ket{\phi}$, for $\ket{\psi},\ket{\phi}\in\Hfree$ would have been degenerate which would contradict the irreducibility of the representation $\Pi^{+}_{\FLO}$ in $\Hfree$.
\end{itemize}
Combining these observations we get
\begin{equation}\label{eq:almostPROJ}
\ket{\Psi_\FLO}= \frac{1}{\sqrt{2^{d-1}}}\sum_{X\subset[d], |X|=2k} \exp(\ii \theta_X) \ket{X}\ot \ket{\bar{X}}\ . 
\end{equation}
The global phase of $\ket{\Psi_\FLO}$ can be set  arbitrary way. In what follows we set without the loss of generality  $\theta_\emptyset =0$. As we will see this choice fixes the values of all remaining phase factors to 
\begin{equation}
\exp(\theta_X)=(-1)^{N(X)}\ .
\end{equation}
 In order to prove this let us study how the unitary  operator 
 \begin{equation}
E_{ij}\coloneqq  m_{2i-1} m_{2j-1} \ot m_{2i-1} m_{2j-1}\ , \ i<j\ ,\ i,j\in[d]\ ,
 \end{equation}
 acts on the vector  $\ket{\emptyset}\ot \ket{[d]}$. Explicit computation gives
 \begin{equation}
 E_{ij} \ket{\emptyset}\ot \ket{[d]} = (-1)^{i+j}\ket{\lbrace i,j \rbrace } \ot \ket{[d]\setminus \lbrace i,j \rbrace } = (-1)^{N(\lbrace i,j \rbrace)}  \ket{\lbrace i,j \rbrace } \ot \ket{[d]\setminus \lbrace i,j \rbrace }\ .
 \end{equation}
 Defining analogously the operator $E_X$, where $X\subset[d]$ is even-element subset of $[d]$ we get
 \begin{equation}
  E_X \ket{\emptyset}\ot \ket{[d]} = (-1)^{N(X)}\ket{X}\ot \ket{\bar{X}}\ .
 \end{equation}
The crucial observation now is that operators $F_X$ are in  $\Pi^+_\FLO \ot \Pi^+_\FLO (\Spin(2d))$ and therefore 
\begin{equation}\label{eq:invFLO}
F_X \ket{\Psi_\FLO} = \ket{\Psi_\FLO}\ ,\ \text{for all } X\subset[d],\ |X|=2k\ . 
\end{equation}
Moreover, we note that 
\begin{equation}
F_X \ket{Y}\ot\ket{\bar{Y}} \propto \ket{Y+X}\ot\ket{[d]\setminus \lbrace{Y+X\rbrace}}\ ,\  X,Y\subset[d],\ |X|=2k, |Y|=2l \ ,
\end{equation}
where $X+Y=X\cup Y \setminus X\cap Y$ denotes the symmetric sum of $X$ and $Y$. Using the above relation together with \eqref{eq:invFLO}, assumption $\theta_\emptyset =0$,  and \eqref{eq:almostPROJ}  we finally obtain \eqref{eq:invariantVectorFLO}.
 
\end{proof}

\begin{rem}
Using the positive-parity constrain we get $N(\bar{X})+N(X)=(-1)^{d/2}$. Consequently we have
\begin{equation}
\ket{\Psi_\FLO}\in \mathrm{Sym}^2(\Hfree)\ \text{ for $d=4m$ and }\ \ket{\Psi_\FLO}\in \bigwedge^2(\Hfree) \ \text{ for  $d\neq 4m$\ .} 
\end{equation}
\end{rem}

\begin{rem}
It is possible to prove that for $d=4m$ we have
\begin{equation}
\P_{d/2}\ot \P_{d/2} \ket{\Psi_\FLO} \propto \ket{\Psi_{f}}\ ,  
\end{equation}
where $\ket{\Psi_f}$ is the passive-FLO invariant vector (see Eq. \eqref{eq:LferVEC}) and $\P_{d/2}:\Hfree\rightarrow \Hfree$ is the projection onto the half-filling subspace $\bigwedge^{d/2} \left( \C^d \right) \subset \Hfree$. We suspect that analogous relation holds also for the case of even $d$ which is not divisible by $4$. We leave proofs of these statements to the interested reader.
\end{rem}

\section{Detailed computations concerning examples from the main text}\label{ap:examples}

In this section we present explicit computations concerning examples given in the main text.

\subsection{Example 1} \label{EX1:proof}

The result given in Example \ref{ex:CROSSKERR} follows directly from Theorem \ref{thm:BOSGATES}, or more specifically from the situation described in the possiblity (c). It states that the group generated by passive bosonic linear optics and $V\in\U\left(\Hbos\right)$ is the full unitary group $\U\left(\Hbos\right)$ if and only if $\left[V\ot V,\L_b\right]\neq0$, where $\L_b$ is given by Eq. \eqref{eq:Lbos}. Let $V_t=\exp\left(-it\hat{n}_{a}\hat{n}_{b}\right)$. Straightforward computation gives 
\begin{equation}
	\left[V_{t}\ot V_{t},\L_b\right] =\sum_{k,l=0}^{N} f_{kl} \kb{D_{k}}{D_{l}}\ot\kb{D_{N-k}}{D_{N-l}}\ ,
\end{equation}
where
\begin{equation}\label{eq:condKERR}
f_{kl}=\left(-1\right)^{l+k}\e^{-2itk\left(N-k\right)}\left(\e^{2it\left[l\left(N-l\right)-k\left(N-k\right)\right]}-1\right)\ .
\end{equation}
From the above expressions it follows that $\left[V_{t}\ot V_{t},\L_b\right] =0$ if and only if $f_{kl}=0$ for all $k,l=0,\ldots,N$ which is equivalent to \eqref{eq:conditionKerr}. Solving \eqref{eq:condKERR} is an interesting
problem in itself. We limit ourselves to proving that for time  $t=\frac{\pi}{3}$ the gate $V_{t}$ promotes $\LOB$ to
universality. The reason for that is the following: number $3$ does
not simultaneously divide integers of the form $n_{l}=l\left(N-l\right)$ ($l=0,\ldots,N$).
To  see this we first observe that for $N=1,\ldots,5$ the sequence $n_{l}$ takes values 
\begin{itemize}
	\item $N=1:$ $n_l=(0,0)$;
	\item $N=2:$ $n_l=(0,1,0)$;
	\item $N=3:$ $n_l=(0,2,2,0)$;
	\item $N=4:$ $n_l=(0,3,4,3,0)$;
	\item $N=5:$ $n_l=(0,4,8,9,8,4,0)$.
\end{itemize}
For $N>5$ in the sequence $n_{l}$ we have numbers $\left(N-2\right)2$
and $\left(N-4\right)4$ which cannot be both simultaneously divisible
by $3$. Consequently, we have $\left[V_\frac{\pi}{3}\ot V_\frac{\pi}{3},\L_b\right]\neq 0$.

\subsection{Example 2}\label{EX2:proof}

In order to prove the content of the Example \ref{ex:BOSmid} we refer to the point (b) of Theorem \ref{thm:BOSHAM}. Form there it follows that $\langle \LOB, X_3 \rangle = G_b$ is and only if $[X_3 \ot \I + \I \ot X_3 , \L_b]=0$. Recall that $\L_b \propto \kb{\Psi_b}{\Psi_b}$, where $\ket{\Psi_b}=\sum_{k=0}^N (-1)^k  \ket{D_k}\ket{D_{N-k}}$. Using $X_3 =\hat{n}^3_a - \hat{n}^3_b$ we obtain
\begin{equation}
\left(X_3 \ot \I + \I \ot X_3\right) \ket{D_k}\ket{D_{N-k}} = \left[\left((k)^3 -(N-k)^3\right)+ \left((N-k)^3 -(k)^3 \right)\right] \ket{D_k}\ket{D_{N-k}}= 0\ .
\end{equation}  
Therefore $\left(X_3 \ot \I + \I \ot X_3\right)  \ket{\Psi_b}=0$ and consequently $[X_3 \ot \I + \I \ot X_3, \L_b]=0$. As a result we have $\langle \LOB, X_3 \rangle = G_b$. The specific cases (a) and (b) discussed in the Example \ref{ex:BOSmid} follow immediately from the discussion below Theorem \ref{thm:BOSGATES}.

\subsection{Example 3}\label{EX3:proof}

Any Hamiltonian $H_{\text{tm}}$ acting on $\Hfer$ that contains only two-mode terms has the following general form:
\begin{equation}
H_{\text{tm}}=\sum_{k,l=1}^{d} \left( \alpha_{k, l} f^{\dagger}_k f^{\phantom\dagger}_l + \beta_{k,l} \hat{n}_k \hat{n}_l \right),
\end{equation}
where $\alpha_{k,l}=\overline{\alpha}_{l,k}$ and the $\beta_{k,l}$ coefficients are real (moreover, at least one of the $\beta_{k,l}$ coefficients must be different from zero as the
Hamiltonian is assumed to be non-trivial and not quadratic). Let us first note that $\sum_{k,l}a_{kl}f^{\dagger}_k  f_l \in \Lie(\LOF)$. Moreover, we have 
\begin{equation}
 \sum_{k,l=1}^{d}  \beta_{k,l} \hat{n}_k \hat{n}_l = \sum_{k,l=1}^{d}  \beta_{k,l} ( \hat{n}_k-1/2) (  \hat{n}_l -1/2) +X' \ ,
\end{equation}
where $X'\in \Lie(\LOF)$. Therefore we obtain 
\begin{equation}\label{eq:alternativeTWOmode}
\langle \LOF, H_{\text{tm}} \rangle = \langle \LOF, H'_{\text{tm}} \rangle \ ,\text{ with }  H'_{\text{tm}}= \sum_{k,l=1}^{d}  \beta_{k,l} ( \hat{n}_k-1/2) (  \hat{n}_l -1/2)\ . 
\end{equation}
According to Theorem~\ref{thm:FERMHAM} (fermionic analogue of Theorem \ref{thm:BOSHAM} given in the main text) for $d\neq 2N$ we automatically get  $\langle \LOF, H'_{\text{tm}} \rangle=\U\left(\Hfer \right)$. From the same theorem it follows that for $d= 2N$ the universality is equivalent to the condition
\begin{equation}\label{eq:twoMODEex}
\left[H'_{\text{tm}} \otimes \I + \I \otimes H'_{\text{tm}}, \kb{\Psi_f}{\Psi_f}\right]\neq 0\ ,
\end{equation}
which is satisfied if and only if  $\ket{\Psi_f}$ (see Eq. \eqref{eq:LferVEC}) is not an eigenvector of $H'_{\text{tm}} \otimes \I + \I \otimes H'_{\text{tm}}$. For arbitrary Slater determinant $\ket{X}$ in $d/2$ particle sector, we have
\begin{equation}
H'_{\text{tm}}\ket{X}= \lambda_X \ket{X} \ .
\end{equation}  
Moreover, explicit computation using \eqref{eq:alternativeTWOmode} gives $\lambda_X = \lambda_{\bar{X}}$, for all $X$ of cardinality $d/2$. Consequently, we obtain
\begin{equation}
\left(H'_{\text{tm}} \otimes \I + \I \otimes H'_{\text{tm}}\right) \ket{X} \ot \ket{\bar{X}} = 2\lambda_X \ket{X} \ot \ket{\bar{X}} \ .
\end{equation}
Thus, by using Eq.~\eqref{eq:LferVEC}, we conclude that $\ket{\Psi_f} $ is an eigenvector of $H'_{\text{tm}} \otimes \I + \I \otimes H'_{\text{tm}}$ iff $\lambda_X= \lambda$,  for all $d/2$-element subsets of $[d]$. However, we assumed that $H_{\text{tm}}$ (and thus also $H'_{\text{tm}}$) is non-trivial, there must exist two Slater determinants $\ket{X_1}$ and $\ket{X_2}$ with $\lambda_{X_1} \ne \lambda_{X_2}$
(only the multiple of the identity in $\Hfer$ can have the same eigenvalue for all Slater determinants). Hence, Eq.~\eqref{eq:HtmX} cannot be satisfied, and consequently $\langle \LOF, H_{\text{tm}} \rangle=\U\left(\Hfer \right)$ also in the $d= 2N$ case.

\subsection{Example 4}\label{EX4:proof}
For Example 4, we proceed somewhat similarly to the proof of the previous example. To prove that the correlated hopping Hamiltonian $Y$ does not promote passive Fermionic Linear Optics to universality in the half-filling case, we have to show that for gates of the form $V=\exp(itY)$  (with arbitrary real $t$) the condition $[V \otimes V, \L_f]=0$ holds. As discussed in the previous example, this condition is equivalent to 
\begin{equation}\label{eq:fermEXcond}
[Y\ot\I +\I\ot Y , \L_f]=0\ .
\end{equation}
If the above holds, then Theorem~\ref{thm:FERMGATES} guarantees that $\langle \LOF, X\rangle=G_f$, which is a proper subgroup of $\U\left(\Hfer \right)$ (see also Theorem \ref{thm:FERMHAM} given in the next part of the Appendix).

We will start by showing that for $E_{1}=(\hat{n}_{2} + \hat{n}_4 - \hat{n}_{2} \hat{n}_4) \, f^{\dagger}_1  f^{\phantom\dagger}_{3}$ the equality $(E_{1} \otimes \I +\I \otimes E_{1}) \ket{\Psi_f} = 0$ is satisfied. Let us note that for any Slater determinant $\ket{X}=\ket{X(1),X(2),\ldots,X(d)}$ , we have that $E_{1} \ket{X} \ne 0$ iff $X(1) + X(3)=1$ and $X(2)+X(4)=1$. Now, using the form of $\ket{\Psi_f}$ given in Eq.~\eqref{eq:finalSLATER}, we obtain that 
\begin{align}
&\tbinom{2N}{N} \, (E_{1} \otimes \I + \I \otimes E_{1} ) \ket{\Psi_f}=(E_{1}\otimes \I + \I \otimes E_{1}) \sum_{\substack{X\subset[2N], \\  |X|=N}} \sgn(X) \ket{X} \ot \ket{\bar{X}}=
\\
&(E_{1} \otimes \I + \I \otimes E_{1} )\sum_{\substack{X'\subset \lbrace 5,\ldots,2N\rbrace, \\  |X'|=N-2}} 
 \widetilde{\sgn}( X')  \left( \ket{1100 \, X'}   \ket{0011 \, \bar{X}'} {+} \ket{0011 \, X'}  \ket{1100 \, \bar{X}'}{+}\ket{1001 \, X'}  \ket{0110 \, \bar{X}'} {+}\ket{0110 \, X'}  \ket{1001 \, \bar{X}'}\right),
\end{align}
where we introduced the notation $\widetilde{\sgn}(X')=\sgn(\{1,2\} \cup X')$ and used that from Eq. \eqref{eq:EXPLICITsgn} it trivailly follows   that $\sgn(\{1,2\} \cup X') = \sgn(\{3,4\} \cup X')=\sgn(\{1,4\} \cup X')=\sgn(\{2,3\} \cup X')$ for all $X'\subset \lbrace 5,\ldots, 2N \rbrace $.

Next, by straightforward calculations, one obtains
\begin{align}
& E_{1} \otimes \I  \left( \ket{1100 \, X'}   \ket{0011 \, \bar{X}'} {+} \ket{0011 \, X'}  \ket{1100 \, \bar{X}'}{+}\ket{1001 \, X'}  \ket{0110 \, \bar{X}'} {+}\ket{0110 \, X'}  \ket{1001 \, \bar{X}'}\right) \nonumber \\
&= \ket{1001 \, X'}   \ket{1100 \, \bar{X}'} - \ket{1100 \, X'}  \ket{1001 \, \bar{X}'},\\
& \I \otimes E_1 \left( \ket{1100 \, X'}   \ket{0011 \, \bar{X}'} {+} \ket{0011 \, X'}  \ket{1100 \, \bar{X}'}{+}\ket{1010 \, X'}  \ket{0101 \, \bar{X}'} {+}\ket{0101 \, X'}  \ket{1010 \, \bar{X}'}\right)= \nonumber \\
& = \ket{1100 \, X'}   \ket{1001 \, \bar{X}'} - \ket{1001 \, X'}  \ket{1100 \, \bar{X}'},
\end{align}
which immediately implies that $(E_1 \otimes \I +\I \otimes E_1) \ket{\Psi_f} = 0$. Analogous computations show that $(E^\dagger_1 \otimes \I +\I \otimes E^\dagger_1) \ket{\Psi_f} = 0$. Now, after defining
\begin{equation}
E_{2j-1}=(\hat{n}_{2j} + \hat{n}_{2(j+1)} - \hat{n}_{2j} \hat{n}_{2(j+1)}) \, f^{\dagger}_{2j-1}  f^{\phantom\dagger}_{2j+1}\  \text{and\ } F_{2j}=(\hat{n}_{2j-1} + \hat{n}_{2j+1} - \hat{n}_{2j-1} \hat{n}_{2j+1}) \, f^{\dagger}_{2j}  f^{\phantom\dagger}_{2(j+1)} \ ,
\end{equation}
and performing the completely analogous calculations one shows that \begin{equation}
(E^{\phantom\dagger}_{2j-1} \otimes \I +\I \otimes E^{\phantom\dagger}_{2j-1}) \ket{\Psi_f} = (E^{\dagger}_{2j-1} \otimes \I +\I \otimes E^{\dagger}_{2j-1}) \ket{\Psi_f}=  (F^{\phantom\dagger}_{2j} \otimes \I +\I \otimes F^{\phantom\dagger}_{2j}) \ket{\Psi_f} = (F^{\dagger}_{2j} \otimes \I +\I \otimes F^{\dagger}_{2j}) \ket{\Psi_f}=0\ .
\end{equation}Finally, using $Y=\sum_{j=1}^{d/2} (E^{\phantom\dagger}_{2j-1}+E^{\dagger}_{2j-1}+F^{\phantom\dagger}_{2j}+F^{\dagger}_{2j})$, we get that $(Y \otimes \I  + \I \otimes Y) \ket{\Psi_f}=0$, which implies \eqref{eq:fermEXcond}.

\subsection{Example 5}\label{EX5:proof}
We prove the assertion given in Example \ref{ex:FLO} by directly using  Theorem \ref{thm:FLOHAM}. From the point (a) of the aforementioned theorem it follows that for $d\neq 2 m$ the Hamiltonian $H_\mathrm{in}=m_1 m_2 m_3 m_4$ always promotes FLO to universality in $\Hfree$ (clearly, it does not belong to $\Lie(\FLO)$). On the other hand, from the point (c) of the same Theorem it suffices to show
\begin{equation}\label{eq:condEX}
\left[ m_1 m_2 m_3 m_4\ot \I + \I\ot m_1 m_2 m_3 m_4 , \L_\FLO \right]\neq 0  
\end{equation}  
in order to prove $\langle \FLO, H_{\mathrm{in}} \rangle = \U(\Hfree)$. In what follows we use the alternative representation $\L'_\FLO$ of the operator $\L_\FLO$ given in \eqref{eq:Lalt}. Explicit computation involving the commutation rules of Majorana-fermion operators give
\begin{gather}
\left[ m_1 m_2 m_3 m_4\ot \I + \I\ot m_1 m_2 m_3 m_4 , \L_\FLO \right]  =   \left[ m_1 m_2 m_3 m_4\ot \I + \I\ot m_1 m_2 m_3 m_4 , \L'_\FLO \right] \\
= \frac{1}{2^{2d-1}}  \left(\prod_{i=1, i\neq 4}^{2d-1} \left(\I\ot \I + m_{i} m_{i+1}\ot  m_{i} m_{i+1}\right) \right)  \left[ m_1 m_2 m_3 m_4\ot \I + \I\ot m_1 m_2 m_3 m_4 , \left(\I\ot \I + m_{4} m_{5}\ot  m_{4} m_{5}\right) \right]\\
= \frac{1}{2^{2d-1}}  \left(\prod_{i=1, i\neq 4}^{2d-1} \left(\I\ot \I + m_{i} m_{i+1}\ot  m_{i} m_{i+1}\right) \right) 2\left(m_1 m_2 m_3 m_5 \ot m_4 m_5 + m_4 m_5 \ot m_1 m_2 m_3 m_5\right)\ . \label{eq:finalEX4}
\end{gather}
The expansion of the product in \eqref{eq:finalEX4} shows that the whole expression  is manifestly nonzero and therefore the condition \eqref{eq:condEX} is satisfied.

\section{Proofs of technical theorems}\label{ap:mainTheorems}

In what follows we present the proofs of the results from the main text. We first prove theorems concerning Hamiltonian extension of various classes of linear optics. Later we prove results on the gate extension problem for these settings. %Before we proceed to the actual proof we need to recall a few necessary theorems form Lie theory.

\begin{lem}[Connected Lie group generated by a matrix Lie algebra \cite{hofmann2013}] \label{lem:LIEalgGRU}] For a Lie algebra $\k$
consisting  linear operators on $\H$,  $\k \subset \su(\H)$, we denote by $\langle \exp(\ii \k) \rangle$ a subgroup of $\SU(\H)$ generated by elements of the form $\exp(\ii X)$, for $X \in \k$. Then $\langle \exp(\ii \k) \rangle$ is a connected compact Lie subgroup of $\SU(\H)$  with the Lie algebra $\k$, i.e. $\Lie\left( \langle \exp(\ii \k) \rangle\right)=\k$.  

\end{lem}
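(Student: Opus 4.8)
The plan is to derive the statement from the standard correspondence between Lie subalgebras of $\su(\H)=\Lie(\SU(\H))$ and connected immersed Lie subgroups of $\SU(\H)$, and then to deal with compactness as a separate point. Since $\k$ is by hypothesis a Lie subalgebra, the left-invariant distribution on $\SU(\H)$ spanned by $\k$ is involutive, so the Frobenius integrability theorem (equivalently, the integral-subgroup theorem) produces a unique maximal connected integral submanifold through the identity; this is a connected immersed Lie subgroup $K\subseteq\SU(\H)$ with $\Lie(K)=\k$. First I would verify that $K$ coincides with $\langle\exp(\ii\k)\rangle$: a connected Lie group is generated as an abstract group by any neighbourhood of the identity, and $\exp$ carries a small star-shaped neighbourhood $U$ of $0$ in $\k$ diffeomorphically onto a neighbourhood of $e$ in $K$, so the subgroup generated by $\exp(\ii U)$, and a fortiori the one generated by all of $\exp(\ii\k)$, is exactly $K$. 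Connectedness is then immediate — alternatively each generator $\exp(\ii X)$ is joined to $e$ by the path $t\mapsto\exp(\ii tX)$, so $\langle\exp(\ii\k)\rangle$ is path-connected — and $\Lie(K)=\k$ holds by construction.

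The main obstacle is compactness, since in general a connected immersed subgroup of a compact Lie group need not be closed — a one-parameter subgroup winding densely in a maximal torus is the standard warning. The structural input I would use is that $\su(\H)$ is a \emph{compact} Lie algebra, so its subalgebra $\k$ is itself of compact type and hence reductive, $\k=\mathfrak{z}(\k)\oplus[\k,\k]$ with $[\k,\k]$ compact semisimple. For the semisimple ideal, Weyl's theorem gives that the simply connected group integrating $[\k,\k]$ is compact, so the image of that group under the homomorphism into $\SU(\H)$ integrating $[\k,\k]\hookrightarrow\su(\H)$ is compact, hence a closed subgroup; by Cartan's closed-subgroup theorem it is a compact Lie subgroup. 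For $K$ in full generality one then appeals to the structure theory of compact Lie algebras and their analytic subgroups as in \cite{hofmann2013} to conclude $K$ is closed, hence compact; in the applications of the lemma below $\k$ is reductive with at most a scalar centre, a case in which closedness is transparent, so only this situation is genuinely needed.

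To assemble the argument I would proceed in the order: (i) invoke the integral-subgroup theorem to produce the connected immersed subgroup $K$ with $\Lie(K)=\k$; (ii) identify $K=\langle\exp(\ii\k)\rangle$ via the neighbourhood-generation principle; (iii) establish closedness of $K$ in $\SU(\H)$ from the compact (reductive) structure of $\k$; (iv) deduce compactness and record the three assertions $\Lie(K)=\k$, $K$ connected, $K$ compact. Steps (i), (ii), (iv) are routine Lie-theoretic bookkeeping; the substantive point, and the one I would be most careful to pin down precisely or to cite cleanly, is the closedness in step (iii).
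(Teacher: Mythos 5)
The paper offers no proof of this lemma---it is quoted from \cite{hofmann2013}---so there is nothing to compare your route against; what your attempt usefully exposes is that the statement, read literally for an arbitrary subalgebra $\k\subset\su(\H)$, is not true, and your step (iii) is exactly where any proof must break down. The ``standard warning'' you mention is not merely a warning: the one-dimensional abelian subalgebra $\k=\R X$ with $X$ a traceless Hermitian operator whose eigenvalues are rationally independent (e.g.\ $X=\mathrm{diag}(1,\sqrt{2},-1-\sqrt{2})$ in $\su(3)$) is a bona fide subalgebra of $\su(\H)$, yet the group generated by $\exp(\ii tX)$ winds densely in a two-torus: as an abstract group it is connected with Lie algebra $\k$ but is not compact and not closed, while its closure is compact and connected but has Lie algebra of dimension $2\neq\dim\k$. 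Hence no appeal to ``the structure theory of compact Lie algebras and their analytic subgroups'' can deliver closedness of $K$ ``in full generality''; the conclusion you aim for there is false. Steps (i) and (ii) are fine, and your handling of the semisimple part is the correct core: $[\k,\k]$ is of compact type, Weyl's theorem makes its simply connected integral group compact, and the continuous homomorphic image of a compact group is compact, hence closed (Goto's theorem in the perfect case $\k=[\k,\k]$).

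The honest repair is to restrict the lemma to the case actually needed. Either assume $\k=[\k,\k]$, or observe---as the paper implicitly does in the proof of Theorem 2---that every algebra $\g$ to which the lemma is applied contains the image of an irreducible representation of a simple Lie algebra, so its commutant consists of scalars and its center inside $\su(\H)$ is therefore trivial; thus $\g$ is semisimple and your Weyl/Goto argument closes the proof. Alternatively, one can read $\langle\exp(\ii\k)\rangle$ with the paper's closure convention for generated gate-sets, in which case compactness and connectedness are immediate (a closed subgroup of the compact group $\SU(\H)$ is a compact Lie subgroup by Cartan's theorem, and the closure of a path-connected set is connected), but the final assertion must be weakened to the inclusion $\k\subset\Lie\left(\langle\exp(\ii\k)\rangle\right)$---which, as it happens, is all that the maximality arguments in the proofs of the main theorems actually require.
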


\subsection{Hamiltonian extensions of linear optics}

\begin{customthm}{2}[Extensions of Passive Bosonic  Linear Optics via an additional Hamiltonian]\label{thm:BOSHAM2}
	Let $X\notin \Lie\left(\LOB\right)$ be a Hamiltonian acting on Hilbert space $\Hbos$ of $N$ bosons in $d$ modes. Let $\langle \LOB, X\rangle$ be the group of transformations generated by passive bosonic linear optics and $X$ in $\Hbos$. 
	We have the following possibilities:
	\begin{itemize}
		\item[(a)] If $d>2$, then $\langle \LOB, X\rangle=\U\left(\Hbos \right)$.
		\item[(b)] If $d=2$, $N\neq 6$, and $\left[X\ot \I + \I\ot X, \L_b\right]=0$, then 
		\begin{equation}
		\langle \LOB, X\rangle=G_b=\lbrace \left. U\in \U\left(\Hbos \right) \right|   \left[U\ot U, \L_b\right]=0\ \rbrace \ .
		\end{equation}
		\item[(c)] If $\left[X\ot \I + \I\ot X, \L_b\right]\neq0$, then  $\langle \LOB, X\rangle=\U\left(\Hbos\right).$
		
	\end{itemize}
		
\end{customthm}

\begin{proof}[Proof of Theorem \ref{thm:BOSHAM2}]
Let us first remark that since diagonal Hamiltonians are contained in $\Lie(\LOB)$ (see Eq.~\eqref{eq:LieBOS}), we can assume without loss of generality that $X$ is traceless, i.e., belongs to the Lie algebra $\su(\Hbos)$ of the special unitary group $\SU(\Hbos)$ (see Appendix \ref{ap:LieEx}). In what follows we describe the possible forms of the group $\GEN = \langle \Pi_b(\SU(d)),X\rangle$ generated by linear optical gates $\Pi_b(\SU(d))$ of determinant one and unitary evolutions of the form  $\exp(\ii t X)$. The results stated in the formulation of Theorem \ref{thm:BOSHAM2} will be then obtained by using the relation
\begin{equation}\label{eq:genRELBos}
\langle \LOB,X\rangle= \langle \GEN, \T(\Hbos) \rangle  ,
\end{equation}
where $\T(\Hbos)$ is the group of unitaries on $\Hbos$ of the form $\exp(\ii \theta) \I$, where $\theta\in\R$. The structure of $\GEN$  relies on the classification of maximal simple Lie subalgebras of classical simple compact  Lie algebras due to Dynkin \cite{Dynkin2000}. Recall that $\Lie\left(\Pi_b(\SU(d))\right)=\pi_b(\su(d))$ is the image of a representation of a compact simple  Lie algebra $\su(d)$. By definition (see Lemma~\ref{lem:LIEalgGRU}) the group $\GEN$ is a connected compact Lie group contained in $\SU(\Hbos)$, and we denote its Lie algebra by $\mathfrak{g}\subset\su\left(\Hbos\right)$.  Because $X\notin\pi_b(\su(d))$ and $X\in\su(\Hbos)$, we have the following  inclusions of compact connected Lie groups together with the inclusions of the corresponding Lie algebras
\begin{eqnarray}
\Pi_b(\SU(d))  &\subset \GEN & \subset \SU\left(\Hbos\right)\ , \\
\pi_b(\su(d))  &\subset \g &\subset \ \su\left(\Hbos\right)\ .
\end{eqnarray}
In \cite{Dynkin2000} it was proven that for $d\, {>}\,2$ the Lie algebra $\pi_b(\su(d))$ is a maximal Lie subalgebra in $\su\left(\Hbos\right)$. Therefore, by the assumption that $X\notin\pi_b(\su(d))$, we have $\g=\su\left(\Hbos\right)$.  Since $\GEN$ is the unique simply-connected Lie group with Lie algebra $\g$ we have $\GEN=\SU(\H_b)$.  This together with \eqref{eq:genRELBos}  concludes the proof of the case (a) since  $\langle \SU(\Hbos), \T(\Hbos) \rangle = \U(\Hbos) $.

We will now treat cases (b) and (c), focusing first on the situations with $d=2$ modes for $N\neq 6$. For such a situation we have the following maximal inclusions of Lie subalgebras \cite{Dynkin2000}
\begin{equation}\label{eq:inclBOSeven}
\pi_b(\su(2)) \subset \so\left(\Hbos \right) \subset \su\left(\Hbos\right)
\end{equation}
for even $N$, and
\begin{equation}\label{eq:inclBOSodd}
\pi_b(\su(2))  \subset \usp\left(\Hbos \right) \subset \su\left(\Hbos\right)
\end{equation}
for odd $N$. By "maximal inclusions" in the previous sentence we mean that
\begin{itemize}
\item[(i)] there are no other possible Lie algebras between $\pi_b(\su(2))$ and $\so\left(\Hbos\right)/ \usp\left(\Hbos \right)$;
\item[(ii)] there are no other possible Lie subalgebras between $\so\left(\Hbos\right)/ \usp\left(\Hbos \right)$ and $\su(\Hbos)$.
\end{itemize} 
Moreover, from Table 5 in \cite{Dynkin2000} it follows that for $N\neq 6$   there is no other Lie algebras between $\pi_b (\su(2))$ and $\su(\Hbos)$.  The Lie groups $\SO\left(\Hbos \right)$ and $\USP\left(\Hbos \right)$ are defined as subgroups of $\SU\left(\Hbos\right)$ that preserve, respectively, the symmetric and the antisymmetric non-degenerate real bilinear form $B$ on $\Hbos$ (see Lemma \ref{Projbosons})
\begin{equation}
B\left( \ket{\psi},\ket{\phi}\right )\defeq \bra{\Psi_b}(\ket{\psi}\ot\ket{\phi})\ .
\end{equation}
From the above discussion we can deduce, repeating the reasoning given in the proof of (a), that there are three possibilities:
\begin{itemize}
	\item [(i)] $\GEN=\SO(\Hbos)$ (for even $N\neq 6$);
	\item [(ii)] $\GEN=\USP(\Hbos)$ (for odd $N$); 
	\item [(iii)]  $\GEN=\SU(\Hbos)$  (for $N\neq 6$).
\end{itemize}
The cases (i) and (ii) occur if and only if unitaries $V_t=\exp(\ii t X)$ preserve the form $B$ (for suitable $N$).  The preservation of $B$ is equivalent to $V_t\ot V_t \ket{\Psi_b} = \ket{\Psi_b}$, which on the Lie algebra level translates to 
\begin{equation}\label{eq:Preserv}
\left( X\ot\I +\I\ot X\right) \ket{\Psi_b} = 0 .  
\end{equation} 
Under the assumption $\tr(X)=0$ the above condition equivalent to 
\begin{equation}\label{eq:BOScomPR}
\left[X\ot\I +\I\ot X , \kb{\Psi_b}{\Psi_b} \right]=0\ .
\end{equation}
We have therefore proved point (b) of the Theorem. If the commutator \eqref{eq:BOScomPR} does not vanish we know that unitaries $V$ do not preserve $B$ and therefore we are in the case (iii). Using \eqref{eq:genRELBos} we can therefore conclude 
the proof of the point (c) of the Theorem for $N\neq 6$.

The situation for $N\, {=} \, 6$ bosons in $d\, {=} \,2$ modes is more complicated as in this case $\pi_b(\su(2))$ is not maximally embedded in $\so(\Hbos)$, instead it is maximal in the seven dimensional representation of the Lie algebra of the exotic simple Lie algebra $\g_2$, which itself is a maximal Lie subalgebra of $\so(\Hbos)$,
\begin{equation}\label{eq:simpleC7list}
\pi_b(\su(2)) \subset \g_2 \subset \so(\Hbos) \subset \su(\Hbos)\ .
\end{equation} 
The corresponding inclusions on the Lie group level have the form
\begin{equation}
\Pi_b(\SU(2)) \subset \mathrm{G}_2 \subset \SO(\Hbos) \subset \SU(\Hbos)\ .
\end{equation} 
First, we observe that $\GEN$ is a connected compact Lie group (this follows form Lemma \ref{lem:LIEalgGRU}). Moreover, its Lie algebra $\g$ must be simple (this is because $\g$ contains $\pi_b(\su(2))$, whose elements commute only with the identity on $\Hbos$, due to the irreducibility of $\pi_b$).  On the other hand form the results of Dynkin (see Table 5 of \cite{Dynkin2000} ) it follows that all simple Lie  subalgebras of $\su(\Hbos)$ (for $d=2$ and $N=6$ dimension of $\Hbos$ is 7) are the ones appearing in the sequence of inclusions \eqref{eq:simpleC7list}. Since $\mathrm{G}_2 \subset \SO(\Hbos)$ we can use the reasoning analogous to the one given in the proof of (c) for $N\neq6$ to conclude that $\left[X\ot\I +\I\ot X , \kb{\Psi_b}{\Psi_b} \right]\neq$ implies $\GEN=\SU(\Hbos)$. Using this and \eqref{eq:genRELBos}  we finish the proof.
\end{proof}
\begin{customthm}{5}[Extensions of Passive  Fermionic Linear Optics via additional Hamiltonian]\label{thm:FERMHAM}
	Let $X\notin \Lie\left(\LOF\right)$ be a Hamiltonian acting on Hilbert space of $N$ fermions in $d$ modes $\Hfer$,  where $N\notin\lbrace 0,1,d-1,d\rbrace$. Let $\langle \LOF, X\rangle$ be the group  generated by passive fermionic linear optics and $X$ in $\Hfer$. For $d=2N$ (half-filling) let $L_f\in\Herm\left(\Hfer\ot\Hfer\right)$ be defined as in Eq.~\eqref{eq:Lfer}. 	We have the following possibilities:
	\begin{itemize}
		\item[(a)] If $d\neq 2N$, then $\langle \LOF, X\rangle=\U\left(\Hfer\right)$ .
		\item[(b)] If $d=2N$ and $\left[X\ot\I+\I\ot X, \L_f\right]=0$, then
	\begin{equation}
		\langle \LOF, X\rangle=G_f=\lbrace \left. U\in \U\left(\Hfer \right) \right|    \ [U\ot U, \L_f]=0\ \rbrace \ .
		\end{equation}
		\item[(c)] If $d=2N$ and $[V\ot V, \L_f]\neq 0$, then  $\langle \LOF, X\rangle =\U\left(\Hfer\right)$.
	\end{itemize}
\end{customthm}

\begin{customthm}{6}[Extensions of Active Fermionic Linear Optics via additional Hamiltonian]\label{thm:FLOHAM}
	Let $X\notin \Lie\left(\FLO\right)$ be a  Hamiltonian acting on positive-parity subspace of fermionic $d$-mode Fock space $\Hfree$ with $d>3$. Let $\langle \FLO, X\rangle$ be the group of transformations generated by active fermionic linear optics and $X$ in $\Hfree$. For $d=2m$ let $\L_{\FLO}\in\Herm\left(\Hfer\ot\Hfer\right)$ be defined as in Eq.\eqref{eq:LFLO}.
	We have the following possibilities:
	\begin{itemize}
		\item[(a)] If $d\neq 2k$, then $\langle \FLO, X\rangle=\U\left(\Hfree\right)$ .
		\item[(b)] If $d=2k$ and $[X\ot\I +\I\ot X , \L_{\FLO}]=0$, then 
		\begin{equation}
	\langle \FLO, X\rangle= 	G_{\FLO}=\lbrace \left. U\in \U\left(\Hfree \right) \right|    [U\ot U, \L_{\FLO}]=0\ \rbrace \ .
		\end{equation}
		\item[(d)] If $d=2k$ and $[X\ot\I +\I\ot X , \L_{\FLO}]\neq 0$, then  $\langle \FLO, X\rangle=\U\left(\Hfree\right)$.	
	\end{itemize}
\end{customthm}
\begin{proof}[Proofs of Theorems \ref{thm:FERMHAM} and \ref{thm:FLOHAM}]
The proofs of both theorems are very similar and analogous to the proof for the bosonic passive linear optics. For this reason we  present them together. In what follows we will use $K$ to denote passive ($\LOF$) or active ($\FLO$) fermionic linear optical transformations. By $K_s$ we will denote the "simple" part of $K$ i.e. a subgroup of $K$ consisting  of operators having unit determinant on the appropriate fermionic Hilbert space $\H$ (equal to $\Hfer$ or $\Hfree$ respectively). The symbols $\k$ and $\k_s$ will denote the Lie algebras of $K$ and $K_s$ respectively.  The main steps are analogous to the ones form the proof  Theorem \ref{thm:BOSHAM2}.

\begin{enumerate}
\item Without the loss of generality we can assume that $\tr(X)=0$. After setting $\GEN= \langle K_s, X \rangle$ we have
\begin{equation}
\langle K, X \rangle = \langle \GEN,\T(\H) \rangle\ .
\end{equation} 

\item For the cases specified in points (a) of Theorems \ref{thm:FERMHAM} and \ref{thm:FLOHAM} we have that $\k_s \subset \su(\H)$ are maximal simple subalgebras of $\su(\H)$ (see Table 5 of \cite{Dynkin2000}). Therefore, by repeating the reasoning analogous to the bosonic case, we get that under the assumptions given in point (a) we have $\langle K, X \rangle = \U(\H)$.

\item Analogously to the case of $\LOB$, for situations specified in the assumptions of points (b) and (c), the appropriate  representations $\Pi_f$ and $\Pi_{\FLO}^+$ preserve bilinear forms given by $B\left( \ket{\psi},\ket{\phi}\right )= \bra{\Psi} \ket{\psi}\ot  \ket{\phi}$, where, depending on the context, $\ket{\Psi}=\ket{\Psi_f}$ or  $\ket{\Psi}=\ket{\Psi_\FLO}$ (vectors  $\ket{\Psi_f},\ket{\Psi_\FLO}$ have been given explicitly in Lemma \ref{lem:ProjPassiveDiff} and Lemma \ref{lem:ExplictitProjFLO}).
 
\item Depending on the value of $d$,  bilinear forms $B$ are either symmetric or antisymmetric, giving rise to special orthogonal  or unitary symplectic  groups  ($\SO(\H)$ or   $\USP(\H)$) on $\H$. However, in the considered situations it is always the case (see again Table 5 from \cite{Dynkin2000}) that $\k_0$ is the maximal subalgebra in $\so(\H)$ or $\usp(\H)$ and there are no other Lie subalegbras of $\su(\H)$ containing $\k_s$.  This allows us to repeat essentially  the whole reasoning from the bosonic scenario. Once once again the commutator $\left[X\ot\I +\I\ot X , \kb{\Psi_b}{\Psi_b} \right]$ plays the crucial role in deciding of the structure of $\GEN$ and therefore $\langle K, X\rangle$.  The resulting  conditions are given precisely in points (b) and (c) of Theorems \ref{thm:FERMHAM} and \ref{thm:FLOHAM}.

\end{enumerate}
\end{proof}

\subsection{Gate extensions of linear optical gates}

It turns out that most o the results concerning  the Hamiltonian extensions of linear optical gates carry over to the gate- extension problems. Before we prove the main theorems we first give a number of auxiliary lemmas concerning the structure of the gates normalising linear optical transformations or related gate-sets. It turns out that the normalising gates play a crucial role in the proofs of results concerning gate extension problems.

\begin{lem}[Normalising gates for orthogonal and unitary symplectic gates]\label{lem:normBILIN}
Let $K=\SO(\H)$ or $K=\USP(\H)$ (for even dimensional $\H$). 
Let $V\in\U(\H)$ be the gate satisfying $V K V^\dagger = K$, that is for all $U\in K$  we have $V U V^\dagger = U'$, where $U'\in K$. Then, we have $V=\exp(\ii \theta) V'$, for $\theta \in\R$ and $V'\in K$.    
\end{lem}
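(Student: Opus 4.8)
The plan is to exploit the fact that $\SO(\H)$ and $\USP(\H)$ are, up to a determinant condition, the unitary stabilizers of a nondegenerate bilinear form on $\H$, and that this form is \emph{essentially unique}. Concretely, as in Lemma~\ref{Projbosons} (and the discussion of the ``middle groups'' in the main text), $\SO(\H)$ and $\USP(\H)$ preserve respectively a symmetric and an antisymmetric form $B(\ket\psi,\ket\phi)=\bra{\Psi}(\ket\psi\ot\ket\phi)$, where $\ket\Psi\in\H\ot\H$ spans the one-dimensional subspace of $K$-invariant vectors in $\H\ot\H$. This one-dimensionality is the representation-theoretic input: it holds because the defining action of $K$ on $\H$ is irreducible and self-dual, which is automatic for $\USP(\H)$ and valid for $\SO(\H)$ once $\dim\H\geq 3$. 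With this, $\USP(\H)=\SET{U\in\U(\H)}{(U\ot U)\ket\Psi=\ket\Psi}$, while $\SO(\H)$ is the subgroup of $\SET{U\in\U(\H)}{(U\ot U)\ket\Psi=\ket\Psi}$ cut out by $\det U=1$.

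First I would show that $V\ot V$ preserves the line $\C\ket\Psi$. If $(U\ot U)\ket{w}=\ket{w}$ for every $U\in K$, then, using $V^\dagger U V\in K$ (which follows from $VKV^\dagger=K$),
\[
(U\ot U)(V\ot V)\ket{w}=(V\ot V)\bigl((V^\dagger U V)\ot(V^\dagger U V)\bigr)\ket{w}=(V\ot V)\ket{w} ,
\]
so $(V\ot V)\ket{w}$ is again $K$-invariant. By one-dimensionality, $(V\ot V)\ket\Psi=c\,\ket\Psi$ for some $c\in\C$, and $|c|=1$ because $V\ot V$ is unitary; equivalently $\bigl[V\ot V,\kb{\Psi}{\Psi}\bigr]=0$, i.e.\ $V$ preserves $B$ up to a phase.

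Then comes the phase and determinant cleanup. Choosing a square root $\zeta$ of $c^{-1}$ and setting $V'=\zeta V$ gives $(V'\ot V')\ket\Psi=\zeta^{2}c\,\ket\Psi=\ket\Psi$, so $V'$ genuinely preserves $B$ and $V=\e^{\ii\theta}V'$ with $\e^{\ii\theta}=\bar\zeta$. For $K=\USP(\H)$ this is already the claim, since then $V'\in\USP(\H)=K$. For $K=\SO(\H)$ the identity $(V'\ot V')\ket\Psi=\ket\Psi$ only puts $V'$ into the orthogonal group of $B$; if $\det V'=1$ then $V'\in\SO(\H)=K$, and otherwise one replaces $V'$ by $-V'$, which still preserves $B$ and has $\det(-V')=(-1)^{\dim\H}\det V'$, recovering an element of $\SO(\H)$ for the parity of $\dim\H$ arising here. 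In every case $V\in\T(\H)\cdot K$.

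The only substantive point is the uniqueness of the invariant form (the one-dimensionality of $(\H\ot\H)^{K}$); everything else is bookkeeping. An equivalent route is to observe that $\Ad_V$ is an automorphism of the simple Lie algebra $\k=\Lie(K)$, which is inner in the relevant situations (the Dynkin diagrams of types $C$ and $B$ have no symmetry), so that $\Ad_V=\Ad_W$ for some $W\in K$, after which $W^\dagger V$ centralizes the irreducibly acting $K$ and hence lies in $\T(\H)$ by Schur's lemma. I expect the handling of the determinant/reflection component in the orthogonal case to be the only delicate step: the clean general statement is that $V$ lies in $\T(\H)$ times the full stabilizer of $B$, and one uses the specific value of $\dim\H$ to collapse this onto $\T(\H)\cdot\SO(\H)$.
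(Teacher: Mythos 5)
Your argument is the same one the paper uses: identify $K$ (up to the determinant condition) with the stabilizer of the unique $K$-invariant line $\C\ket{\Psi}\subset\H\ot\H$, show that a normalising $V$ maps that line to itself, absorb the resulting phase, and conclude $V\in\T(\H)\cdot K$. The paper's proof stops at exactly the point where yours starts being careful: it passes from ``$(V\ot V)\ket{\Psi}=\e^{\ii\theta}\ket{\Psi}$'' directly to ``$V=\e^{\ii\theta'}U'$ with $U'\in K$'' without addressing the fact that, for $K=\SO(\H)$, preserving the symmetric form only places the rescaled gate in the \emph{full} real orthogonal group, which has a $\det=-1$ component. Your cleanup is the right thing to worry about, and your own hedge (``for the parity of $\dim\H$ arising here'') is exactly where the issue lives: the replacement $V'\mapsto -V'$ flips the determinant only when $\dim\H$ is odd. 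For even $\dim\H$ no scalar rescaling can fix it --- a reflection $R\in \mathrm{O}(\H)\setminus\SO(\H)$ normalises $\SO(\H)$, and $\e^{\ii\theta}R$ is real orthogonal only for $\e^{\ii\theta}=\pm1$, in which case its determinant is still $-1$ --- so in that case the correct conclusion is only that $V$ lies in $\T(\H)$ times the full stabilizer of $B$. This is not a defect of your argument relative to the paper's (the paper's one-line conclusion has the same unaddressed coset, and its applications to $\SO(\Hbos)$ with $\dim\Hbos=N+1$ odd are unaffected); but you should state the even-dimensional orthogonal caveat explicitly rather than leave it implicit. Your alternative route via automorphisms of $\k$ runs into the same obstruction, since $\so(\H)$ for even-dimensional $\H$ is of type $D$ and does have a diagram symmetry, realized precisely by conjugation with a reflection. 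The symplectic case is clean in both treatments, as elements of the unitary stabilizer of the symplectic form automatically have unit determinant.
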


\begin{proof}
The preservation of the invariant orthogonal or symplectic structure corresponds to preservation of the fixed vector $\ket{\Psi}\in \H \ot \H$ (whose symmetry decides whether we are dealing with orthogonal or symplectic group). In other words $\U\in K$ if and only if $U\ot U \ket{\Psi}=\ket{\Psi}$. Moreover, the vector $\ket{\Psi}$ is defined uniquely up to a global phase as the inly vector from $\H\ot \H$ satisfying $U\ot U \ket{\Psi}=\ket{\Psi}$ for all $U\in K$. Now, consider a gate $V$ that normalises $K$ and an arbitrary $U\in K$. We have 
\begin{equation}
(U\ot U)(V \ot V)  \ket{\Psi}= V\ot V (V^\dagger \ot V^\dagger)  (U\ot U)(V \ot V)  \ket{\Psi}=  V \ot V  \ket{\Psi}\ ,
\end{equation}
where we have used the fact that $V U V^\dagger \in K$. Since $U$ was arbitrary we conclude that $V\ot V \ket{\Psi} = \exp(\ii \theta)  \ket{\Psi}$ and therefore $V=\exp(\ii \theta') U'$, for suitable $\theta'\in\R$ and $U'\in K$. 
\end{proof}

\begin{rem}\label{rem:renNORM}
From the character theory it follows that compact groups irreducibly represented on $\H$ have at most one trivial representation in $\H \ot \H$. Therefore, if a given irreducible representation $\Pi$ preserves orthogonal or symplectic structure (defining the corresponding group $K$), the corresponding invariant form is also uniquely defined. By a slight modification of the above argument we get that if an irreducible representation $\Pi$ of a compact group $G$ is real or quaternionic and if for a unitary $V$ we have $V \Pi(G) V^\dagger = \Pi(G)$, then $V=\exp(\ii \theta) W$ , where $W\in K$ and $\theta\in\R$.
\end{rem}

\begin{lem}[Normalising gates for passive Bosonic linear optics]\label{lem:normBOS}
Let $V\in\U(\Hbos)$ be the gate satisfying $V \Pi_b(\SU(d)) V^\dagger =  \Pi_b(\SU(d))$, that is for all $U\in\Pi_b (\SU(d))$  we have $V U V^\dagger = U'$, where $U'\in\Pi_b(\SU(d)$. Then, we have $V=\exp(\ii \theta) V'$, for $\theta \in\R$ and $V'\in\Pi_b (\SU(d))$.    
\end{lem}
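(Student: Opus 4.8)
The plan is to reduce the statement to the theory of maximal subalgebras once more, exactly as in the proof of Lemma~\ref{lem:normBILIN}, but now using the fact that the bosonic representation $\Pi_b$ is \emph{self-dual}. Recall that for $d=2$ and $N$ particles the representation $\Pi_b$ is the spin-$\tfrac{N}{2}$ representation of $\SU(2)$, which is real for even $N$ and quaternionic for odd $N$; for $d>2$ the representation $\Pi_b$ of $\SU(d)$ on $\mathrm{Sym}^N(\C^d)$ is, in general, \emph{not} self-dual, so the two cases must be handled separately. In both cases the strategy is the same: a unitary $V$ normalising $\Pi_b(\SU(d))$ permutes the $\Pi_b(\SU(d))$-invariant subspaces of $\Hbos\ot\Hbos$, and in particular fixes (up to phase) the distinguished invariant vector in a one-dimensional invariant subspace; this forces $V$ to lie in the corresponding classical group, which in turn is pinned down by Dynkin's classification.

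First I would treat $d=2$. Here Lemma~\ref{Projbosons} tells us that $\Hbos\ot\Hbos$ contains a unique trivial subrepresentation of $\SU(2)$, spanned by $\ket{\Psi_b}$, so $\Pi_b$ is self-dual and $\Pi_b(\SU(2))$ preserves the nondegenerate bilinear form $B(\ket\psi,\ket\phi)=\bra{\Psi_b}(\ket\psi\ot\ket\phi)$ --- symmetric for even $N$, antisymmetric for odd $N$. By the same averaging argument as in the proof of Lemma~\ref{lem:normBILIN}, any $V$ normalising $\Pi_b(\SU(2))$ satisfies $V\ot V\ket{\Psi_b}=\e^{\ii\theta}\ket{\Psi_b}$, hence $V=\e^{\ii\theta'}W$ with $W\in\SO(\Hbos)$ (even $N$) or $W\in\USP(\Hbos)$ (odd $N$). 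It then remains to upgrade $W\in\SO(\Hbos)/\USP(\Hbos)$ to $W\in\Pi_b(\SU(2))$. This uses that conjugation by $V$ is an automorphism of $\Pi_b(\SU(2))$, hence an automorphism of the Lie algebra $\pi_b(\su(2))$ sitting inside $\so(\Hbos)$ or $\usp(\Hbos)$; since $\pi_b(\su(2))$ is a \emph{maximal} subalgebra of $\so(\Hbos)$ (for $N\neq 6$) or of $\usp(\Hbos)$ (odd $N$) by \cite{Dynkin2000}, and conjugation by $W^{-1}$ already sends $\pi_b(\su(2))$ to itself, one concludes $\mathrm{Ad}_{W}$ is an inner automorphism realised by an element of $\Pi_b(\SU(2))$ (using that $\SU(2)$ has no outer automorphisms), so $W$ differs from an element of $\Pi_b(\SU(2))$ by something commuting with all of $\pi_b(\su(2))$, which by Schur's lemma is a phase. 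The exceptional value $N=6$ must be checked by hand using the $\mathrm{G}_2$-chain \eqref{eq:simpleC7list}, but the conclusion still holds because $\mathrm{G}_2$ has no outer automorphisms either.

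Next I would treat $d>2$. Now $\Pi_b$ need not be self-dual, so I would instead use that $\pi_b(\su(d))$ is a \emph{maximal} subalgebra of $\su(\Hbos)$ (Dynkin, \cite{Dynkin2000}, for $d>2$), together with the observation that $\mathrm{Ad}_V$ is an automorphism of the \emph{abstract} Lie algebra $\su(d)\cong\pi_b(\su(d))$ (via the faithful representation $\pi_b$). Every automorphism of $\su(d)$ is inner or is inner composed with the Dynkin-diagram involution $X\mapsto -\overline X$ (complex conjugation, corresponding to $\Pi_b\mapsto\overline{\Pi_b}$). If $\mathrm{Ad}_V$ were of the latter type, then $V\Pi_b(\SU(d))V^\dagger$ would be the image of the conjugate representation $\overline{\Pi_b}$, and since $V$ is unitary this would force $\Pi_b\cong\overline{\Pi_b}$, i.e.\ $\Pi_b$ self-dual --- which fails for generic $d>2$ (e.g.\ it fails for $d\geq 3$ and $N\geq 1$ unless special coincidences occur). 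So $\mathrm{Ad}_V$ is inner, realised by some $U\in\Pi_b(\SU(d))$; then $U^{-1}V$ centralises $\Pi_b(\SU(d))$, and by irreducibility of $\Pi_b$ and Schur's lemma $U^{-1}V=\e^{\ii\theta}\I$, giving $V=\e^{\ii\theta}U$ as claimed.

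The main obstacle I expect is not the structural argument but bookkeeping the self-duality dichotomy cleanly: one must be careful that ``$\Pi_b$ is self-dual'' is \emph{exactly} the condition separating the $d=2$ and $d>2$ behaviour, and that in the self-dual ($d=2$) case one correctly distinguishes the orthogonal ($N$ even) from the symplectic ($N$ odd) invariant form, while also handling the genuinely exceptional $N=6$ case where $\mathrm{G}_2$ intervenes between $\Pi_b(\SU(2))$ and $\SO(\Hbos)$. In all branches the final ``upgrade'' step --- from membership in a classical group down to membership in $\Pi_b(\SU(d))$ itself --- rests on two inputs that are already available: Dynkin's maximality results \cite{Dynkin2000} and the absence of outer automorphisms for $\SU(2)$ and $\mathrm{G}_2$ (for $\SU(d)$, $d\geq 3$, the single outer automorphism is precisely the conjugation that self-duality rules out). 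Everything else is the standard averaging trick from Lemma~\ref{lem:normBILIN} combined with Schur's lemma.
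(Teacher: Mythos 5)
Your proposal is correct and follows essentially the same route as the paper's joint proof of Lemmas~\ref{lem:normBOS}--\ref{lem:normFLO}: conjugation by $V$ induces an automorphism of $\su(d)$ via the faithful irreducible representation $\pi_b$, for $d>2$ the outer automorphism is excluded because it would force $\Pi_b\cong\overline{\Pi_b}$ (impossible for $d>2$), and Schur's lemma then pins $V$ down to a phase times an element of $\Pi_b(\SU(d))$. The only difference is that your $d=2$ branch takes an unnecessary detour through $\SO(\Hbos)/\USP(\Hbos)$, Dynkin maximality, and the $N=6$/$\mathrm{G}_2$ case --- none of which is needed, since the decisive facts ($\Aut(\su(2))=\Inn(\su(2))$ together with Schur's lemma) already apply directly to $V$, exactly as in the paper.
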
  

\begin{lem}[Normalising gates form passive Fermionic linear optics: generic case]\label{lem:normFERgen}
Let $V\in\U(\Hfer)$ be the gate satisfying $V \Pi_f(\SU(d)) V^\dagger = \Pi_f (\SU(d))$, that is for all $U\in\Pi_f (\SU(d))$  we have $V U V^\dagger = U'$, where $U'\in\Pi_f(\SU(d)$. Moreover, assume that $2N\neq d$.  Then, we have $V=\exp(\ii \theta) V'$, for $\theta \in\R$ and $V'\in\Pi_f (\SU(d))$. 
\end{lem}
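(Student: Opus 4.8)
The plan is to show that the hypothesis on $V$ forces conjugation by $V$ to implement an \emph{automorphism} of the simple Lie algebra $\su(d)$, and then to invoke the classification of such automorphisms together with the non-self-duality of $\Pi_f$ when $2N\neq d$. This runs parallel to the argument behind Lemma~\ref{lem:normBILIN} and Remark~\ref{rem:renNORM}, but now for a representation of complex type rather than real or quaternionic type.

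First I would set up the reduction. Since $\su(d)$ is simple and $\Pi_f$ is a nontrivial irreducible representation (recall $2\le N\le d-2$), the induced map $\pi_f:\su(d)\to\su(\Hfer)$ has trivial kernel, so $\pi_f(\su(d))=\Lie\big(\Pi_f(\SU(d))\big)\cong\su(d)$. The group $\Pi_f(\SU(d))$ is compact, hence a closed embedded Lie subgroup of $\U(\Hfer)$, and the hypothesis $V\Pi_f(\SU(d))V^\dagger=\Pi_f(\SU(d))$ says that conjugation by $V$ is a continuous automorphism of it. Differentiating and transporting along $\pi_f$ produces $\vartheta\in\Aut(\su(d))$; since $\SU(d)$ is simply connected, $\vartheta$ integrates to $\Theta\in\Aut(\SU(d))$, and comparing differentials of the two smooth homomorphisms $\SU(d)\to\U(\Hfer)$ gives $V\Pi_f(g)V^\dagger=\Pi_f(\Theta(g))$ for every $g\in\SU(d)$.

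Next I would split into the inner and outer cases. In the regime under consideration one has $d\ge 5$, so $\Aut(\su(d))/\Inn(\su(d))\cong\mathbb{Z}/2\mathbb{Z}$, generated by the contragredient map, i.e.\ on the group level by $\tau(g)=\overline{g}$. If $\Theta$ is inner, $\Theta=\Ad_W$ with $W\in\SU(d)$, then $\Pi_f(W)^\dagger V$ commutes with all of $\Pi_f(\SU(d))$, so by Schur's lemma it equals $\exp(\ii\theta)\I$ for some real $\theta$, giving $V=\exp(\ii\theta)\Pi_f(W)$ with $\Pi_f(W)\in\Pi_f(\SU(d))$, which is the claim. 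If instead $\Theta=\Ad_W\circ\tau$ is outer, then using $\Pi_f(g)=g^{\ot N}\big|_{\Hfer}$ and $\Pi_f(\overline{g})=\overline{\Pi_f(g)}$ we get $V\Pi_f(g)V^\dagger=\Pi_f(W)\,\overline{\Pi_f(g)}\,\Pi_f(W)^\dagger$, so $\Pi_f$ is unitarily equivalent to its complex conjugate, i.e.\ $\Pi_f$ is self-dual. But a self-dual unitary irreducible representation preserves a non-degenerate bilinear form, equivalently $\Hfer\ot\Hfer$ contains a trivial subrepresentation; by the tensor-square decomposition used in Lemma~\ref{lem:Projfermions} (equivalently, since $(\wedge^N\C^d)^\ast\cong\wedge^{d-N}\C^d$ as $\SU(d)$-modules) this occurs precisely when $d=2N$. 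As $2N\neq d$, the outer case is impossible, and the inner case finishes the proof.

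The only genuinely nontrivial ingredient is the last one: that $\Pi_f$ fails to be self-dual when $2N\neq d$. This is exactly where the non-half-filling hypothesis enters, and it is supplied by the representation-theoretic content of Lemma~\ref{lem:Projfermions}. Everything else is the routine ``normaliser of an irreducibly represented simple group equals that group times the phases'' reasoning, whose only subtlety is correctly accounting for the outer automorphism of $\su(d)$ — which is also why the $d=2N$ case, where $\Pi_f$ \emph{is} self-dual and an additional particle--hole normaliser $W=\prod_i(f_i+f_i^\dagger)$ appears, must be handled separately.
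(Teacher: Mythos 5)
Your proposal is correct and follows essentially the same route as the paper's proof: conjugation by $V$ induces an automorphism of $\su(d)$, the outer case is ruled out because it would force $\pi_f\approx\bar{\pi}_f$ (self-duality), which for $\wedge^N\C^d$ happens only at half-filling, and the inner case is finished by Schur's lemma. Your write-up is somewhat more explicit than the paper's (integrating the Lie-algebra automorphism to a group automorphism via simple-connectedness, and identifying self-duality with a trivial subrepresentation in $\Hfer\ot\Hfer$), but the argument is the same.
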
 

\begin{lem}[Normalising gates form passive Fermionic linear optics: half-filling case]\label{lem:normFERex}

Let $V\in\U(\Hfer)$ be the gate satisfying $V \Pi_f(\SU(d)) V^\dagger = \Pi_f (\SU(d))$, that is for all $U\in\Pi_f (\SU(d))$  we have $V U V^\dagger = U'$, where $U'\in\Pi_f(\SU(d)$. Moreover, assume that $2N= d$.  Then, we have $V=\exp(\ii \theta) W V'$ \emph{or} $V=\exp(\ii \theta) V'$  for $\theta \in\R$, $V'\in\Pi_f (\SU(d))$  and $W=\prod_{i=1}^d (f_i + f^\dagger_i)$.
\end{lem}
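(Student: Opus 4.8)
The plan is to mirror the proof of the generic case (Lemma~\ref{lem:normFERgen}), the only new feature being that at half-filling the representation $\Pi_f$ is self-dual, so the outer automorphism of $\SU(d)$ can be implemented by a unitary on $\Hfer$ --- and $W$ will turn out to be one such implementer, which is exactly the source of the first alternative. First I would observe that, since $\Pi_f(\SU(d))$ is connected, conjugation by any normalising gate $V$ restricts to an automorphism $\phi_V$ of the image Lie algebra $\pi_f(\su(d))\cong\su(d)$, and that the inner automorphisms of this algebra are precisely the $\Ad(\Pi_f(g))$ with $g\in\SU(d)$. Because $2N=d$ together with $N\notin\{0,1,d-1,d\}$ forces $d\geq 4$, the algebra $\su(d)$ has outer automorphism group $\mathbb{Z}/2\mathbb{Z}$, generated by the Dynkin involution $X\mapsto -X^{T}$ (on the group, $U\mapsto\overline U$); cf.\ \cite{HallGroups,Dynkin2000}. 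Hence every $\phi_V$ is either inner, or an inner automorphism composed with the Dynkin involution.

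Next I would dispatch the inner case: if $\phi_V=\Ad(\Pi_f(g))$ then $\Pi_f(g)^{\dagger}V$ commutes with the whole irreducible family $\Pi_f(\SU(d))$, so Schur's lemma gives $\Pi_f(g)^{\dagger}V=\exp(\ii\theta)\I$, i.e.\ $V=\exp(\ii\theta)V'$ with $V'=\Pi_f(g)$; this is the second alternative in the statement. (The same step, used alone, is all that is needed in the generic non half-filling case of Lemma~\ref{lem:normFERgen}, where no outer automorphism is available.)

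The core of the argument is then to show that $W=\prod_{i=1}^{d}(f_i+f_i^{\dagger})=\prod_{i=1}^{d}m_{2i-1}$, restricted to $\Hfer$, is itself a normalising gate whose induced automorphism $\phi_W$ is the Dynkin involution. I would verify: (i) $W$ is a product of the pairwise-anticommuting Hermitian unitaries $m_{2i-1}$, hence unitary, and it maps a Fock state $\ket{n_1,\dots,n_d}$ to $\pm\ket{1-n_1,\dots,1-n_d}$, which lies in $\Hfer$ since $\sum_i(1-n_i)=d-N=N$; (ii) using $\{m_j,m_k\}=2\delta_{jk}\I$ and that $d=2N$ is even, one gets $W m_{2k-1}W^{\dagger}=-m_{2k-1}$ and $W m_{2k}W^{\dagger}=m_{2k}$, hence $W f_k^{\dagger}W^{\dagger}=-f_k$; feeding this into the passive (Bogoliubov) action $\Pi_f(U)\colon f_k^{\dagger}\mapsto\sum_l U_{lk}f_l^{\dagger}$ shows that $W\Pi_f(U)W^{\dagger}$ implements the single-particle transformation $\overline U$, so $W\Pi_f(U)W^{\dagger}=\Pi_f(\overline U)$ (a passive transformation of $\Hfer$ is determined by its action on the $f_k^{\dagger}$, the residual global phase being pinned down by evaluating both sides on the fully occupied Fock state). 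Since $\overline{\SU(d)}=\SU(d)$, this gives $W\Pi_f(\SU(d))W^{\dagger}=\Pi_f(\SU(d))$ and shows $\phi_W$ is outer. Finally, if $\phi_V$ is not inner, then $\phi_{W^{\dagger}V}=\phi_W^{-1}\circ\phi_V$ is inner (the outer group has order two), so the inner case applied to the normalising gate $W^{\dagger}V$ yields $W^{\dagger}V=\exp(\ii\theta)\Pi_f(g)$, i.e.\ $V=\exp(\ii\theta)WV'$ --- the first alternative. Combining the two cases completes the proof.

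The only step that is more than bookkeeping is (ii): checking that conjugation by $W$ reproduces complex conjugation on $\SU(d)$, i.e.\ realises the outer automorphism of $\Pi_f(\SU(d))$. Conceptually such an implementer can exist only because $\Pi_f$ is self-dual at half-filling --- already visible from $\Hfer\ot\Hfer$ containing a trivial summand (Lemma~\ref{lem:Projfermions}; see also the symmetry of $\ket{\Psi_f}$ noted after Lemma~\ref{lem:ProjPassiveDiff}) --- whereas away from half-filling $\Pi_f$ is not self-dual, no analogue of $W$ exists, and the normaliser is connected; this is precisely the contrast between Lemma~\ref{lem:normFERgen} and the present one.
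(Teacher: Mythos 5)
Your proof is correct and follows essentially the same route as the paper's: conjugation by a normalising gate induces an automorphism of $\su(d)$, the inner case is settled by Schur's lemma, and the particle--hole gate $W$ is shown to realise the outer (Dynkin) automorphism at half-filling, so that $W^\dagger V$ reduces to the inner case. The one quibble is your phase-fixing step — the fully occupied Fock state does not lie in $\Hfer=\wedge^N(\C^{2N})$ — but the phase is immaterial: working at the Lie-algebra level, $W\pi_f(X)W^\dagger=\pi_f(-X^T)$ already yields the exact set equality $W\Pi_f(\SU(d))W^\dagger=\Pi_f(\SU(d))$ by exponentiation, which is all the argument needs and is precisely how the paper proceeds.
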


\begin{lem}[Normalising gates form active Fermionic linear optics]\label{lem:normFLO}
Let $V\in\U(\Hfree)$ be the gate satisfying $V \Pi^+_\FLO(\Spin(2d)) V^\dagger = \Pi^+_\FLO(\Spin(2d))$, that is for all $U\in\Pi^+_\FLO(\Spin(2d))$  we have $V U V^\dagger = U'$, where $U'\in\Pi^+_\FLO(\Spin(2d))$. Moreover, assume that $d\geq 4$.  Then, we have $V=\exp(\ii \theta) V'$, for $\theta \in\R$ and $V'\in\Pi^+_\FLO(\Spin(2d))$. 
\end{lem}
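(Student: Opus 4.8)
\emph{Proof plan.} The plan is to convert this normaliser question into a statement about automorphisms of $\Spin(2d)$ and then exploit the rigidity of the half-spin weight. Suppose $V\in\U(\Hfree)$ normalises $G\defeq\Pi^+_\FLO(\Spin(2d))$. Then $x\mapsto VxV^\dagger$ is a continuous — hence smooth — automorphism of the compact connected Lie group $G=\Spin(2d)/\Gamma$, where $\Gamma\subseteq Z(\Spin(2d))$ is the kernel of the positive half-spin representation. Since $\Spin(2d)$ is the universal cover, this automorphism lifts to some $\tilde\phi\in\Aut(\Spin(2d))$ with $\tilde\phi(\Gamma)=\Gamma$, and by the standard structure theory $\tilde\phi=\Ad_{h_0}\circ\psi$ with $h_0\in\Spin(2d)$ and $\psi$ a Dynkin-diagram automorphism of $D_d$. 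Replacing $V$ by $\Pi^+_\FLO(h_0)^{-1}V$ — which changes $V$ only by a factor in $G$ and so leaves what is to be proved untouched — I may assume that $V\,\Pi^+_\FLO(h)\,V^\dagger=\Pi^+_\FLO(\psi(h))$ for all $h\in\Spin(2d)$.

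The crucial observation is that this last identity says exactly that $\Pi^+_\FLO\circ\psi\cong\Pi^+_\FLO$ as representations of $\Spin(2d)$, with $V$ an intertwiner. Since $\Pi^+_\FLO$ is irreducible with highest weight one of the two spinor fundamental weights of $D_d$, while $\Pi^+_\FLO\circ\psi$ is irreducible with highest weight obtained by applying $\psi$ to it, $\psi$ must fix that spinor weight. For $d\geq5$ one has $\mathrm{Out}(D_d)\cong\mathbb{Z}_2$, generated by the involution that interchanges the two spinor weights, so the only diagram automorphism fixing a spinor weight is the identity; hence $\psi=\mathrm{id}$, $V$ commutes with the irreducibly acting family $\Pi^+_\FLO(\Spin(2d))$, and Schur's lemma gives $V\in\T(\Hfree)$. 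Undoing the reduction yields $V=\exp(\ii\theta)V'$ with $V'\in\Pi^+_\FLO(\Spin(2d))$, as claimed. (For even $d$ one may instead begin with Remark~\ref{rem:renNORM} — $\Pi^+_\FLO$ is real when $d\equiv0\pmod4$ and quaternionic when $d\equiv2\pmod4$ — placing $V$ in $\T(\Hfree)\cdot\SO(\Hfree)$ or $\T(\Hfree)\cdot\USP(\Hfree)$ and then running the same diagram-automorphism argument inside that group; this is the exact counterpart of the passive half-filling case of Lemma~\ref{lem:normFERex}, where the pertinent $A_{d-1}$ diagram automorphism does fix the relevant weight and accordingly produces the extra possibility $V=\exp(\ii\theta)WV'$.)

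The step I expect to be the genuine obstacle is $d=4$: there $\mathrm{Out}(D_4)$ is the triality group $S_3$, which contains a nontrivial element fixing the chosen spinor weight (the transposition interchanging the vector node with the other spinor node), so the weight argument alone does not conclude and $d=4$ has to be treated separately. For this value I would use the fact that $\Pi^+_\FLO$ is then a real $8$-dimensional representation, so that $\dim\Spin(8)=28=\dim\SO(\Hfree)$ together with connectedness forces $\Pi^+_\FLO(\Spin(8))$ to be the full group $\SO(\Hfree)$; the classical identification $N_{\U(\Hfree)}(\SO(\Hfree))=\T(\Hfree)\cdot\mathrm{O}(\Hfree)$ then shows the only possible additional normalising elements are orthogonal reflections realising the triality involution. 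Such a reflection satisfies $[V\ot V,\L_\FLO]=0$, and adjoining it to $\FLO=\T(\Hfree)\cdot\SO(\Hfree)$ only produces $\T(\Hfree)\cdot\mathrm{O}(\Hfree)=G_\FLO$, so this exceptional normaliser is immaterial for Theorem~\ref{thm:FLOGATES}; at $d=4$, however, the conclusion of the lemma should be read as $V\in\T(\Hfree)\cdot\mathrm{O}(\Hfree)$. The routine parts I would postpone are verifying that the diagram automorphisms preserving $\Gamma$ are exactly those fixing the spinor weight, and exhibiting the triality-induced unitary concretely as a reflection.
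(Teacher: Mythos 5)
Your proof is correct for $d\geq 5$ and arrives at the paper's conclusion by a genuinely different (though parallel) route. The paper argues at the Lie-algebra level: conjugation by $V$ gives an automorphism $\tau_V$ of $\so(2d)$, the outer class of $\Aut(\so(2d))=\Inn(\so(2d))\cup\Inn(\so(2d))\cdot\lbrace\beta\rbrace$ is excluded by the explicit identity $\prod_{i=1}^{d}\pi^{+}_{\FLO}(E_{2i-1,2i})=\I/2^{d}$, whose sign $\beta$ would flip, and Schur's lemma finishes the inner case. You instead lift the induced automorphism to $\Spin(2d)$, factor it as $\Ad_{h_0}\circ\psi$ with $\psi$ a diagram automorphism, and note that the resulting intertwining relation forces $\psi$ to fix the positive half-spin weight, which the nontrivial element of $\mathrm{Out}(D_d)\cong\mathbb{Z}_2$ does not. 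The two exclusions carry the same content --- both amount to the inequivalence of the positive- and negative-parity spinor representations --- but yours is the cleaner structural statement while the paper's is more elementary and self-contained.

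Your separate treatment of $d=4$ is not only legitimate but actually corrects the paper. The paper settles $d=4$ by invoking Lemma \ref{lem:normBILIN} together with $\Pi^{+}_{\FLO}(\Spin(8))=\SO(\Hfree)$; however, the proof of Lemma \ref{lem:normBILIN} only yields $V\ot V\ket{\Psi}\propto\ket{\Psi}$, which for the symmetric form places $V$ in $\T(\Hfree)\cdot\mathrm{O}(\Hfree)$ rather than $\T(\Hfree)\cdot\SO(\Hfree)$. Since $\dim\Hfree=8$ is even these differ: if $R$ is a reflection and $R=\e^{\ii\theta}V'$ with $V'\in\SO(\Hfree)$, comparing determinants gives $\e^{8\ii\theta}=-1$, so $\e^{-\ii\theta}\notin\lbrace\pm1\rbrace$ and $V'=\e^{-\ii\theta}R$ cannot be real --- a contradiction. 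Such reflections do normalise $\SO(\Hfree)=\Pi^{+}_{\FLO}(\Spin(8))$ (they realise the triality transposition fixing the positive spinor node), so the lemma as stated fails at $d=4$ and its conclusion should there be weakened to $V\in\T(\Hfree)\cdot\mathrm{O}(\Hfree)$, exactly as you observe. You are also right that this is harmless downstream: these extra normalisers satisfy $[V\ot V,\L_{\FLO}]=0$ and lie in $G_{\FLO}$, so the case analysis in Theorem \ref{thm:FLOGATES} is unaffected.
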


Proofs of the above lemmas are to large extent similar and we present them together.

\begin{proof}[Proof of Lemmas \ref{lem:normBOS}, \ref{lem:normFERgen}, \ref{lem:normFERex} and \ref{lem:normFLO}]
Let  $V\in\U(\H)$ normalize an irreducible representation of a compact Lie group $\Pi(K)\subset \U(\H)$. Since any  one-parameter subgroup of this group is mapped to another one-parameter subgroup of $\Pi(K)$, one obtains
\begin{equation}
V \pi (\k) V^{\dagger}\ =\pi (\k)\ ,
\end{equation}
where $\k=\Lie(K)$ and $\pi$ is the induced representation of of the $\k$ in $\H$. Let us moreover assume that the representation $\pi$ is the faithful representation of $\k$ (this is the case for the considered groups representations). We can thus infer that for each $X \in \k $ there exists a unique $X'\in \k$ such that
\begin{equation}
V\pi(X) V^{\dagger}=\pi(X')\ .
\end{equation}
Therefore, one can define a unique mapping $\tau_V : \k \to \k$ with the property that for all $X\in\k$ $V\pi(X) V^{\dagger}=\pi(\tau_V(X))$. Due to the invertibility of $V$, the map $\tau$ is bijective.  It is also by definition linear and, from the identity $V\pi([X_1, X_2]) V^{\dagger}= [V\pi(X_1)V^{\dagger}, V\,\pi(X_2)V^{\dagger}]$, one can infer that $[\tau_V(X_1), \tau_V(X_2)]=\tau_V([X_1, X_2])$. Thus, $\tau_V$ is an \emph{automorphism} of $\k$. We will use this information to prove statements given in Lemmas \ref{lem:normBOS}, \ref{lem:normFERgen}, \ref{lem:normFERex} and \ref{lem:normFLO} (in almost all cases appearing in these lemmas \footnote{The exception is the case $\k=so(4)$ appearing for $\FLO$ for $d=2$ modes. For this case we have $\pi^+_\FLO(\so(4))\approx \su(\Hfree)$.} transformations for which  in this lemmas we have irreducible faithful representation of simple Lie algebras). The fallowing arguments are based on the structure of the automorphism groups of classical simple Lie algebras which can be found in Section 16.5 of \cite{Humphreys1972}.

\begin{itemize}

\item[(I)] Bosonic representation of $\SU(d)$ - Lemma \ref{lem:normBOS}. We first note that for the case of of $\su(2)$ the group  of Lie algebra automorphisms $\Aut(\su(2))$ consists only of inner automorphisms i.e. 
\begin{equation}
\Aut(\su(2))=\Inn(\su(2))=\SET{\Ad_{U}}{\Ad_U(X)=U X U^\dagger,\  U\in\SU(2)}\ .
\end{equation}
Therefore, we have $\tau_V = \Ad_U$ for some $U\in\SU(2)$. Consequently, we have that for all $X\in \su(2)$ the following chan of inequalities hold
\begin{equation}\label{eq:chainEQ}
V \pi_b(X) V^\dagger = \pi_b(\tau_V (X))= \pi_b ( U X U^\dagger ) = 
\Pi_b ( U) \pi_b(X) \Pi_b(U)^\dagger\ ,
\end{equation}
where in the last equality we used the standard relation between the representation of Lie group $\Pi$ and the induced representation of its Lie lagebra $\pi$. Comparing expressions on the both sides of \eqref{eq:chainEQ} we obtain
\begin{equation}
\Pi_b(U)^\dagger V  \pi_b(X) = \pi_b(X) \Pi_b(U)^\dagger V\ , \text{for all } X\in\su(2)\ .  
\end{equation}
Consequently, by the virtue of Shur's lemma \cite{FultonHarris} (a representation $\pi_b$ is irreducible) , we have $\Pi_b(U)^\dagger V = \exp(\ii \theta) \I$, for $\theta\in\R$. This completes the proof in this case. 

If the number of modes is greater then two ($d>2$), the automorphisms group consists of two disjoint parts
\begin{equation}\label{eq:AutSUD}
\Aut(\su(d))=\Inn(\su(d)) \cup \Inn(\su(d)) \cdot \lbrace \alpha \rbrace
\end{equation}
where $\Inn(\su(d))=\SET{\Ad_{U}}{\Ad_U(X)=U X U^\dagger,\  U\in\SU(d)}$ and $\alpha(X) = - X $ is a non -inner automorphism. It is now important to know that for every representation $\pi$ we have $\pi\circ \alpha \approx \bar{\pi}$, where $\bar{\pi}$ denotes the dual representation of $\su(d)$ and $"\approx"$ denotes the equivalence of representations. Now, we need to consider two cases (i) $\tau_V \in  \Inn(\su(d))$ and (ii) $\tau_V \in \Inn(\su(d)) \cdot \lbrace \alpha \rbrace$. For the case (i) we can essentially repeat the reasoning given for the case of $\su(2)$ and conclude that $V = \exp(\ii \theta) \Pi_b(U)$, for some $\theta\in\R$ and $U\in\SU(d)$. On the other hand, the situation (ii) is impossible since this would imply that $\pi_b \approx \bar{\pi}_b$, which happens only when $d=2$.

\item[(II)]  Fermionic representation of $\SU(d)$: generic case - Lemma \ref{lem:normFERgen}. For this case the proof is easy since we can adopt again the strategy given in point (I). Concretely, we must have that $\tau_V = \Ad_U$, for some $U\in\SU(d)$. If it was not the case then we would have $\pi_f \approx  \bar{\pi}_f$, which happens for the case of half filling. Knowing that $\tau_V = \Ad_U$, for some $U \in \SU(d)$, we conclude that $V=\exp(\ii \theta) \Pi_f(U')$, for $\theta\in\R$ and $U'\in\SU(d)$.

\item[(III)]  Fermionic representation of $\SU(d)$: half-filling case -  Lemma \ref{lem:normFERex}. For this situation we actually have $\pi_f \approx \bar{\pi}_f$ and therefore we cannot exclude the possibility that $\tau_V = \Ad_U \alpha$, for $U\in\SU(d)$. Let us assume first that we have another $V' \in \U(\Hfer)$ normalising $\Pi_f (\SU(d))$  and satisfying $\tau_{V'} = \Ad_{U'} \alpha$. We then have $V' = \exp(\ii \theta) \Pi_f(U'')  V$ and thus the non-trivial normalising gate $V$ is essentially unique \footnote{In order to prove this assertion we notice that from $\tau_{V'} = \Ad_{U'} \alpha$ and  $\Ad_U \alpha$ it follows that $ \Ad_{U'^\dagger} \tau_{V'}= \Ad_{U^\dagger} \tau_{V}$. We can now use the reasoning analogous to the one given in point (I) to prove that $V' = \exp(\ii \theta) \Pi_f(U U'^\dagger) V$.} (up to a global phase and the action of $\Pi_f (U'')$, for $U''\in\SU(d)$).  Therefore we have two possibilities: (i) $\tau_V = \Ad_U$ or (ii) $\tau_V = \Ad_U \cdot \alpha$. The first possibility again results in $V=\exp(\ii\theta)\Pi_f(U)$. On the other the explicit computation using the fact that we are in the half-filling scenario $2N=d$)  shows that the particle-whole duality $W= \prod_{i=1}^d (f_i +f_{i}^\dagger)$ effectively realises the non-trivial automorphism i.e. 
\begin{equation}
W \pi_f (X) W^\dagger= - \pi_f (X)\ , \text{for all } X\in\su(d)\ .
\end{equation}
This observation, together with the uniqueness of the normalising gate (up to the action of elements form $\Pi_f(\SU(d))$ and a global phase factor) finishes the proof.

\item[(IV)] Positive-parity spinor representation of $\Spin(2d)$ - Lemma \ref{lem:normFLO}. 
If the number of modes is greater then four ($d>4$), the automorphisms group of $\so(2d)$ consists of two disjoint parts
\begin{equation}\label{eq:AutSUD}
\Aut(\so(2d))=\Inn(\so(2d)) \cup \Inn(\so(2d)) \cdot \lbrace \beta \rbrace\ ,
\end{equation}
where $\Inn(\so(2d))=\SET{\Ad_U}{\Ad_U (X)=U X U^\dagger\ , U\in \Spin(2d)  }$. Thus we again have two possibilities: either (i) $\tau_V =\Ad_U$, for $U\in\Spin(2d)$ or (ii)  $\tau_V = \Ad_U \cdot \beta $ for a non-inner automorphism $\beta$ of a Lie algebra $\so(2d)$. The case (i) implies again (see reasoning given in point (I) above) that $V=\exp(\ii \theta) \Pi_{\FLO}^+(U)$, for $\U\in\Spin(2d)$ and $\theta\in\R$.  Let us now settle the possibility (ii). For $d>4$ the outher automorphism $\beta$ of $\so(2d)$ can be taken to be of the form 
\begin{equation}\label{eq:outherSO}
\beta(E_{ij}) = \begin{cases}
    E_{ij}\ ,& \text{if } j\neq 2d\\
   -  E_{ij}\ ,              & \text{if } j=2d \end{cases}\ ,
\end{equation}
where $E_{ij}$ ($i<j$)  are the basis elements of $\so(2d)$ given in \eqref{eq:BASISso2D}. For the positive parity representation $\pi_{\FLO}^+$ we have the identity
\begin{equation}
\prod_{i=1}^d \pi^+_\FLO(E_{2i-1,2i}) = \frac{\I}{2^d}\ .
\end{equation}
Conjugating the above equation by $V$ and using the definition of $\tau_V$ we obtain
\begin{equation}\label{eq:IdentityFLO}
\prod_{i=1}^d \pi(\tau_V(E_{2i-1,2i})) = \frac{\I}{2^d}\ .
\end{equation}
Assuming that $\tau_V = \Ad_U \cdot \beta$ and using \eqref{eq:outherSO} we obtain 
\begin{equation}
\prod_{i=1}^d \pi^+_\FLO(\tau_V(E_{2i-1,2i})) = \left(\prod_{i=1}^{d-1} \pi^+_\FLO(E_{2i-1,2i}) \right)(-\pi^+_\FLO(E_{2i-1,2i})) = - \frac{\I}{2^d}\ ,
\end{equation}
which contradicts \eqref{eq:IdentityFLO}. We then necessarily must have  $\tau_V = \Ad_U$ for some $U\in\Spin(2d)$.

The result for the positive-parity spinor representantion of $\Spin(8)$ for $d=4$ modes follows easily from the Lemma \ref{lem:normBILIN} as for this case we have $\Pi^+_\FLO (\Spin(8)) \approx \SO(8)$ \cite{Oszmaniec2012,Oszmaniec2014}.

\end{itemize}

\end{proof}

In the following proofs we will need  the three well-known technical statements.
\begin{lem}[Extension of a Lie subgroup \cite{hofmann2013}]
Let $K\subset\U(\H)$ be a compact Lie group. Le $V\in \U(\H)$ be a unitary  operator that does not belong to $K$ i.e. $V\notin K$. Then, the group generated by $K$ and $V$, $\langle K, V\rangle$, is the smallest compact Lie subgroup of $\U(\H)$ containing both $K$ and $V$. 
\end{lem}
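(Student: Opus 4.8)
The plan is to identify $\langle K,V\rangle$ with the topological closure in $\U(\H)$ of the abstract subgroup algebraically generated by $K\cup\{V\}$, and then to invoke Cartan's closed subgroup theorem. First I would set $G_0$ to be the collection of all finite products of elements of $K$, of $V$, and of $V^{\dagger}=V^{-1}$; this is a subgroup of $\U(\H)$, and by the definition of the bracket $\langle\cdot,\cdot\rangle$ adopted in the paper (the generated gates are precisely those that can be approximated with arbitrary precision in operator norm by products of the allowed gates) one has $\langle K,V\rangle=\overline{G_0}$, where the closure is taken in the operator-norm topology on the finite-dimensional space $\U(\H)$.

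Next I would verify that $G\defeq\overline{G_0}$ is again a group. Since $\U(\H)$ is a metrizable topological group it suffices to argue with sequences: if $a=\lim_n a_n$ and $b=\lim_n b_n$ with $a_n,b_n\in G_0$, then joint continuity of multiplication gives $a_n b_n\to ab$, hence $ab\in\overline{G_0}$, and continuity of inversion gives $a_n^{-1}\to a^{-1}$, hence $a^{-1}\in\overline{G_0}$. Thus $G$ is a subgroup of $\U(\H)$; being a closed subset of the compact group $\U(\H)$, it is compact. By Cartan's closed subgroup theorem every closed subgroup of a Lie group is an embedded Lie subgroup, so $G$ is a compact Lie subgroup of $\U(\H)$, and by construction $K\subset G_0\subset G$ and $V\in G_0\subset G$.

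Finally I would establish minimality. Let $H\subseteq\U(\H)$ be an arbitrary compact Lie subgroup with $K\subset H$ and $V\in H$. Because $H$ is a group containing the generating set $K\cup\{V\}$, it contains every finite product of these elements and their inverses, i.e.\ $G_0\subseteq H$; and since a compact subgroup of the Hausdorff group $\U(\H)$ is closed, $H$ also contains the closure $\overline{G_0}=G$. Therefore $G\subseteq H$, which shows that $G=\langle K,V\rangle$ is the smallest compact Lie subgroup of $\U(\H)$ containing both $K$ and $V$.

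I do not expect any serious obstacle here: the only non-elementary input is Cartan's closed subgroup theorem, and the hypothesis $V\notin K$ is not actually used in the argument (it merely records that the extension is non-trivial), so the same proof works verbatim with an arbitrary subset of $\U(\H)$ in place of $\{V\}$, as needed later for the multi-generator versions. The single point that deserves a moment's care is the identification of the paper's ``generated'' gate-set with $\overline{G_0}$, which is immediate once one notes that in finite dimensions the operator-norm topology turns $\U(\H)$ into a compact metrizable group, so that closure and sequential closure coincide.
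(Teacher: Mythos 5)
The paper does not actually prove this lemma: it is stated as one of three ``well-known technical statements'' and attributed to the reference on the structure of compact groups, so there is no in-paper argument to compare yours against. Your proof is correct and is the standard one: closure of the algebraically generated subgroup, Cartan's closed subgroup theorem for the Lie structure, and closedness of compact subgroups of a Hausdorff group for minimality. Every step checks out, and your observation that the hypothesis $V\notin K$ is never used is accurate.

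The one place where you wave your hands slightly is the identification $\langle K,V\rangle=\overline{G_0}$ with $V^{\dagger}$ included among the generators of $G_0$. The paper's definition of generation speaks of approximation by \emph{products of gates from the set}, with no mention of inverses, so strictly speaking $\langle K,V\rangle$ is the closure of the generated sub-\emph{semigroup}. The identification with $\overline{G_0}$ then rests on the standard fact that a closed sub-semigroup of a compact group is automatically a subgroup (equivalently, for any $g$ in a compact metrizable group, $g^{-1}$ lies in the closure of $\{g^{n}\}_{n\geq 1}$, since some subsequence $g^{n_k}$ converges and hence $g^{n_{k+1}-n_k-1}\to g^{-1}$). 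This is exactly the kind of fact the cited textbook supplies; adding one sentence to that effect would make your argument fully self-contained. Otherwise nothing is missing.
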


\begin{lem}[Structure of compact Lie groups \cite{hofmann2013}]\label{lem:structureCOMPACTlie}
Let $K\subset\U(\H)$ be a compact Lie group. Then $K$ has the following structure $K = K_0 \rtimes  D$, where $K_0$ is the maximal connected component of $K$ containing $\I$ and $D$ is a finite discrete group. Moreover, $K_0$ is a normal subgroup of $K$.
\end{lem}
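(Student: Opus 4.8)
The assertion is part of the standard structure theory of compact Lie groups, so the plan is to assemble it from three ingredients, carried out in order: (i) the identity component $K_0$ is open and closed in $K$; (ii) $K_0$ is normal and the component group $\Gamma\defeq K/K_0$ is finite; (iii) there is a finite subgroup $D\le K$ meeting every component, so that $K=K_0 D$ --- and $K=K_0\rtimes D$ when $D$ can be chosen to intersect $K_0$ trivially. Steps (i) and (ii) are routine; step (iii) is the substantive one and is where I expect the only real difficulty.

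For (i): a Lie group is locally Euclidean, hence locally connected, so every connected component of $K$ is open; in particular $K_0$ is an open subgroup. An open subgroup is automatically closed, since its complement is a union of cosets, each of which is open; thus $K_0$ is a closed subgroup of the compact group $K$ and is itself a compact Lie group. For (ii): given $g\in K$, the inner automorphism $c_g\colon x\mapsto gxg^{-1}$ is a homeomorphism of $K$ fixing $\I$, hence it maps the connected component through $\I$ onto itself, i.e.\ $gK_0g^{-1}=K_0$, so $K_0$ is normal. Since $K_0$ is open, $\Gamma=K/K_0$ is discrete, and being the continuous image of the compact space $K$ it is compact, hence finite; equivalently, $K$ has finitely many connected components, each a coset of $K_0$, with $K_0$ the maximal connected subgroup containing $\I$.

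Step (iii) --- producing a finite $D$ --- is the main obstacle, because one cannot simply lift a generating set of $\Gamma$ to $K$: an arbitrary lift of an element of $\Gamma$ may have infinite order in $K_0$ (for instance an irrational element of a central torus of $K_0$), so the subgroup generated by careless lifts need not be finite. The remedy I would use exploits the structure of $K_0$ --- an almost-direct product of a central torus and a semisimple compact group --- together with the fact (see Hofmann and Morris \cite{hofmann2013}) that inside each coset $gK_0$ the representative can be corrected by an element of $K_0$ to one of finite order, and that these corrected representatives can be chosen so as to generate a finite subgroup $D$ with $K=K_0 D$. The product $K=K_0\rtimes D$ is then to be read in the sense that $D$ is such a finite complement, the genuine semidirect splitting occurring precisely when $D$ can be taken with $D\cap K_0=\{\I\}$.

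Finally, I would record the use made of this statement downstream: once $K=K_0 D$ with $D$ finite, questions about the possibly disconnected group $K$ --- whether it equals $\U(\H)$, whether it acts transitively on pure states, and so on --- reduce to the connected group $K_0$, to which Lie-algebra methods and Dynkin's classification of maximal subalgebras \cite{Dynkin2000} apply, together with a finite check over the cosets $K_0 d$, $d\in D$.
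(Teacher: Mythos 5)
The paper gives no proof of this lemma at all --- it is stated as a black-box citation to \cite{hofmann2013} --- so there is nothing internal to compare against; what can be judged is whether your reconstruction is sound, and it is. Steps (i) and (ii) (openness and closedness of $K_0$, normality, finiteness of the component group) are complete and correct as written. You are also right that step (iii) is the only substantive point and cannot be done by naively lifting generators of $K/K_0$; the existence of a finite subgroup $D$ with $K=K_0D$ is D.~H.~Lee's supplement theorem, which is exactly the content one must import from the cited reference, so deferring it there matches the paper's own (implicit) strategy. Your caveat about the semidirect product deserves emphasis: the lemma as printed, with $K=K_0\rtimes D$, is in fact slightly too strong in general --- the normalizer of a maximal torus in $\SU(2)$ is a compact Lie group in which every element outside the identity component has order four, so the extension $1\to K_0\to K\to K/K_0\to 1$ does not split, even though a finite $D$ with $K=K_0D$ (a quaternion subgroup, meeting $K_0$ nontrivially) does exist. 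This imprecision is harmless for the paper, since the only use made of the lemma (in the proof of Theorem \ref{thm:BOSGATES2}, case (c)) is that $\GEN_0$ is a connected normal subgroup with every element of $\GEN$ lying in some coset $\GEN_0 d$; the weaker conclusion $K=K_0D$ that you actually establish is all that is needed downstream.
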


\begin{lem}[Extension by a non-normalising element]\label{lem:normEXT}
Let $K\subset\U(\H)$ be a connected compact Lie subgroup of $\U(H)$. Let $V\in\U(\H)$  be a unitary gate that does not normalise $K$ and let $\GEN =\langle K, V\rangle$ be the Lie group generated by the $K$ and $V$ (see Lemma \ref{lem:structureCOMPACTlie}).   Then, we have a strict inclusion
\begin{equation}\label{eq:strictINCL}
\Lie(K) \subset \Lie(\GEN_0)
\end{equation}
where $\GEN_0$ is a connected component of $\GEN$. In other words we have $\dim(\Lie(K)) < \dim(\GEN_0)$.
\end{lem}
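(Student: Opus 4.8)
The plan is to argue by contradiction, using only the two structural facts about compact Lie groups already recorded above and the fact that the identity component of a topological group is normal. First I would observe that, by the preceding lemma on extensions of a Lie subgroup, $\GEN=\langle K,V\rangle$ is a genuine compact Lie subgroup of $\U(\H)$, and that by the structure lemma for compact Lie groups its identity component $\GEN_0$ is a closed normal subgroup of $\GEN$. Since $K$ is connected and contains $\I$, it lies entirely in $\GEN_0$; passing to Lie algebras this gives $\Lie(K)\subseteq\Lie(\GEN_0)$, so everything reduces to ruling out equality. Note also that the hypothesis that $V$ fails to normalise $K$ forces $V\notin K$, so the statement is not vacuous.

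Next I would suppose, for contradiction, that $\Lie(K)=\Lie(\GEN_0)$. Both $K$ and $\GEN_0$ are compact connected Lie subgroups of $\U(\H)$: $K$ by hypothesis, and $\GEN_0$ because the identity component of the compact group $\GEN$ is closed, hence compact. For a compact connected Lie group the exponential map onto the group is surjective, so $K=\exp(\Lie(K))$ and $\GEN_0=\exp(\Lie(\GEN_0))$; since the two Lie algebras coincide, the two groups coincide, i.e.\ $K=\GEN_0$.

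Finally I would extract the contradiction: the element $V$ is one of the generators of $\GEN$, so $V\in\GEN$, and $\GEN_0$ is normal in $\GEN$; therefore $V K V^{\dagger}=V\GEN_0 V^{\dagger}=\GEN_0=K$, which says exactly that $V$ normalises $K$ — contrary to assumption. Hence the inclusion $\Lie(K)\subset\Lie(\GEN_0)$ is strict, and comparing dimensions yields $\dim\Lie(K)<\dim\Lie(\GEN_0)=\dim\GEN_0$, as claimed.

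I do not anticipate any real difficulty; the one step deserving a word of care is the implication that $K\subseteq\GEN_0$ together with $\Lie(K)=\Lie(\GEN_0)$ forces $K=\GEN_0$. Besides the surjectivity-of-$\exp$ argument above, one can alternatively note that equality of dimensions makes $K$ an open subgroup of the connected group $\GEN_0$, and an open subgroup of a connected group is the whole group. Either way this is standard, so the proof is essentially a two-line consequence of the normality of the identity component.
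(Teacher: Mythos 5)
Your proof is correct, but it takes a genuinely different route from the paper's. The paper argues directly: since $V\I V^\dagger=\I$ and conjugation is continuous, the conjugate copy $VKV^\dagger$ is a connected subgroup of $\GEN$ and hence lies in $\GEN_0$; because $V$ does not normalise $K$ we have $VKV^\dagger\neq K$, and since connected subgroups are determined by their Lie algebras, $\Lie(VKV^\dagger)\neq\Lie(K)$. As $\Lie(\GEN_0)$ contains both, it must strictly contain $\Lie(K)$. You instead argue by contradiction: assuming $\Lie(K)=\Lie(\GEN_0)$ you conclude $K=\GEN_0$ (via surjectivity of $\exp$ for compact connected groups, or equivalently via openness of $K$ in the connected group $\GEN_0$), and then normality of the identity component $\GEN_0$ in $\GEN$ together with $V\in\GEN$ forces $VKV^\dagger=K$, contradicting the non-normalising hypothesis. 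Both arguments are sound and of comparable length. Your version leans only on abstract structure theory (normality of $\GEN_0$ and the determination of a connected compact subgroup by its Lie algebra), which makes it slightly more self-contained; the paper's version is more constructive in that it exhibits where the extra dimensions of $\Lie(\GEN_0)$ come from, namely from $\Lie(VKV^\dagger)$ --- information that is in the spirit of the control-theoretic applications elsewhere in the paper, even though the lemma itself only asserts the strict inclusion. Your remark that the non-normalising hypothesis already forces $V\notin K$ is a nice observation the paper leaves implicit.
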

\begin{proof}
We have $V K V^\dagger \subset \GEN$ but $K \neq V K V^\dagger$. Moreover, $V K V^\dagger$ is a subset of $\GEN_0$, the connected component of $\GEN$.  Since the mapping $U\rightarrow V U V^\dagger$ is a diffeomorphism in $\U(\H)$, $\Lie(K)$ is mapped bijectively to $\Lie(V KV^\dagger)$. However, we also have $\Lie(K) \neq \Lie(V K V^\dagger)$, since, because $K$ is connected,  these Lie algebras uniquely specify the groups $K$ and $V K V^\dagger$. Consequently,  
\begin{equation}\label{eq:extensionOFang}
\Lie(K) \subset \Lie(K) \cup \Lie(V K V^\dagger) \subset \Lie(\GEN_0)\ ,
\end{equation}
and therefore Lie algebra of $\GEN_0$ must be strictly bigger then $\Lie(K)$.
\end{proof}

\begin{customthm}{1}[Extensions of Passive Bosonic  Linear Optics with an additional gate]\label{thm:BOSGATES2}
Let $V\notin \LOB$ be a gate acting on the Hilbert space $\Hbos$ of $N>1$ bosons in $d$ modes; and let $\langle \LOB, V\rangle$ be the group of transformations generated by passive bosonic linear optics and $V$ in $\Hbos$. We have the following possibilities:
\begin{itemize}
\item [(a)] If $d>2$, then $\langle \LOB,V\rangle =\U\left(\Hbos \right)$.
\item [(b)] If $d=2$, $N\neq 6$ and $[V\ot V, \L_b]=0$, then  $
\langle \LOB,V\rangle=G_b= \lbrace \left. U\in \U\left(\Hbos \right) \right|    \left[U\ot U, \L_b\right]=0\ \rbrace$ .

\item [(c)] If $d=2$ and $[V\ot V, \L_b ]\neq 0$, then  $\langle \LOB,V\rangle=\U\left(\Hbos\right)$.
\end{itemize}
\end{customthm}

\begin{proof}
Similarly to the case of the Hamiltonian extension we note that $\LOB=\langle \Pi_b(\SU(d)),\T(\Hbos)\rangle$ and we can assume without the loss of generality that $V\in\SU(\Hbos)$.  We then have 
\begin{equation}\label{eq:convenientGEN}
\langle \LOB, V \rangle = \langle \GEN , \T(\Hbos) \rangle\ ,\ \text{where }\ \GEN =\langle \Pi_b(\SU(d)) , V \rangle\ .
\end{equation}
For this reason in what follows we will focus  on the properties of $\GEN$.  From Lemma \ref{lem:normBOS} we see that $V$ cannot normalise $\Pi_b(\SU(d))$. By by the virtue of Lemma \ref{lem:normEXT} we have a strict inclusion $\pi_b (\su(d)) \subset \Lie(\GEN_0)$. We now give the specific proofs of different cases given in the theorem.
\begin{itemize}
\item[(a)] From the proof of Theorem \ref{thm:BOSHAM2} we have that for $d>2$ a Lie algebra $\pi_b (\su(d))$ is a maximal Lie sub algebra of $\su(\H)$. Therefore $\Lie(\GEN_0)=\su(\Hbos)$ and consequently $\GEN_0 = \GEN = \SU(\Hbos)$. Form this an \eqref{eq:convenientGEN} the  point (a) follows.
\item[(b)] From the proof of point (b) of Theorem \ref{thm:BOSHAM2} we know that for $d=2$ and $N\neq 6$ the Lie algebra $\pi_b (\su(2))$ is a maximal Lie subalgebra in $\so(\Hbos)$ or $\usp(\H)$ (depending whether $N$ is odd or even). The condition $[V\ot V, \L_b$ is then equivalent to $\GEN \subset \SO(\Hbos)$ (for even $N$) or $\GEN \subset \USP(\Hbos)$ (for odd $N$). Because of the maximality of the inclusions of $\pi_b (\su(2))$ in the Lie algebras of these groups we conclude $\GEN =\SO(\Hbos)$ (for even $N$) and $\GEN = \USP(\Hbos)$ (for odd $N$). Using \eqref{eq:convenientGEN} we conclude the proof of (b).
\item [(c)] From Lemma \ref{lem:structureCOMPACTlie} it follows that $\GEN=\GEN_0 \rtimes D$, where $\GEN_0$ is simply-connected Le group and $D$ is a discrete group that normalizes $\GEN_0$.   In the proof of Theorem \ref{thm:BOSHAM2} we saw that the only Lie algebras in $\su(\Hbos)$ containing $\pi_b(\su(2))$ are (i) $\so(\Hbos)$ (for even $N$), (ii) $\usp(\Hbos)$ (for odd $N$)), (iii) $\g_2$ (contained in $\so(\Hbos)$, appearing only for $N=6$, and (iv) $\su(\Hbos)$ itself. Condition $[V\ot V,\L_b]   \neq 0$ implies that $V$ is not contained in any of the connected subgroups of $\SU(\Hbos)$: $\SO(\Hbos)$, $\USP(\Hbos)$, or $\mathrm{G}_2$. Moreover, by the virtue of Lemma \ref{lem:normBILIN} and Remark \ref{rem:renNORM} $V$ cannot  normalise any of these groups unless $V=\exp(\ii \theta) \I$ or $V=\exp(\ii \theta) W$, with $W\in\SO(\Hbos)$ (for $N=6)$). However, these two possibilities contradict the assumption $[V\ot V,\L_b]\neq 0$.  We therefore conclude that $\GEN_0=\SU(\H)$, which together with \eqref{eq:convenientGEN} finishes the proof.
\end{itemize}

\end{proof}

\begin{customthm}{3}[Extensions of Passive  Fermionic Linear Optics with an additional gate]\label{thm:FERMGATES2}
Let $V\notin \LOF$ be a gate acting on Hilbert space of $N$ fermions in $d$ modes $\Hfer$, where $N\notin\lbrace 0,1,d-1,d\rbrace$. Let $\langle \LOF, V\rangle$ be the group of transformations generated by passive fermionic linear optics and $V$ in $\Hfer$.  We -have the following possibilities:
\begin{itemize}
\item[(a)] If $d\neq 2N$, then $\langle \LOF, V\rangle=\U\left(\Hfer\right)$.
\item[(b)] If $d=2N$ and  $V= W k$, for $k\in\LOF$ and  $W=\prod_{i=1}^d (f_i + f^\dagger_i)$, then $\langle \LOF, V\rangle = \LOF \cup \LOF \cdot W$.
\item[(c)] If $d=2N$, $V\neq gW $, for $g\in\LOF$, and $[V\ot V, \L_f]=0$, then $ \langle \LOF, V\rangle = G_f=\lbrace \left. U\in \U\left(\Hfer \right) \right|    \ [V\ot V, \L_f]=0\ \rbrace$.
\item[(d)] If $d=2N$ and $[V\ot V, \L_f]\neq 0$, then  $\langle \LOF, V\rangle=\U\left(\Hfer\right)$.
\end{itemize}
\end{customthm}
\begin{proof}
Except for the case (b) the strategy of the proof for $\LOF$ is analogous to the proof of the bosonic case. Just like in that scenario we have $\LOF=\langle \Pi_f(\SU(d)),\T(\Hfer)\rangle$ and we can assume without the loss of generality that $V\in\SU(\Hfer)$.  We thus obtain, 
\begin{equation}\label{eq:convenientGENfer}
\langle \LOF, V \rangle = \langle \GEN , \T(\Hfer) \rangle\ ,\ \text{where }\ \GEN =\langle \Pi_f(\SU(d)) , V \rangle\ .
\end{equation}
By the virtue of Lemma \ref{lem:normFERgen} for the cases (a), (c) and (d) the normaliser of $\Pi_f(\SU(d))$ is either trivial of $V$ does not normalise $\Pi_f(\SU(d))$. Furthermore, one can prove points (a), (c) and (d) by essentially mimicking the resining given in the proofs points (a), (b) and (c)  of Theorem \ref{thm:BOSGATES2}.

In order to prove the case (b) of the current theorem we observe that the gate $W$ normalises $\Pi_f(\SU(d))$ (and therefore also $\LOF$). Moreover, a direct computation shows that $W^2 \propto \I$ and therefore $W^2\in \T(\Hfer)$. Thus any gate form $\langle \LOF, W \rangle$ can be presented as element form $\LOF \cup \LOF \cdot W$.

\end{proof}

\begin{customthm}{4}[Extensions of active Fermionic Linear Optics by an additional gate] \label{thm:FLOGATES2}
Let $\langle \FLO, V\rangle$ be the group of transformations generated by active linear optics and $V$ acting in positive-parity Fock subspace $\Hfree$ with $d>3$ modes. We have the following possibilities:
\begin{itemize}
\item[(a)] If $d\neq 2k$, then $\langle \FLO, V\rangle=\U\left(\Hfree\right)$ 
\item[(b)] If $d=2k$, and $[V\ot V, \L_{\FLO}]=0$, then  $
\langle \FLO, V\rangle = G_{\FLO}=\lbrace \left. U\in \U\left(\Hfree \right) \right|    [U\ot U, \L_{\FLO}]=0\ \rbrace$.
\item[(c)] If $d=2k$ and $[V\ot V, \L_{\FLO}]\neq 0$, then  $\langle \FLO, V\rangle=\U\left(\Hfree\right)$.
\end{itemize}
\end{customthm}

\begin{proof}
The proof follows essentially the same steps as the one of Theorem \ref{thm:BOSGATES2}. Similarly to  the bosonic case we have $\FLO=\langle \Pi^+_\FLO (\Spin(2d)),\T(\Hfree)\rangle$ and we can assume without the loss of generality that $V\in\SU(\Hfree)$.  Then, we have 
\begin{equation}\label{eq:convenientGENflo}
\langle \FLO, V \rangle = \langle \GEN , \T(\Hfree) \rangle\ ,\ \text{where }\ \GEN =\langle \Pi^+_\FLO (\Spin(2d)) , V \rangle\ .
\end{equation}
Because of Lemma \ref{lem:normFLO} we see that the normaliser of $\Pi^+_\FLO (\Spin(2d))$ consists of elements form $\T(\Hfree)$. The proofs for cases (a), (b), (c) are then exactly analogous to their bosonic counterparts.
\end{proof}

\end{document}